\documentclass{amsart}
\pdfoutput=1

\input{preamble}

\def\B{\mathcal{B}}

\def\H{\mathcal{H}}

\def\rr{\mathbb{R}}

\def\romanE{\mathrm{e}}

\newcommand {\set} [1] {\ensuremath{ \left\lbrace #1 \right\rbrace }}

\newcommand{\normthree}[1]{{\left\vert\kern-0.25ex\left\vert\kern-0.25ex\left\vert #1 \right\vert\kern-0.25ex\right\vert\kern-0.25ex\right\vert}}
\newcommand {\br} [1] {\ensuremath{ \left( #1 \right) }}
\newcommand {\Br} [1] {\ensuremath{ \left[ #1 \right] }}

\newcommand {\norm} [1] {\ensuremath{ \left\| #1 \right\| }}
\newcommand {\normsub} [2] {\ensuremath{ \norm{#1}_{#2} }}

\newcommand {\abs} [1] {\ensuremath{ \left| #1 \right| }}

\newcommand {\bra} [1] {\ensuremath{ \left\langle #1 \right| }}
\newcommand {\ket} [1] {\ensuremath{ \left| #1 \right\rangle }}
\newcommand {\ketbratwo} [2] {\ensuremath{ \left| #1 \middle\rangle \middle\langle #2 \right| }}
\newcommand {\ketbra} [1] {\ketbratwo{#1}{#1}}

\newcommand {\Tr} {\ensuremath{ \mathrm{Tr} }}

\newcommand {\id} {\ensuremath{\mathds{1}}}

\newcommand{\mat}[1]{\mathbf{#1}}
\newcommand{\bsigma}{\boldsymbol{\sigma}}
\newcommand{\bomega}{\boldsymbol{\omega}}

\newcommand{\iu}{\mathrm{i}}

\newcommand{\ffmat}[4]{\begin{pmatrix}#1 & #2\\#3 & #4\end{pmatrix}}

\newcommand{\Matrix} {\ensuremath{\mathcal{M}}}
\newcommand{\PSD} {\ensuremath{\operatorname{PSD}}}
\newcommand{\Hermitian} {\ensuremath{\operatorname{H}}}
\newcommand{\Density} {\ensuremath{\operatorname{D}}}

\newcommand{\Holder} {H\"older}

\newcommand{\DepBCh} {\ensuremath{\Delta_{\bsigma,\rho}}}
\newcommand{\DepBChn} {\ensuremath{\DepBCh^{\otimes n}}}

\newcommand{\gt}[2]{\Gamma^{#1}_{\bsigma}(#2)}

\newcommand{\gtt}[2]{\bsigma^{\frac{1}{2 #1}}#2\bsigma^{\frac{1}{2 #1}}}

\newcommand{\spro}[2]{\langle #1,#2\rangle_{\bsigma}}
\newcommand{\snpro}[2]{\langle #1,#2\rangle_{\bsigma^{\otimes n}}}

\newcommand{\pn}[1]{\norm{#1}_{\bsigma, p}}
\newcommand{\qn}[1]{\norm{#1}_{\bsigma, q}}

\newcommand{\tn}[1]{\norm{#1}_{\bsigma, 2}}
\newcommand{\tnn}[1]{\norm{#1}_{\bsigma^{\otimes n}, 2}}
\newcommand{\tnnn}[1]{\norm{#1}_{\bsigma^{\otimes (n-1)}, 2}}
\newcommand{\pnn}[1]{\norm{#1}_{\bsigma^{\otimes n}, p}}
\newcommand{\pnnn}[1]{\norm{#1}_{\bsigma^{\otimes (n-1)}, p}}
\newcommand{\qnn}[1]{\norm{#1}_{\bsigma^{\otimes n}, q}}
\newcommand{\qnnn}[1]{\norm{#1}_{\bsigma^{\otimes (n-1)}, q}}

\newcommand{\pnorm}[1]{\norm{#1}_{p}}
\newcommand{\qnorm}[1]{\norm{#1}_{q}}

\newcommand{\fmat}[4]{\begin{bmatrix}#1 &\dots & #2\\
\vdots & \ddots & \vdots\\
#3 & \dots & #4\end{bmatrix}}

\newcommand{\iqp}[3]{I_{#1,#2}(#3)}

\newcommand{\ent}[2]{\mathrm{Ent}_{#1,\bsigma}(#2)}

\newcommand{\diff}[1]{\frac{\partial}{\partial #1}}

\newcommand{\diri}[2]{\mathcal{E}_{#1,\mathcal{L}}(#2)}
\newcommand{\LL}{\mathcal{L}}
\newcommand{\kk}{\mathcal{K}}
\DeclareMathOperator{\Var}{Var}

\newcommand{\nll}{\LL^{(n)}}

\newcommand{\svar}[1]{\Var_{\sigma}(#1)}

\DeclareMathOperator{\diag}{diag}

\newcommand{\PODS}{PODS}
\newcommand{\PODSname}{positive off-diagonal scaling}

\usepackage[english]{babel}
\usepackage[letterpaper,top=2cm,bottom=2cm,left=3cm,right=3cm,marginparwidth=1.75cm]{geometry}
\usepackage{graphicx}

\title{Dimension-Free Approximate Tensorization of Quantum Hypercontractivity
for Qudit Depolarizing Semigroups}

\author{Yangjing Dong$^{1}$}
\author{Li Gao$^{2,3}$}
\author{Fengning Ou$^{1}$}
\author{Penghui Yao$^{1,4}$}
\author{Haigang Zhou$^{1}$}

\thanks{$^{1}$State Key Laboratory for Novel Software Technology, Nanjing University, China.\\
$^{2}$School of Mathematics and Statistics, Wuhan University, China.\\
$^{3}$Wuhan Institute of Quantum Technology, China.\\
$^{4}$Hefei National Laboratory, Hefei 230088, China.\\
Email addresses: Yangjing Dong, \url{dongmassimo@gmail.com}; Li Gao, \url{gao.li@whu.edu.cn}; Fengning Ou, \url{reverymoon@gmail.com}; Penghui Yao, \url{phyao1985@gmail.com}; Haigang Zhou, \url{hgzhou2003@outlook.com}.}

\begin{document}

\begin{abstract}
We prove approximate tensorization for hypercontractivity and logarithmic-Sobolev constants for a class of primitive reversible quantum Markov semigroups satisfying the \PODSname\ (\PODS) condition.
This class includes qubit examples and generalized depolarizing semigroups with respect to full-rank states in arbitrary finite dimensions.
For any such semigroup $(\Phi_t)_{t\ge 0}$ and every tensor power $n$, we show that the log-Sobolev constant of the product semigroup $\Phi_t^{\otimes n}$ is at least $2/(3\ln 2)\approx 0.96$ times the log-Sobolev constant of the single-site semigroup $\Phi_t$, independently of $n$ and the local dimension $d$.
The proof first establishes an exact tensorization of the $(q,2)$-hypercontractive inequality for integer $q$, in particular $q=3$, and then extends the estimate to all real $q>2$ by complex interpolation; the standard implication from hypercontractivity to logarithmic-Sobolev inequalities yields the stated almost tensorization result.
In the qubit case, we further prove exact tensorization for primitive reversible
\PODS\ semigroups and obtain sharp $(q,2)$-hypercontractivity estimates for
generalized qubit depolarizing channels.

\end{abstract}

\maketitle

\tableofcontents

\section{Introduction}
Hypercontractivity, a notion originating from quantum field theory~\cite{nelson1966quartic,nelson1973free},
has found wide applications in mathematics such as analysis, geometry, and
probability (see e.g. the book \cite{bakry2013analysis}), computer science~\cite{21923,10.1145/174130.174138,doi:10.1137/S0097539705447372,bfd94d6a-f3c8-31f1-96b1-c865a76e1c3b},
information theory~\cite{10.1109/FOCS.2008.45,cj10-01,10.1063/1.4769269,TemmeRapidMixing2013,10.1145/3688824},
and many other areas~\cite{10.1145/1250790.1250866,bardetHypercontractivityLogarithmicSobolev2022,doi:10.1137/20M134592X,qinDecidabilityFullyQuantum2023}.
The standard hypercontractivity for Boolean functions states the following:
For
any Boolean function $f: \set{-1,1}^{n}\to\mathbb{R}$ and $1\le p\le q$
satisfying $\rho\leq\sqrt{\frac{p-1}{q-1}}$, it holds that
\begin{equation*}
    \qnorm{T_\rho^{\otimes n} f}\le \pnorm{f}.
\end{equation*}
Here, $\pnorm{\cdot}$ denotes the $L_{p}$ norm and $T_{\rho}^{\otimes n}$ is the
$n$-fold noise operator.\footnote{See \cite{o2014analysis} for a thorough
treatment.} It is well known that hypercontractivity inequalities are equivalent
to \textit{logarithmic-Sobolev inequalities} (LSI), the differential
form relating the entropy of a function to its Dirichlet forms. A logarithmic-Sobolev
inequality with parameter $\alpha$ leads to hypercontractivity with parameter
$\rho = (\frac{p-1}{q-1})^{\frac{1}{4\alpha}}$ and vice versa~\cite{gross1975logarithmic,olkiewiczHypercontractivityNoncommutativeLpSpaces1999,TemmeRapidMixing2013,kingHypercontractivitySemigroupsUnital2012,beigiQuantumReverseHypercontractivity2020,Rouze2024}.

Recently, quantum hypercontractivity has been studied extensively for its applications in quantum information and remains
an active area of research~\cite{10.1215/S0012-7094-75-04237-4,olkiewiczHypercontractivityNoncommutativeLpSpaces1999,doi:10.1142/S0219025700000030,kingHypercontractivitySemigroupsUnital2012,Temme2014Hypercontractivity,10.1063/1.4933219,beigiQuantumReverseHypercontractivity2020,beigiImprovedQuantumHypercontractivity2021,bardetHypercontractivityLogarithmicSobolev2022,baoHypercontractivityQuantumErasure2025a}.
In the classical framework, the Markov semigroup is equipped with a unique
stationary measure, and the $L_p$ norms are taken with respect to this
measure. In the quantum setting, the analogue is a quantum Markov semigroup in
the Heisenberg picture equipped with a full-rank invariant state $\bsigma$. This
state determines the natural noncommutative $L_p$ norm: the KMS (Kubo–Martin–Schwinger) inner
product is defined by (see~\cite{beigiQuantumReverseHypercontractivity2020,olkiewiczHypercontractivityNoncommutativeLpSpaces1999})
\[
    \spro{\mat{X}}{\mat{Y}}
    =
    \Tr\Br{\mat{X}^\dagger\bsigma^{1/2}\mat{Y}\bsigma^{1/2}}.
\]
The associated Kosaki weighted $L_p$-norm\cite{kosakiApplicationsComplexInterpolation1984}
is:
\[
    \norm{\mat{X}}_{\bsigma,p}
    =
    \Br{\Tr\Br{\abs{\bsigma^{1/(2p)}\mat{X}\bsigma^{1/(2p)}}^p}}^{1/p},
    \qquad 1\le p<\infty.
\]
The quantum hypercontractivity is formulated in the $\bsigma$-weighted norms rather than in the tracial Schatten norms \cite{olkiewiczHypercontractivityNoncommutativeLpSpaces1999}.

The most basic example is the quantum analogue of the classical simple
generator. For a full-rank density operator $\bsigma$, define
\[
    \LL_{\bsigma}(\mat{X})=\mat{X}-\Tr\Br{\bsigma\mat{X}}\id.
\]
The generated semigroup is
\[
    \Delta_{\bsigma,\rho}(\mat{X})
    =
    \romanE^{-t\LL_{\bsigma}}(\mat{X})
    =
    \rho\mat{X}+(1-\rho)\Tr\Br{\bsigma\mat{X}}\id,
    \qquad \rho=\romanE^{-t}.
\]
This is the generalized depolarizing semigroup in the Heisenberg-picture.
When $\bsigma=\id/d$ is the completely mixed state, it reduces
to the standard unbiased quantum depolarizing channel
\[
    \Delta^d_\rho(\mat{X})
    =
    \rho\mat{X}+(1-\rho)\Tr\Br{\mat{X}}\frac{\id}{d}.
\]

The exact logarithmic-Sobolev constants for the one-site depolarizing generator
are already completely understood in every dimension $d$~\cite[Theorem
25]{beigiQuantumReverseHypercontractivity2020},
and coincide with the classical values.
In the classical
case, hypercontractivity for a single-site generator easily extends to tensor
product semigroups, a property called \textit{tensorization}. Nevertheless, in the quantum case
tensorization is generically challenging. Before the present work, for $d=2$,
exact tensorization was known for self-adjoint unital semigroups and the generalized qubit
depolarizing channel~\cite{TemmeRapidMixing2013,kingHypercontractivitySemigroupsUnital2012,beigiQuantumReverseHypercontractivity2020,beigiImprovedQuantumHypercontractivity2021}.

For general $d$-dimensional qudit channels beyond the qubit case, far less is known.
Temme, Pastawski, and Kastoryano~\cite{Temme2014Hypercontractivity}
proved a quasi-tensorization property of the logarithmic-Sobolev constant for general primitive reversible semigroups, with constants depending on the spectral gap and the minimal eigenvalues of the invariant states. Qin
and Yao~\cite{doi:10.1137/20M134592X} proved an exact $(4, 2)$-hypercontractivity
for the tensor product of qudit depolarizing channels.

In this work, we introduce a structural property of quantum channels and quantum
Markov semigroups, which we call {\em \PODSname}\ (\PODS); see
\cref{def:pods}.
Our results concern primitive reversible QMS whose semigroup maps satisfy
\PODS.
The class of reversible \PODS\ QMS is closely related to the more familiar
class of strongly $\bsigma$-reversible QMS. Every reversible \PODS\ QMS is
strongly $\bsigma$-reversible, and the converse holds whenever $\bsigma$ has a
multiplicative Sidon spectrum. Thus the two classes coincide under this
spectral assumption; see
\cref{lem:structure-lemma-modular,lem:pods-upgrades-reversibility}.
The generalized qudit depolarizing semigroup associated with any full-rank
invariant state is a basic example.
We prove that for all $n\ge 1$, the logarithmic-Sobolev constant of the
$n$-fold tensor product $(\Phi_t^{\otimes n})_{t\ge 0}$ of a primitive reversible \PODS\ semigroup $(\Phi_t)_{t\ge 0}$ is at least
$\beta = 2/(3\ln 2)\approx 0.96$ times the single-site logarithmic-Sobolev
constant.
In contrast to previous work \cite{Temme2014Hypercontractivity}, our bound is
independent of the local dimension $d$ and the spectral
gap. The exact tensorization would have $\beta=1$, so our tensorization result is close to
optimal up to a small loss.

To the best of our knowledge, our bound gives the first
dimension-independent approximate tensorization estimate with an absolute factor.
By \cref{thm:qubit-pods-exact-tensorization}, every primitive reversible
qubit \PODS\ QMS has exact tensorization of its logarithmic-Sobolev constant.
For every full-rank qubit state $\bsigma\neq\id/2$, the spectrum is
multiplicative Sidon, so these two classes coincide. Consequently, every
primitive strongly
$\bsigma$-reversible qubit QMS satisfies
\[
  \alpha_2(\LL^{(n)})=\alpha_2(\LL).
\]
This complements the existing tensorization landscape. King
\cite{kingHypercontractivitySemigroupsUnital2012} proves exact tensorization for
self-adjoint unital qubit QMS, corresponding to the tracial invariant state \(\bsigma=\id/2\), and
Beigi--Datta--Rouzé \cite{beigiQuantumReverseHypercontractivity2020} proves exact tensorization for the generalized qubit
depolarizing semigroup. In arbitrary local dimension, the present paper gives
dimension-independent approximate tensorization for primitive reversible
\PODS\ semigroups.

Below we list a few examples as illustrations of the \PODS\ property, although not all of them satisfy hypercontractivity:
\begin{itemize}
    \item[i)] generalized depolarizing channels:
    $\Delta_{\bsigma,\rho}(\mat{X})=\rho\mat{X}+(1-\rho)\Tr(\bsigma\mat{X})\id$;
    \item[ii)] quantum Hadamard channels~\cite{bradler2010trade,watrous2018theory}: These are channels of the form $\Phi_{\mat{B}}(\mat{X})=\mat{B}\odot\mat{X}$ with $\mat{B}\succeq 0$ and $\mat{B}_{ii}=1$, which fix the diagonal entries and scale each off-diagonal matrix unit $|i\rangle\!\langle j|$ by $\mat{B}_{ij}\geq 0$;
    \item[iii)] noncommutative birth-death semigroups: Given a finite graph on the energy basis, the edge Lindbladians built from matrix units $\mat{E}_{rs}=|r\rangle\!\langle s|$ leave the diagonal algebra invariant and scale every off-diagonal matrix unit. In the path-graph case with Gibbs weights $\mu_k\propto \romanE^{-\beta k}$, this gives a finite-dimensional analogue of the bosonic Ornstein--Uhlenbeck semigroup~\cite[Section~5.3]{gao2024completepositivityorder};
    \item[iv)] as a degenerate illustration, the qubit amplitude damping channel: its diagonal subspace is invariant and the two off-diagonal matrix units are scaled by $\sqrt{1-p}$~\cite{nielsen2010quantum}. Its invariant state is pure, hence not full-rank, so it is not covered by our main full-rank reversible setting.

\end{itemize}

\subsection{Main Results}
\label{subsection:our_results}

Recall that the generalized qudit depolarizing channel is
\begin{equation*}
    \DepBCh(\mat{A}) = \rho\mat{A} + (1-\rho)\Tr\Br{\bsigma\mat{A}}\cdot\id.
\end{equation*}
The $2$-logarithmic-Sobolev constant for this semigroup
$$\alpha_{2}(\LL
_{\bsigma})= \frac{1 - 2\lambda(\bsigma)}{\ln(\lambda(\bsigma)^{-1}- 1)}$$
was computed in~\cite{beigiQuantumReverseHypercontractivity2020},  where $\LL
_{\bsigma}$ is the simple Lindblad generator for $\DepBCh$. With this
input, our first main result gives the following tensorized hypercontractivity
estimate for qudit depolarizing QMS.
\begin{theorem}\label{thm:main:depolarizing}Let $\DepBCh$ be the generalized depolarizing channel
with a full-rank
    $d\times d$ density operator $\bsigma$. Then, for
   any $1\leq p\leq q$ and every operator $\mat{A}$
    acting on $n$ qudits, we have
    \[
        \qnn{\Delta_{\bsigma,\romanE^{-t}}^{\otimes n}(\mat{A})}\leq \pnn{\mat{A}},
    \]
 where
    $$\  t= \frac{3\ln 2}{2}\cdot \frac{1}{4\alpha_{2}(\LL_{\bsigma})}
        \ln{\frac{q-1}{p-1}}.$$

\end{theorem}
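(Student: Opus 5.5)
The plan follows the route sketched in the abstract: exact tensorization of $(q,2)$-hypercontractivity at integer $q$, then interpolation in $q$, then the hypercontractivity/log-Sobolev equivalence.

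\textbf{Step 1: reduction to a log-Sobolev bound.} By the standard equivalence between hypercontractivity and the $2$-log-Sobolev inequality for primitive reversible semigroups (Olkiewicz--Zegarlinski; Beigi--Datta--Rouz\'e), a bound $\alpha_2(\LL_{\bsigma}^{(n)})\ge \beta\,\alpha_2(\LL_{\bsigma})$ with $\beta = 2/(3\ln 2)$ gives
\[
\qnn{\Delta^{\otimes n}_{\bsigma,\romanE^{-t}}(\mat{A})}\le \pnn{\mat{A}}
\quad\text{whenever}\quad
t\ge \frac{1}{4\beta\,\alpha_2(\LL_{\bsigma})}\ln\frac{q-1}{p-1}.
\]
This holds for all $1\le p\le q$; for $p<2$ one uses the regularity of $\alpha_p$ in the reversible setting, which gives $\alpha_p\ge\alpha_2$. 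Since $1/\beta = 3\ln 2/2$, this is exactly the stated $t$. So it suffices to prove the log-Sobolev estimate for the $n$-fold generator.

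\textbf{Step 2: exact $(3,2)$ tensorization.} Since the depolarizing semigroup satisfies \PODS, I would show that the $(3,2)$-hypercontractive inequality tensorizes exactly. Let $\rho_3$ be the optimal single-site noise level with $\norm{\DepBCh(\mat{X})}_{\bsigma,3}\le\tn{\mat{X}}$. I want the same inequality for $\DepBCh^{\otimes n}$ at the same $\rho_3$, proved by induction on $n$. Write $\mat{A}=\sum_{ij}\ketbratwo{i}{j}\otimes \mat{A}_{ij}$ in the eigenbasis of $\bsigma$ on the first site. Because $3$ is an integer, $\norm{\cdot}_{\bsigma,3}^3$ is a trace of a cubic expression in $\bsigma^{1/6}\mat{A}\bsigma^{1/6}$. \PODS\ says the channel preserves the diagonal block and scales each off-diagonal matrix unit by a nonnegative factor. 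Using this, I expand the cube into terms $\Tr(\mat{A}_{ij}\mat{A}_{jk}\mat{A}_{ki})$ with nonnegative coefficients.

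Each such term is bounded by H\"older's inequality together with the induction hypothesis on $n-1$ sites. This reduces matters to a scalar single-site inequality on the block norms $\big(\tnnn{\mat{A}_{ij}}\big)_{ij}$. That inequality is the single-site $(3,2)$ bound applied to a matrix with nonnegative entries, which is where positivity of the off-diagonal scalings enters.

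\textbf{Step 3: extension to real $q$ and conversion.} Exact tensorization at $q=3$ means $\DepBCh^{\otimes n}$ at $\rho=\rho_3$ is a contraction from $L_2$ to $L_3$, with norms weighted by $\bsigma^{\otimes n}$. At $\rho=1$ it is trivially $L_2\to L_2$. Kosaki's complex interpolation of the weighted $L_p$ spaces, applied to the analytic family $z\mapsto \DepBCh^{\otimes n}$ with $\rho=\rho_3^{z}$, then gives $(q,2)$ bounds for all $2<q<3$. For $q>3$ I would compose with the semigroup property. The resulting time is within a factor $3\ln2/2$ of optimal. This factor should come from comparing $\ln 2$, i.e.\ $\ln(q-1)$ at $q=3$, with the derivative at $q=2$ and with the single-site value $\alpha_2(\LL_{\bsigma})$. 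Differentiating at $q\downarrow 2$ (Gross's argument) yields $\alpha_2(\LL^{(n)}_{\bsigma})\ge\beta\,\alpha_2(\LL_{\bsigma})$, which closes Step 1.

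\textbf{Main obstacle.} I expect Step 2 to be the hardest. The cubic trace terms are noncommutative and involve $\bsigma^{1/6}$ weights, so the sign-sensitive expansion must be arranged so that H\"older and the induction apply termwise. I would first test the argument for commuting diagonal blocks and then handle the off-diagonal cycles $i\to j\to k\to i$.
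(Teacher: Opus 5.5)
Your architecture matches the paper's: exact $(3,2)$ tensorization by induction, using the block expansion of $\Tr[C^3]$, H\"older and the nonnegative \PODS\ coefficients; then complex interpolation along $z\mapsto\romanE^{-zt_3\LL_{\bsigma}^{(n)}}$ (your $\rho=\rho_3^{z}$); then differentiation at $q=2$; then LSI $\Rightarrow$ HC. Your ordering in Step 2 is a fine substitute for the paper's norm compression with diagonal padding via \cref{lem:diag-monotonicity-psd}. That ordering is: termwise H\"older, entrywise monotonicity of $\Tr[D^3]$ for nonnegative $D$, and the induction hypothesis before the single-site bound.

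\textbf{The gap is in Step 3.} Analytic-family (Stein-type) interpolation needs the endpoint bounds on the whole lines $\Re z=0$ and $\Re z=1$, not only at $z=0$ and $z=1$. On $\Re z=0$ you have $\rho=\rho_3^{\iu s}=\romanE^{-\iu s t_3}$, a unimodular complex number. The corresponding map is not a channel; it does not even preserve Hermiticity. So ``at $\rho=1$ it is trivially $L_2\to L_2$'' says nothing about it. The missing ingredient is that $\romanE^{-\iu s t_3\LL_{\bsigma}^{(n)}}$ is an isometry of $L_2(\bsigma^{\otimes n})$, because $\LL_{\bsigma}^{(n)}$ is self-adjoint for the KMS inner product (\cref{lem:complex-depolarize-2norm-invariant_noncom}). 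On $\Re z=1$ you then factor $\romanE^{-(1+\iu s)t_3\LL_{\bsigma}^{(n)}}=\romanE^{-t_3\LL_{\bsigma}^{(n)}}\circ\romanE^{-\iu s t_3\LL_{\bsigma}^{(n)}}$ and combine the $(3,2)$ bound with that isometry. This is exactly where reversibility enters the interpolation, and it is what the paper's proof of \cref{thm:complex-interpolation} uses.

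\textbf{Three smaller points.}

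First, $\norm{X}_{\bsigma,3}^3=\Tr[(\bsigma^{1/6}X\bsigma^{1/6})^3]$ only holds for $X\succeq0$. So Step 2 needs the reduction to positive semidefinite inputs (\cref{lem:hypercontractivity-only-for-PSD}). The same lemma is needed to apply the single-site bound to the matrix of block norms. LSI $\Rightarrow$ HC gives that bound only for PSD inputs, and the block-norm matrix need not be PSD when $d\ge3$.

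Second, $\inf_{r\ge1}\alpha_r=\alpha_2$ is the Stroock--Varopoulos result \cref{lem:strook-varopoulos_non_com}. It requires strong reversibility (true for $\LL_{\bsigma}^{(n)}$), not mere reversibility. It is also needed for all $r\ge1$, not only $r<2$.

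Third, to land exactly on $\beta=2/(3\ln 2)$, take $t_3=\ln2/(4\alpha_2(\LL_{\bsigma}))$ from the one-site LSI $\Rightarrow$ HC theorem; using an optimal $\rho_3$ can only help. The interpolated time $t(q)=3t_3(q-2)/q$ gives $\mathrm{d}t/\mathrm{d}q=3t_3/2$ at $q=2$. Hence $\alpha_2(\LL_{\bsigma}^{(n)})\ge 1/(6t_3)=\frac{2}{3\ln2}\alpha_2(\LL_{\bsigma})$.
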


\cref{thm:main:depolarizing} is a consequence of the following approximate tensorization of the LSI constant $\alpha_2$ for primitive reversible \PODS\ semigroups.
We refer to Section \ref{sec:LSI} for the detailed definition of $\alpha_2$.
\begin{theorem}\label{thm:reversible}
    Let $\LL$ be a primitive $\bsigma$-reversible Lindblad generator, and denote by
    $(\Psi_t)_{t\ge0}$ the generated quantum Markov semigroup. If $(\Psi_t)_{t\ge0}$ satisfies
    the \PODS\ property in some eigenbasis of $\bsigma$ (see \cref{sec:reversible}),
    then, for every integer $n\ge1$,
    \begin{equation}
        \alpha_2(\LL^{(n)})
        \ge \frac{2}{3\ln 2}\,\alpha_2(\LL)
        \approx 0.96\,\alpha_2(\LL).
    \end{equation}
    Here, $ \displaystyle \LL^{(n)} = \sum_{i=1}^n\id^{\otimes i-1}\otimes \LL\otimes \id^{\otimes n-i}$ is the generator of the $n$-fold tensor product semigroup $(\Psi_t^{\otimes n})_{t\ge0}$.
\end{theorem}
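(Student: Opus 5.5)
The abstract already fixes the route: establish exact tensorization of $(3,2)$-hypercontractivity, extend it to every $q>2$ by interpolation, and then differentiate at $t=0$.

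\textbf{Step 1: single-site hypercontractivity at $q=3$.} Write $\alpha=\alpha_2(\LL)$. By the standard implication from a log-Sobolev inequality to hypercontractivity (Gross/Olkiewicz--Zegarlinski, valid for primitive $\bsigma$-reversible semigroups), we have
\[
\norm{\Psi_t(\mat{X})}_{\bsigma,q}\le\norm{\mat{X}}_{\bsigma,2}
\quad\text{whenever}\quad \romanE^{4\alpha t}\ge q-1 .
\]
For $q=3$ this gives $\norm{\Psi_{t_0}(\mat{X})}_{\bsigma,3}\le\norm{\mat{X}}_{\bsigma,2}$ with $t_0=\ln 2/(4\alpha)$.

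\textbf{Step 2: exact tensorization of $(3,2)$.} This is the key step. For integer $q=3$, the norm $\norm{\mat{Y}}_{\bsigma,3}^3$ is a trace of a cubic polynomial in $\Gamma(\mat{Y})=\bsigma^{1/6}\mat{Y}\bsigma^{1/6}$ and its adjoint. Write $\mat{A}$ on $n$ sites as a block matrix with respect to the first site, in an eigenbasis of $\bsigma$. The PODS property says $\Psi_t$ preserves the diagonal algebra and scales each off-diagonal matrix unit by a nonnegative factor. I will use this to expand $\norm{\Psi_t^{\otimes n}\mat{A}}_{3}^3$ as a sum over index triples. The aim is an inductive bound of the form
\[
\norm{\Psi_t^{\otimes n}\mat{A}}_{\bsigma^{\otimes n},3}\le \norm{\mat{M}}_{\bsigma,3}.
\]
Here $\mat{M}$ is the one-site matrix of $\bsigma^{\otimes(n-1)}$-weighted $(3,2)$ norms of the blocks after the remaining $n-1$ sites have been handled: the $(i,j)$ entry is the appropriate $2$-norm of block $\mat{A}_{ij}$.

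To get there, I first bound the cubic trace sum by a triangle inequality applied entrywise to $\abs{\Tr(\cdot)}$. The positivity of the off-diagonal scalings lets the absolute values pass through the channel. Then a Hölder-type inequality for traces of block products bounds each cubic term by products of the block norms. I must also show that the single-site $(3,2)$ inequality extends to matrices with nonnegative entries arising this way; PODS should reduce the relevant matrix to an entrywise-nonnegative one where this holds. Finally, $\norm{\mat{M}}_{\bsigma,2}=\norm{\mat{A}}_{\bsigma^{\otimes n},2}$, which closes the induction.

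I expect the main obstacle to be the noncommutative Hölder estimate for the cubic block trace with the $\bsigma^{1/6}$ weights interleaved. There I would try the Sidon/modular structure lemmas to commute the weights past the blocks.

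\textbf{Step 3: interpolation to all $q>2$.} Now $\Psi_{t_0}^{\otimes n}$ is a contraction $L_2\to L_3$, and $\Psi_0=\mathrm{id}$ is a contraction $L_2\to L_2$. Apply Stein-type complex interpolation on the analytic family $z\mapsto\Psi_{zt_0}^{\otimes n}$, using Kosaki interpolation of the weighted norms. With $1/q=(1-\theta)/2+\theta/3$, this gives
\[
\norm{\Psi_{\theta t_0}^{\otimes n}}_{2\to q}\le1, \qquad \theta=6\Bigl(\tfrac12-\tfrac1q\Bigr).
\]
Boundedness of $\Psi_{iy}$ on $L_2$ comes from reversibility, since $\Psi_{iy}$ is unitary there.

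\textbf{Step 4: back to the log-Sobolev constant.} Let $q\downarrow2$ and differentiate at $t=0$, via the standard equivalence (Section \ref{sec:LSI}). A time $t(q)$ with $t(q)\sim \theta t_0\approx 6t_0\,(q-2)/4$ yields
\[
\alpha_2(\LL^{(n)})\ge \frac{1}{4\,t'(2)}=\frac{1}{4\cdot\frac{3}{2}t_0}=\frac{4\alpha}{6\ln2}=\frac{2\alpha}{3\ln 2},
\]
which is the constant in \cref{thm:reversible}.
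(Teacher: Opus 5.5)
There is a genuine gap in Step 1. Your architecture matches the paper's: exact $(3,2)$ tensorization by expanding the block trace and applying H\"older, then complex interpolation using unitarity of $\Psi_{\iu y}^{\otimes n}$ on $L_2(\bsigma^{\otimes n})$, then differentiation at $q=2$ with $t'(2)=\tfrac32 t_0$. The constant comes out right.

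The problem is the implication from a log-Sobolev inequality to $(q,2)$-hypercontractivity. In the noncommutative setting it runs the Gross flow through every exponent in $[2,q]$. It therefore needs $\inf_{r\ge1}\alpha_r(\LL)$, which is how \cref{thm:logarithmic-Sobolev-and-hypercontractivity-equivalent} is stated, not $\alpha_2(\LL)$ alone. The identity $\inf_{r\ge1}\alpha_r(\LL)=\alpha_2(\LL)$ is the quantum Stroock--Varopoulos inequality (\cref{lem:strook-varopoulos_non_com}). It is available for strongly (GNS) $\bsigma$-reversible generators, whereas the theorem only assumes KMS-reversibility. So the implication is not available in the generality you invoke. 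With KMS symmetry alone you would get time $\ln 2/(4\inf_r\alpha_r(\LL))$ and a bound in terms of the wrong constant.

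The missing idea is that \PODS\ upgrades KMS- to GNS-reversibility (\cref{lem:pods-upgrades-reversibility}). In an eigenbasis of $\bsigma$:
the diagonal subalgebra and the lines $\mathbb{C}\ketbratwo{i}{j}$, $i\neq j$, are mutually orthogonal for both inner products;
the two inner products coincide on the diagonal subalgebra;
and $\Psi_t$ preserves the diagonal and acts on each off-diagonal line by a real scalar $\gamma_{i,j}(t)\ge0$.
Hence KMS self-adjointness forces GNS self-adjointness, and only then does Step 1 hold at $t_0=\ln 2/(4\alpha_2(\LL))$.

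Step 2 also needs a positivity reduction you do not mention. For the odd exponent $3$, $\norm{\mat{Y}}_{\bsigma,3}^3=\Tr\bigl|\bsigma^{1/6}\mat{Y}\bsigma^{1/6}\bigr|^3$ is not a polynomial in $\Gamma(\mat{Y})$ and its adjoint unless $\mat{Y}\succeq0$. You must first reduce to positive semidefinite inputs via the weighted Watrous-type lemma for completely positive maps (\cref{lem:hypercontractivity-only-for-PSD}). Then the output is positive and its $3$-norm cubed is $\Tr[(\cdot)^3]$. The same lemma extends the resulting bound to all operators, which Step 3 requires because $\Psi_{\iu y}(\mat{A})$ is not positive.

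The obstacle you anticipated is not real. The Sidon/modular lemma would not be available anyway, since the theorem assumes no Sidon spectrum. Because $\bsigma$ is diagonal in the block basis, $\Gamma^{1/3}_{\bsigma^{\otimes n}}$ acts on the $(i,j)$ block as $(\sigma_i\sigma_j)^{1/6}\Gamma^{1/3}_{\bsigma^{\otimes(n-1)}}$. So ordinary H\"older for $\Tr[\mat{Z}_1\mat{Z}_2\mat{Z}_3]$ applies directly to the weighted blocks (\cref{cor:norm_compression_non_com}).

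The comparison matrix should be $\Psi_t(\widetilde{\mat{B}})$ with $\widetilde{\mat{B}}_{ij}=\norm{\Psi_t^{\otimes(n-1)}(\mat{A}_{ij})}_{\bsigma^{\otimes(n-1)},3}$. The single-site bound applies to it as is; no extension to nonnegative matrices is needed. The induction then closes via the induction hypothesis entrywise, \cref{lem:monotone_biased_norm} with $p=2$, and \cref{lem:exact_norm_compression_biased_2}. Comparing at the level of the nonnegative trace sum, as you propose, and finishing with $\Tr\mat{b}^3\le\norm{\mat{b}}_3^3$ for real symmetric $\mat{b}$, is a clean substitute for the paper's $\epsilon_i$ diagonal-shift trick.
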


For qubits, the approximate factor in \cref{thm:reversible} improves to exact
tensorization.
\begin{theorem}[Exact tensorization for qubit \PODS\ QMS]
\label{thm:qubit-pods-exact-tensorization}
Let $\bsigma$ be a full-rank qubit state and let $\LL$ be a primitive
$\bsigma$-reversible Lindblad generator on $\mathcal B(\mathbb C^2)$. Write
$\Psi_t=\romanE^{-t\LL}$ and suppose that $\Psi_t$ is \PODS\ with respect to an
eigenbasis of $\bsigma$ for every $t\ge0$. Then, for every integer $n\ge1$,
\[
    \alpha_2(\LL^{(n)})=\alpha_2(\LL).
\]
\end{theorem}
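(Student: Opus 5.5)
The inequality $\alpha_2(\LL^{(n)})\le\alpha_2(\LL)$ is immediate: we test the $n$-site log-Sobolev inequality on operators of the form $\mat{X}\otimes\id^{\otimes(n-1)}$. For such operators the entropy and the Dirichlet form both reduce to the single-site quantities, because $\bsigma^{\otimes n}$ is a product state and $\LL^{(n)}$ acts only on the first factor. The substance is the reverse inequality $\alpha_2(\LL^{(n)})\ge\alpha_2(\LL)$, and we obtain it through hypercontractivity. By the standard Gross-type equivalence recalled in the introduction (Olkiewicz--Zegarlinski and Beigi--Datta--Rouz\'e for reversible QMS), it suffices to prove that for every $q>2$, with $t_q=\frac{1}{4\alpha_2(\LL)}\ln(q-1)$, we have $\|\Psi_{t_q}^{\otimes n}(\mat{A})\|_{\bsigma^{\otimes n},q}\le\|\mat{A}\|_{\bsigma^{\otimes n},2}$. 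Differentiating this at $q=2$ then returns the log-Sobolev constant $\alpha_2(\LL)$ for $\LL^{(n)}$. We argue by induction on $n$, in the style of King's qubit argument, and split the tensor product as site $1$ against the remaining $n-1$ sites.

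First we use the qubit structure. For $d=2$, \PODS\ in an eigenbasis $\{\ket{0},\ket{1}\}$ of $\bsigma$ means that $\Psi_t$ maps the diagonal algebra into itself and multiplies $\ketbratwo{0}{1}$ and $\ketbratwo{1}{0}$ by nonnegative scalars. Reversibility then forces the diagonal part to be a classical two-point Markov chain reversible with respect to the eigenvalues $(\lambda,1-\lambda)$ of $\bsigma$. Combining this with the structure lemmas (\cref{lem:structure-lemma-modular,lem:pods-upgrades-reversibility}), we expect $\LL$ to be determined by two rates: a classical flip rate $a$ on the diagonal, and a dephasing rate $b$ on the off-diagonal units. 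Positivity of the generator should force $b\ge a/2$ or a similar constraint. The single-site constant $\alpha_2(\LL)$ is then determined by the classical two-point constant $\frac{1-2\lambda}{\ln(\lambda^{-1}-1)}\cdot(\text{rate})$, together with a possible contribution from the dephasing part.

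Second, we write an $n$-site operator in block form $\mat{A}=\sum_{i,j}\ketbratwo{i}{j}\otimes\mat{A}_{ij}$ with $\mat{A}_{ij}$ acting on the remaining $n-1$ sites. The weighted $q$-norm on the whole system is controlled by the block matrix of weighted $q$-norms, via a Pisier/King-type inequality for $2\times2$ block operators: $\normthree{\cdot}_q$ of the block matrix is bounded by the $q$-norm of the $2\times2$ matrix $(\|\mat{A}_{ij}\|_{\bsigma^{\otimes(n-1)},q})_{ij}$, with weights $\bsigma^{1/2q}$ placed on the outer indices. Since $\Psi_t$ acts on the blocks by scaling the off-diagonal blocks and classically mixing the diagonal ones, we apply $\Psi_t^{\otimes(n-1)}$ to each block and invoke the induction hypothesis. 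This reduces the problem to a hypercontractive inequality for a single-site PSD-type $2\times2$ matrix of norms, which is covered by the $n=1$ case once we know the scalar matrix is positive semidefinite. The nonnegativity of the \PODS\ scalings is exactly what keeps it positive.

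The main obstacle is the block-norm comparison when $\bsigma\neq\id/2$. King's inequality is proved for tracial norms, while the weighted norm inserts $\bsigma^{1/2q}$ on both sides. We would handle this by conjugating with $\bsigma^{1/2q}$ to pass to Schatten norms, and then checking that the \PODS\ diagonal/off-diagonal structure commutes with this conjugation. It should commute, because $\bsigma$ is diagonal in the chosen basis. If a direct King-type inequality fails for general $q$, we fall back on proving it for $q\ge2$ through the $2\times2$ positive-matrix monotonicity of $\|\cdot\|_{q/2}$, which is the only range the LSI derivative at $q=2$ requires.
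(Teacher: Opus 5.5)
Your overall route is the paper's. \Cref{thm:qubit-pods-real-hypercontractivity} is obtained by running the induction of \cref{cor:two_tensor_hc} with King's $2\times2$ norm compression, which is valid for every real $q\ge2$ and is transported to weighted norms by conjugating with the diagonal $\bsigma^{1/2q}$ as in \cref{cor:norm_compression_non_com}. One then differentiates at $q=2$ via \cref{thm:hc_to_lsi}. Your test on $\mat{X}\otimes\id^{\otimes(n-1)}$ for $\alpha_2(\LL^{(n)})\le\alpha_2(\LL)$ is correct, and your fallback is not needed.

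The step that is not justified as written is the base case. In the noncommutative setting, the $2$-LSI alone does not give $\|\Psi_{t_q}(\mat{X})\|_{\bsigma,q}\le\|\mat{X}\|_{\bsigma,2}$ at $t_q=\ln(q-1)/(4\alpha_2(\LL))$. Gross's integration over $r\in[2,q]$ needs $\alpha_r(\LL)\ge\alpha_2(\LL)$ along the way, which is why \cref{thm:logarithmic-Sobolev-and-hypercontractivity-equivalent} is stated with $\inf_{r}\alpha_r(\LL)$. This is the second place \PODS\ is used. \Cref{lem:pods-upgrades-reversibility} upgrades KMS-reversibility to strong (GNS) reversibility. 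The Stroock--Varopoulos inequality \cref{lem:strook-varopoulos_non_com} then gives $\alpha_2(\LL)=\inf_{p\ge1}\alpha_p(\LL)$. You invoke the structure lemmas only to guess the form of $\LL$ (two rates, $b\ge a/2$, an explicit value of $\alpha_2(\LL)$). That speculation is unnecessary, since the theorem is relative to $\alpha_2(\LL)$ whatever its value, and it does not supply the missing $L_p$-regularity.

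The positivity bookkeeping also needs repair. King's inequality requires a PSD block matrix, so you must first reduce to $\mat{A}\succeq0$. But the induction hypothesis is then applied to the off-diagonal blocks $\mat{A}_{ij}$, which are not PSD, and the single-site estimate coming from the LSI holds a priori only for PSD inputs. Both directions are supplied by \cref{lem:hypercontractivity-only-for-PSD} for the completely positive maps $\Psi_t^{\otimes m}$.

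Moreover, positive semidefiniteness of the $2\times2$ matrices of norms does not come from the \PODS\ signs. It comes from the block structure and H\"older: $\|\mat{A}_{12}\|_q^2\le\|\mat{A}_{11}\|_q\|\mat{A}_{22}\|_q$ whenever $(\mat{A}_{ij})_{i,j=1}^2\succeq0$, also in weighted form. The role of $\gamma_{i,j},\eta_{i,j}\ge0$ is the entrywise comparison. Set $\mat{B}_{ij}=\Psi_t^{\otimes(n-1)}(\mat{A}_{ij})$ and $\widetilde{\mat{B}}=(\|\mat{B}_{ij}\|_{\bsigma^{\otimes(n-1)},q})_{ij}$. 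The compressed matrix then coincides with $\Psi_t(\widetilde{\mat{B}})$ off the diagonal and is dominated by it on the diagonal, by the triangle inequality. To pass the $q$-norm through this comparison you need monotonicity under enlarging the diagonal. That is, use Weyl's theorem for the PSD compressed matrix, or, as in the paper, add $\epsilon_i\id$ to the diagonal blocks before compressing (\cref{lem:diag-monotonicity-psd}). The $L_2$ side then closes with the exact identity \cref{lem:exact_norm_compression_biased_2} and entrywise monotonicity of the weighted $2$-norm.
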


To our knowledge, beyond the qubit case, the only previously known result for
 qudit channels was a quasi-tensorization of the log-Sobolev constant obtained
by~\cite{Temme2014Hypercontractivity} via the spectral gap, with constants depending on both the local dimension $d$ and the minimal eigenvalue of $\bsigma$. A brief comparison of tensorization results for QMS in the literature is presented in
\cref{tab:comparison}, where $C_{\bsigma}$ is a constant depending on the invariant state $\bsigma$.
We
will discuss these results in detail in \cref{sec:qudit}.

\begin{table}[ht]
    \centering
    \caption{Comparison of tensorization results.}
    \footnotesize
    \renewcommand{\arraystretch}{1.35}
    \begin{tabular}{ >{\centering\arraybackslash}p{3.2cm} >{\centering\arraybackslash}p{6cm}
    >{\centering\arraybackslash}p{5cm} }
        \toprule \textbf{Work}                                                         & \textbf{Setting}                        & \textbf{Tensorization Property}                                        \\
        \midrule Folklore                                                              & Classical LSI                           & Exact                                                                  \\
        \cite{cj10-01}, \cite{kingHypercontractivitySemigroupsUnital2012} & Self-adjoint unital qubit QMS           & Exact                                                                  \\
        \cite{beigiQuantumReverseHypercontractivity2020} & Generalized qubit depolarizing QMS & Exact \\
        \cite{Temme2014Hypercontractivity}                                             & Primitive reversible qudit QMS          & $\alpha_{2}(\LL^{(n)})\ge C_{\bsigma}\lambda(\LL)\geq 2C_{\bsigma}\alpha_{2}(\LL)$ \\
        \textbf{This work} & Primitive reversible qubit \PODS\ QMS & $\alpha_{2}(\LL^{(n)})=\alpha_{2}(\LL)$ (exact) \\
        \textbf{This work} & Primitive reversible qudit \PODS\ QMS & $\alpha_{2}(\LL^{(n)})\geq 0.96\alpha_{2}(\LL)$ \\
        \bottomrule
    \end{tabular}
    \label{tab:comparison}
\end{table}

We also establish in \cref{sec:q_2_qubit_hc} the optimal
$(q,2)$-hypercontractivity inequality for
the generalized qubit depolarizing channel, using the norm compression technique for
$d=2$.

\begin{theorem}[Optimal \texorpdfstring{($q,2$)}{(q,2)} HC for Generalized Qubit Depolarizing Channels]
    For $d=2$ and $2\times 2$ full-rank density operator $\bsigma$, every $q\ge2$ and every operator $\mat{X}$ acting on $n$ qubits, the generalized qubit depolarizing channel $\DepBChn$ satisfies
    \[
        \qnn{\DepBChn(\mat{X})}\leq \tnn{\mat{X}}
    \]
    whenever
    \[
        \rho\leq
        \sqrt{\frac{(1-\mu)^{\frac{2}{q}}-\mu^{\frac{2}{q}}}
        {\mu^{\frac{2}{q}-1}(1-\mu)-(1-\mu)^{\frac{2}{q}-1}\mu}}\ , \   \mu=\lambda_{\min}(\bsigma)\le 1/2.
    \]
    Moreover, this upper bound on $\rho$ is sharp.
\end{theorem}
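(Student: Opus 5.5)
Write $\mu=\lambda_{\min}(\bsigma)$ and let $\rho_*(q)$ denote the right-hand side of the stated bound. The plan has three parts: prove the one-qubit $(q,2)$ inequality exactly at $\rho=\rho_*(q)$, lift it to $n$ qubits by induction using norm compression, and show sharpness with a one-qubit test operator. Monotonicity in $\rho$ comes for free. Since $\Delta_{\bsigma,\rho}=\Delta_{\bsigma,\rho/\rho_*}\circ\Delta_{\bsigma,\rho_*}$ and $\Delta_{\bsigma,r}^{\otimes n}$ is a contraction in every $\norm{\cdot}_{\bsigma^{\otimes n},q}$ for $r\in[0,1]$ (it is unital, completely positive and $\bsigma$-preserving), it suffices to treat $\rho=\rho_*$.

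\textbf{Single site.} Work in an eigenbasis of $\bsigma=\diag(1-\mu,\mu)$. First reduce to Hermitian, and then to positive, $\mat{X}$; this is the standard reduction for $(q,2)$ inequalities with $2\le q$, using the KMS-norm version of $\qn{\Phi(\mat{X})}\le\qn{\Phi(\abs{\mat{X}})}$-type estimates. By \PODS, the map acts as follows. The diagonal part $\diag(a,b)$ is sent to the classical two-point depolarized vector. The off-diagonal entries are scaled by $\rho$.

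Next, compute $\qn{\cdot}$ and $\tn{\cdot}$ of $\bsigma^{1/(2q)}\mat{X}\bsigma^{1/(2q)}$. Using convexity of $t\mapsto\Tr\abs{\cdot}^q$ in the off-diagonal variable together with the symmetry $c\mapsto e^{i\theta}c$, show that the worst case is when the off-diagonal entry is either $0$ or extremal. This reduces everything to a two-variable real inequality.

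The diagonal case is the classical asymmetric two-point $(q,2)$ hypercontractivity. Its optimal constant is known (Latała–Oleszkiewicz / Wolff type), and it is exactly $\rho_*(q)$: linearizing at $\mat{X}=\id+\varepsilon\mat{Y}$ with $\mat{Y}$ diagonal and mean zero gives a ratio $\frac{(1-\mu)^{2/q}-\mu^{2/q}}{\dots}$ of the stated form. For the off-diagonal direction, the scaling is also $\rho$, so one must check that $\rho_*(q)$ suffices there too. This is the main obstacle. My first attempt would be to show that off-diagonal perturbations cost less than the diagonal one, by expanding $\Tr\abs{\cdot}^q$ of a $2\times2$ matrix via its two eigenvalues and comparing with the classical function.

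\textbf{Tensorization.} Let $\mat{X}$ act on $n$ qubits and write it as a $2\times2$ block matrix over the first qubit, with blocks $\mat{X}_{ij}$ acting on $n-1$ qubits. Apply $\id\otimes\Delta_{\bsigma,\rho}^{\otimes(n-1)}$ first. Then use the norm compression inequality for $2\times2$ block positive matrices with $q\ge2$: $\norm{[\mat{M}_{ij}]}_q\le\norm{[\norm{\mat{M}_{ij}}_q]}_q$ (Audenaert/King). Because $\bsigma^{\otimes n}$ is a product and diagonal, the weights factor through the blocks, giving
\[
\qnn{\DepBChn(\mat{X})}\le \qn{\Delta_{\bsigma,\rho}\big([\qnnn{\Delta_{\bsigma,\rho}^{\otimes(n-1)}(\mat{X}_{ij})}]_{ij}\big)}.
\]
Apply the induction hypothesis blockwise, using the $(q,2)$ bound on the off-diagonal blocks as well through a polarization/Cauchy–Schwarz argument. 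Then apply the single-site inequality to the $2\times2$ matrix of $2$-norms. For positive $\mat{X}$, the matrix $[\tnnn{\mat{X}_{ij}}]$ dominates in the right sense, and its $\tn{\cdot}$ equals $\tnn{\mat{X}}$ by the $L_2$ Pythagoras identity. Making the block compression compatible with the weights, which requires $\bsigma$ diagonal in the block basis, is delicate but routine once the weights are absorbed into the blocks.

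\textbf{Sharpness.} Take $n=1$ and $\mat{X}=\id+\varepsilon\,\diag(\mu,-(1-\mu))$, which has mean zero under $\bsigma$. A second-order expansion of $\qn{\Delta_{\bsigma,\rho}(\mat{X})}^2-\tn{\mat{X}}^2$ in $\varepsilon$ becomes negative precisely when $\rho>\rho_*(q)$. This reproduces the classical two-point extremal computation.
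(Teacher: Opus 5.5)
Your proposal has a genuine gap at the point that separates this theorem from the classical one: the single-site inequality for non-diagonal positive $\mat{X}$ is never proved, and the reduction you suggest fails. Convexity of $b\mapsto\qn{\DepBCh(\mat{X})}^q$ plus phase invariance only shows that the left side increases with $|b|$. But the right side $\tn{\mat{X}}^2=\mu a^2+(1-\mu)c^2+2\sqrt{\mu(1-\mu)}\,|b|^2$ also increases, and a difference of two convex functions obeys no endpoint principle. For instance, with the diagonal of $\mat{X}$ fixed and $q=3$, one has $\qn{\DepBCh(\mat{X})}^2=\bigl(T(T^2+3u)/4\bigr)^{2/3}$. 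Here $T$ is the trace of $\bsigma^{1/(2q)}\DepBCh(\mat{X})\bsigma^{1/(2q)}$, which does not depend on $b$, and $u$ is affine in $|b|^2$; this is concave in $|b|^2$. So nothing pushes the worst case to $b=0$ or to $|b|^2=ac$, and the rank-one case would not be covered by the classical result anyway. What is needed, and what the paper supplies, is a rate comparison. Write $\qn{\DepBCh(\mat{X})}^2=(x^q+y^q)^{2/q}$ with $x\ge y\ge0$ the eigenvalues of $\bsigma^{1/(2q)}\DepBCh(\mat{X})\bsigma^{1/(2q)}$. The paper shows $\partial_{|b|^2}\qn{\DepBCh(\mat{X})}^2\le 2\sqrt{\mu(1-\mu)}=\partial_{|b|^2}\tn{\mat{X}}^2$ for all $|b|^2>0$, then integrates from the classical case $b=0$. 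This derivative bound reduces to $(\mu(1-\mu))^{1/q-1/2}\rho^2\,\frac{x^{q-1}-y^{q-1}}{x-y}\le (x^q+y^q)^{(q-2)/q}$. That follows from the power-mean estimate \cref{lemma:easy} once one knows $\rho_*\le\sqrt{(4\mu(1-\mu))^{1/2-1/q}/(q-1)}$, which is \cref{lem:easy_2}, a Stolarsky-mean inequality. Your ``first attempt'' gestures at the eigenvalue expansion but never reaches an inequality of this kind.

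Your sharpness argument is also wrong, as is the claim that linearizing at $\id$ produces $\rho_*$. Take $\mat{X}=\id+\varepsilon\mat{Y}$ with $\mat{Y}$ diagonal and $\Tr(\bsigma\mat{Y})=0$. Then $\qn{\DepBCh(\mat{X})}^2=1+(q-1)\rho^2\varepsilon^2\tn{\mat{Y}}^2+O(\varepsilon^3)$ while $\tn{\mat{X}}^2=1+\varepsilon^2\tn{\mat{Y}}^2$. So perturbations of constants only force $\rho\le 1/\sqrt{q-1}$, independently of $\mu$. For $\mu<1/2$ and $q>2$, however, $\rho_*\le\sqrt{(4\mu(1-\mu))^{1/2-1/q}/(q-1)}<1/\sqrt{q-1}$. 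Hence for $\rho\in(\rho_*,1/\sqrt{q-1})$ your test operator satisfies the inequality for all small $\varepsilon$ and certifies nothing. In the biased two-point problem the extremizers are far from constants. Optimality must be imported from the exact classical solution (Oleszkiewicz, Wolff) on diagonal inputs, as the paper does.

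A smaller point concerns your tensorization. Apply the single-site bound to $\widetilde{\mat{B}}=(\qnnn{\mat{B}_{ij}})_{ij}$ first, and only afterwards use the induction hypothesis and the trivial entrywise monotonicity of the weighted $2$-norm. The reverse order would need entrywise monotonicity of $\mat{M}\mapsto\qn{\DepBCh(\mat{M})}$ for non-even $q$, which you do not have. The induction hypothesis applies to off-diagonal blocks directly, since the $(q,2)$ bound holds for all operators by \cref{lem:hypercontractivity-only-for-PSD}, so no polarization is needed. The mixed diagonal blocks are handled by the triangle inequality together with \cref{lem:diag-monotonicity-psd}.
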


\subsection{Application to Quantum KKL}

The Kahn--Kalai--Linial (KKL) theorem is a central result in the analysis of Boolean functions. Together with Talagrand's inequality and Friedgut's junta theorem, it forms part of a broader influence theory that has played an important role in the study of threshold phenomena, noise sensitivity, learning theory, and related questions in theoretical computer science. A quantum analogue of this circle of ideas was initiated by Montanaro and Osborne~\cite{cj10-01}, who introduced quantum Boolean functions on the qubit hypercube and studied their Fourier analysis and influences. In particular, they obtained a quantum analogue of Talagrand's inequality for the $L_2$-influences, but the corresponding quantum KKL problem remained open. Unlike in the classical Boolean case, the $L_1$- and $L_2$-influences no longer coincide, which creates a major difficulty.

Recently, Rouz{\'e}, Wirth, and Zhang~\cite{rouze2024quantumkkl} developed a systematic quantum extension of this influence theory. Their framework establishes quantum analogues of Talagrand-, KKL-, and Friedgut-type theorems for $L_1$-influences, based on suitable hypercontractivity and gradient estimates, and applies beyond the qubit hypercube to general von Neumann algebraic settings. Building on this framework, our hypercontractivity theorem provides the required hypercontractive estimate for the generalized qudit depolarizing QMS
\[
    P_t=\romanE^{-t\LL_{\bsigma}^{(n)}}=\Phi_t^{\otimes n}.
\]
Consequently, the Talagrand-, KKL-, and junta-type results of~\cite{rouze2024quantumkkl} become available for product qudit depolarizing channels of arbitrary local dimension $d$. Thus, our tensorization result yields new applications to quantum influence theory beyond the qubit setting.

\subsection{Proof Techniques}
\label{sec:proof-outline} The main theorem, \cref{thm:qudit_hc}, is proved by a bootstrapping
argument. We first prove the exact tensorization of $(q,2)$-hypercontractivity when $q$ is an integer. This
follows from a standard induction argument using a new norm compression inequality.
Then, using complex interpolation, the $(3,2)$-hypercontractivity is extended
to $(q,2)$-hypercontractivity for all real numbers $q\in[2,3]$, at the cost
of a loss in the constants. Finally, using the standard equivalence between logarithmic-Sobolev
inequalities and hypercontractivity inequalities, the $(q,2)$-hypercontractivity
is bootstrapped into full $(q,p)$-hypercontractivity for all
$1\le p\le q\le \infty$.

\noindent\textbf{Norm compression inequality.}
A crucial proof ingredient in the tensorization of the qubit case is the \textit{norm
compression inequality}. A norm compression inequality relates the norm of a
large matrix to the norms of its smaller, block-partitioned components.
King~\cite{king2003inequalities} proved the norm compression inequality for
$2\times 2$ block matrices, which shows that for a positive semi-definite
$2\times2$ block matrix and real values $p\ge2$,
\begin{equation*}
    \pnorm{\begin{pmatrix}\mat{X}&\mat{Y} \\ \mat{Y}^{*}&\mat{Z}\end{pmatrix}}\le \pnorm{\begin{pmatrix}\pnorm{\mat{X}} & \pnorm{\mat{Y}} \\ \pnorm{\mat{Y}} & \pnorm{\mat{Z}}\end{pmatrix}}.
\end{equation*}
Moreover, for $1\le p\le 2$, the above inequality holds in the reverse direction.
Audenaert~\cite{AUDENAERT2008781} generalized this result to $2\times d$ block
matrices in certain cases, including the case $p \ge 4$. However, norm
compression for $d\geq 3$ fails in general; counterexamples for real-valued $p$
have been given in~\cite{AUDENAERT2006155,AUDENAERT2008781}.

The lack of a norm compression inequality is the main obstacle to directly
extending hypercontractivity for qubit channels to general qudit channels.
Nevertheless, if we focus on the case where the parameter $p$ is an integer at
least $2$, then the norm compression inequality for general $d\times d$ block
matrices still holds. Indeed, for integer-valued exponents we can expand the
corresponding power into a sum of matrix products and apply \Holder's inequality.
As a technical tool in this work, we prove that for a general PSD $d\times d$
block matrix, and $p\ge 2$ an integer,
\begin{equation*}
    \pnorm{\fmat{\mat{A}_{11}}{\mat{A}_{1d}}{\mat{A}_{d1}}{\mat{A}_{dd}}}\le \pnorm{\fmat{\pnorm{\mat{A}_{11}}}{\pnorm{\mat{A}_{1d}}}{\pnorm{\mat{A}_{d1}}}{\pnorm{\mat{A}_{dd}}}}
    .
\end{equation*}
We further extend this norm compression inequality to the biased (weighted) norm, which
enables us to establish tensorized hypercontractivity in the biased setting.

\noindent\textbf{Tensorization of hypercontractivity.}
Given the above norm compression inequality, the proof of $(q,2)$-hypercontractivity
for tensorized depolarizing channels for integer $q$ is standard, following~\cite{kingHypercontractivitySemigroupsUnital2012}.
In particular, the base case follows from the known hypercontractivity of a single
qudit depolarizing channel~\cite{beigiQuantumReverseHypercontractivity2020}, while
the qubit tensorization result is known from~\cite{kingHypercontractivitySemigroupsUnital2012,beigiQuantumReverseHypercontractivity2020}.
This allows us to tensorize the hypercontractivity of the depolarizing channel in
the $(q,2)$ setting when $q$ is an integer. That is, we prove that for any local
dimension $d\ge2$, any $n\ge 1$, and any $d^{n}\times d^{n}$ matrix $\mat{X}$, the
following inequality holds with the parameter
$\rho = \br{q-1}^{\frac{d\ln (d-1)}{4(2-d)}}$:
\begin{equation*}
    \normsub{{(\Delta^d_{\rho})}^{\otimes n}\br{\mat{X}}}{q}\le \normsub{\mat{X}}{2}.
\end{equation*}
For $d=2$, the parameter is understood by continuity as
$\rho=(q-1)^{-1/2}$.

Nevertheless, the (exact) tensorization property achieved in \cite{king2003inequalities,
beigiQuantumReverseHypercontractivity2020} seems hard to prove in the \emph{qudit}
case, partly due to the breakdown of the norm compression property (see~\cref{remark:norm_compression_false_for_qudit}).
In~\cite{Temme2014Hypercontractivity}, Temme \textit{et al.} proved an
approximate tensorization result: they obtained a lower bound for the logarithmic-Sobolev constant
explicitly depending on the system dimension~$d$ and the minimal eigenvalues of invariant states $\bsigma$.
We overcome this difficulty by employing \emph{complex interpolation} (see~\cref{lem:com_interpolation_noncom})
on our norm compression inequality. In contrast, our result incurs only a universal constant overhead
$3\ln 2/2 \approx 1.04$, or equivalently the universal loss factor
$2/(3\ln 2)\approx 0.96$, which is independent of $d$.

\noindent\textbf{Complex interpolation.}
Since the biased norm spaces $\pn{\cdot}$ form a complex interpolation family, the estimate
\begin{equation*}
    \normsub{\Delta_{\bsigma, \rho_0}^{\otimes n}(\mat{A})}{\bsigma^{\otimes n}, 3}\le \tnn
    {\mat{A}},
\end{equation*}
immediately yields hypercontractivity for any $q\in[2,3]$,
\begin{equation}
    \label{eq:interpolated-intro}\qnn{\DepBChn(\mat{A})}\le \tnn{\mat{A}},
\end{equation}
where $\rho$ depends on $\bsigma$ and $q$. This step uses the fact that for
purely imaginary
parameter the map $\exp\br{-it\LL}$ preserves the biased $2$-norm (is a unitary), which
holds for any reversible generator.

\noindent\textbf{Log-Sobolev inequality.}
It is well known that hypercontractivity is equivalent to logarithmic-Sobolev
inequalities~\cite{olkiewiczHypercontractivityNoncommutativeLpSpaces1999}.
In our case, taking the derivative of \cref{eq:interpolated-intro} yields a logarithmic-Sobolev inequality for the simple generator $\LL_{\bsigma}$. Then, for primitive strongly reversible semigroups, aided by the Quantum
Stroock--Varopoulos inequality proved by Beigi, Datta, and Rouz{\'e}~\cite{beigiQuantumReverseHypercontractivity2020},
we can bootstrap our hypercontractivity in \cref{eq:interpolated-intro}  to a
full range hypercontractivity for all $1\le p\le q\le \infty$.
\subsection{Discussion and Open Problems}

Several problems remain open.

\noindent\textbf{Optimal constants.} We use integer tensorization of $(q,2)$-hypercontractivity and complex interpolation to overcome the lack of norm compression for general $p>2$, which was the key ingredient used in~\cite{kingHypercontractivitySemigroupsUnital2012,beigiQuantumReverseHypercontractivity2020}
for $d=2$. This comes at the cost of
introducing an absolute constant of approximately $0.96$, yielding approximate tensorization instead of exact tensorization. It is natural to ask whether the exact tensorization can hold as in the qubit case. In other words, a problem that remains open is
to determine the optimal hypercontractive parameters/Log-Sobolev constants in
local dimension $d\ge3$ for the semigroups in
\Cref{thm:main:depolarizing,thm:reversible}.

It would also be desirable to establish approximate tensorization of LSI or hypercontractivity
for more general Markov semigroups or quantum channels. One of the channels closest in spirit to the depolarizing channel is the erasure channel. Its tensorization on commutative $L_{p}$ spaces has only recently been resolved in~\cite{baoHypercontractivityQuantumErasure2025a}, and the extension of their proof to the quantum case encounters substantial difficulties. This suggests that tensorization for more general quantum channels in non-commutative spaces is challenging.

\noindent\textbf{Necessity of the PODS assumption.} Within the reversible class, our
\PODS\ assumption implies strong $\bsigma$-reversibility and is equivalent to it
whenever $\bsigma$ has a multiplicative Sidon spectrum, specifically whenever
$\bsigma\neq\id/2$ in the qubit case. Density operators with this spectral
property are dense in the state space. It is therefore natural to ask whether
approximate tensorization of LSI or hypercontractivity holds for all primitive
strongly reversible quantum Markov semigroups. A positive answer would match the
structural scope of the Quantum Stroock--Varopoulos inequality of
Beigi--Datta--Rouz\'{e}~\cite[Theorem~14]{beigiQuantumReverseHypercontractivity2020}
and place our tensorization result in the same strongly reversible framework as
their analysis of generalized qubit depolarizing semigroups.

\subsection{Organization}

In \cref{sec:preliminaries}, we introduce the  preliminaries on noncommutative $L_p$-norms, quantum Markov semigroups, hypercontractivity and log-Sobolev inequalities. In \cref{sec:qudit},
we prove the hypercontractivity of the qudit depolarizing channel and establish
the extension to \PODS\ channels, including exact tensorization in the qubit
case. \cref{sec:q_2_qubit_hc} proves the optimal
$(q,2)$-hypercontractivity result for the generalized qubit depolarizing channel. A proof
flowchart for \cref{thm:qudit_hc} is given in Figure \ref{fig:proof-roadmap}.

\begin{figure}[htbp]
    \centering
    \begin{tikzpicture}[
        x=1cm,
        y=1cm,
        box/.style={ draw, rounded corners, align=center, inner sep=4pt, font=\small, text width=6.3cm, minimum height=1.0cm },
        lbox/.style={ box, text width=5.8cm },
        rbox/.style={ box, text width=5.8cm },
        arr/.style={->, thick, >=Latex},
        lab/.style={font=\fontsize{8pt}{8pt}\selectfont, align=center, fill=white, inner sep=1.2pt}
    ]
        \node[font=\bfseries] at (-4.2,0.8) {Hypercontractivity};
        \node[font=\bfseries] at (4.2,0.8) {Logarithmic-Sobolev Inequality};

        \draw[dashed, thick] (0,0.3) -- (0,-9.3);

        \node[lbox]
            (B)
            at
            (-4.2,-0.9)
            {$(q,2)$ HC of one qudit channel for integer $q \geq 3$ (\cref{thm:integer_hc_noncom} base case)};

        \node[lbox]
            (C)
            at
            (-4.2,-3.2)
            { $(q,2)$ HC of tensorized channels for integer $q \geq 3$ (\cref{thm:integer_hc_noncom})};

        \node[lbox]
            (D)
            at
            (-4.2,-5.5)
            { $(q,2)$ HC of tensorized channels for any $q \in [2,3]$ (\cref{thm:complex-interpolation})};

        \node[lbox]
            (F)
            at
            (-4.2,-7.9)
            {Any $(q,p)$ HC for tensorized channels (\cref{thm:qudit_hc})};

        \node[rbox]
            (A)
            at
            (4.2,-3.2)
            {LSI constant $\alpha_{2}$ for one qudit channel\\ (Theorem~25 in \cite{beigiQuantumReverseHypercontractivity2020})};

        \node[rbox]
            (E)
            at
            (4.2,-6.5)
            {LSI constant $\alpha_{2}$ for tensorized channels (\cref{eq:lsi_tensorisation})};

        \draw[arr]
            (B) --
            node[lab, right, xshift=10pt]
                {norm compression\\(\cref{thm:qudit-norm-compression})}
            (C);

        \draw[arr]
            (C) --
            node[lab, right, xshift=10pt]
                {complex interpolation\\(\cref{lem:com_interpolation_noncom})}
            (D);

        \draw[arr, dashed] (A) -- node[lab, right] {tensorize} (E);

        \draw[arr]
            (A.west) to[out=180, in=0]
            node[lab,]
                {\cref{thm:logarithmic-Sobolev-and-hypercontractivity-equivalent}\\ \cref{lem:strook-varopoulos_non_com}}
            (B.east);

        \draw[arr]
            (D.east) to[out=0, in=160]
            node[lab,] {\cref{thm:hc_to_lsi}}
            (E.west);

        \draw[arr]
            (E.west) to[out=200, in=0]
            node[lab,] {\cref{lem:strook-varopoulos_non_com}}
            (F.east);
    \end{tikzpicture}
    \caption{Proof roadmap for \cref{thm:qudit_hc}. The proof alternates between
    hypercontractive statements and logarithmic-Sobolev inequalities.}
    \label{fig:proof-roadmap}
\end{figure}

\section{Preliminaries}\label{sec:preliminaries}
Throughout the paper, $\mathcal{H}$ is a finite-dimensional Hilbert space and $B(\mathcal{H})$ denotes the space of bounded operators on $\mathcal{H}$, which can be identified with a matrix algebra.
We use $\Matrix_{n,m}$ to denote the space of $n\times m$  complex matrices,
and use $\Matrix_{n}$ to denote the space of $n\times n$ square matrices.
We use $\Hermitian_{n}$ to denote the space of $n\times n$ Hermitian matrices, $\PSD_{n}$ to denote positive semi-definite (PSD) matrices, and $\Density_{n}$ to denote the density operators, which are PSD operators of trace one.

For $\mat{A}\in\Matrix_d$,
the trace of $\mat{A}$ is $\Tr\Br{\mat{A}} = \sum_{i=1}^d{\mat{A}_{ii}}$.
The normalized trace is $\tau\Br{\mat{A}} = \frac{1}{d}\Tr\Br{\mat{A}}$. We use $\mat{A} \preceq \mat{B}$ to denote that $\mat{B} - \mat{A}$ is positive semidefinite and $\prec$ to denote that $\mat{B} - \mat{A}$ is positive definite.
The notations $\succeq$ and $\succ$ are defined analogously. A density operator $\bsigma$ is called strictly positive if $\bsigma\succ 0$, or equivalently if it is full-rank. For a full-rank density operator $\bsigma$, we denote by $\lambda(\bsigma)$ its smallest eigenvalue.

\begin{definition}[Schatten $p$-norms] The Schatten $p$-norms and normalized Schatten $p$-norms are defined respectively as follows,
  \begin{equation*}
      \normsub{\mat{A}}{p} = \br{\Tr\Br{\abs{\mat{A}}^p}}^{1/p}\ ,\
      \normthree{\mat{A}}_p = \br{\tau\Br{\abs{\mat{A}}^p}}^{1/p}.
  \end{equation*}
\end{definition}

\subsection{Biased Norms}
Throughout this paper, $\bsigma$ denotes a full-rank density operator. The Kubo-Martin-Schwinger (KMS) inner product associated with $\bsigma$ is defined as follows (see, e.g.,~\cite{beigiQuantumReverseHypercontractivity2020,olkiewiczHypercontractivityNoncommutativeLpSpaces1999}):
\[
\spro{\mat{X}}{\mat{Y}}= \Tr\left(\mat{X}^\dagger\gtt{}{\mat{Y}} \right)\equiv \Tr\br{\mat{X}^\dagger \gt{}{\mat{Y}}}.
\]
We define the \textit{power} superoperator as $\gt{}{\mat{X}} = \gtt{}{\mat{X}}$, and we also write
\begin{equation*}
    \gt{1/p}{\mat{X}} = \gtt{p}{\mat{X}}.
\end{equation*}
We have the \textit{$\bsigma$-weighted p-norm} (or Kosaki weighted $L_p$-norm\cite{kosakiApplicationsComplexInterpolation1984}) as
\[
\pn{\mat{X}}= \Tr[|\gt{1/p}{\mat{X}}|^p]^{\frac 1p}=\Tr[|\gtt{p}{\mat{X}}|^p]^{\frac 1 p}\equiv \pnorm{\gt{1/p}{\mat{X}}}.
\]
In particular, \(\tn{\mat{X}}^2=\spro{\mat{X}}{\mat{X}}\).

 The KMS inner product and the associated norm naturally induce a family of Banach spaces known as the \textit{$\bsigma$-weighted $L_p$ spaces}. Specifically, the non-commutative space $L_p(\bsigma)$ is formally defined as the completion of the space of linear operators with respect to the $\bsigma$-weighted $p$-norm $\pn{\cdot}$. In the special case where $p=2$, the inner product $\spro{\cdot}{\cdot}$ endows $L_2(\bsigma)$ with a Hilbert space structure.

The results in this paper formulated for the $\bsigma$-norm also apply to $\bsigma^{\otimes n}$, because we may view $\bsigma^{\otimes n}$ as a full-rank density operator on the $d^n \times d^n$ matrix algebra. Consequently, when discussing the weighted $p$-norm or the inner product on operators acting on the tensor-product Hilbert space $B(\mathcal{H}^{\otimes n})$, we will, whenever no ambiguity arises, write $\spro{\cdot}{\cdot}$ for $\snpro{\cdot}{\cdot}$ and $\pn{\cdot}$ for $\norm{\cdot}_{\bsigma^{\otimes n},p}$.

The following lemmata will be useful in tensorization:

\begin{lemma}\label{lem:monotone_unbiased_norm}\label{lem:monotone_biased_norm}
    If $\mat{X}$ and $\mat{Y}$ are square matrices with non-negative entries satisfying
    $\mat{X}_{ij} \le \mat{Y}_{ij}$ for all $i, j$, then for every even integer $p \geq 2$, we have
    \[
    \normsub{\mat{X}}{p}\le \normsub{\mat{Y}}{p}.
    \]
    Moreover, if $\bsigma$ is a full-rank diagonal density operator, then
    \[
    \pn{\mat{X}}\le \pn{\mat{Y}}.
    \]
\end{lemma}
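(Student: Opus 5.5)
For the unweighted statement, take an even integer $p=2k$. Since $\mat{X}^\dagger\mat{X}$ is positive semidefinite, we have
\[
\normsub{\mat{X}}{p}^p=\Tr\Br{(\mat{X}^\dagger\mat{X})^k}=\Tr\Br{\mat{X}^\dagger\mat{X}\mat{X}^\dagger\mat{X}\cdots\mat{X}^\dagger\mat{X}},
\]
with $2k$ factors alternating between $\mat{X}^\dagger$ and $\mat{X}$. Expanding this trace in matrix entries gives a sum over closed index walks $i_1,\dots,i_{2k}$. Each term is a product of $2k$ entries, each of the form $\overline{\mat{X}_{ab}}$ or $\mat{X}_{ab}$. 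The entries are real and non-negative, so the conjugates do nothing and every summand is a non-negative monomial in the entries of $\mat{X}$. Such a sum is coordinatewise monotone on the non-negative orthant. Hence $\mat{X}_{ij}\le\mat{Y}_{ij}$ termwise gives $\Tr\Br{(\mat{X}^\dagger\mat{X})^k}\le\Tr\Br{(\mat{Y}^\dagger\mat{Y})^k}$, and taking $p$-th roots finishes this part. Evenness is exactly what is needed: $\abs{\mat{X}}^p$ becomes a polynomial in the entries with no absolute values or fractional powers.

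For the weighted statement, write $\bsigma=\diag(s_1,\dots,s_d)$ in the basis that defines the entries, with all $s_i>0$. Then
\[
\gt{1/p}{\mat{X}}=\bsigma^{1/(2p)}\mat{X}\bsigma^{1/(2p)}
\]
has entries $s_i^{1/(2p)}\mat{X}_{ij}s_j^{1/(2p)}$. This is the entrywise (Hadamard) product of $\mat{X}$ with a fixed matrix of positive weights $w_{ij}=(s_is_j)^{1/(2p)}$. Consequently the entries of $\gt{1/p}{\mat{X}}$ are non-negative and bounded entrywise by the corresponding entries of $\gt{1/p}{\mat{Y}}$.

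Applying the unweighted statement to the pair $\gt{1/p}{\mat{X}}$, $\gt{1/p}{\mat{Y}}$ gives
\[
\pn{\mat{X}}=\normsub{\gt{1/p}{\mat{X}}}{p}\le\normsub{\gt{1/p}{\mat{Y}}}{p}=\pn{\mat{Y}}.
\]
The one point to state explicitly is that $\bsigma$ must be diagonal in the same basis as the entries. Otherwise conjugation by $\bsigma^{1/(2p)}$ would mix entries and could destroy non-negativity.
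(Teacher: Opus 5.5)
Your proof is correct and follows the paper's argument essentially step for step. Like the paper, you expand $\Tr\bigl[(\mat{X}^\dagger\mat{X})^{p/2}\bigr]$ as a sum of non-negative monomials in the entries to get entrywise monotonicity. You then reduce the weighted case to the unweighted one, since conjugation by the diagonal $\bsigma^{1/(2p)}$ only rescales each entry by a positive factor.
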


\begin{proof}
    For the unweighted case, it suffices to note that
    \[
    \normsub{\mat{X}}{p}^p
    =\sum_{i_1,\cdots,i_p=1}^n
    \mat{X}_{i_2,i_1}\mat{X}_{i_2,i_3}\mat{X}_{i_4,i_3}\mat{X}_{i_4,i_5}\cdots \mat{X}_{i_{p-2},i_{p-1}}\mat{X}_{i_p,i_1},
    \]
    which is a sum of products of non-negative entries. Hence
    \[
    \normsub{\mat{X}}{p}\le \normsub{\mat{Y}}{p}
    \]
    whenever $\mat{X}_{ij}\le \mat{Y}_{ij}$ for all $i,j$.

    For the weighted case, since $\bsigma$ is diagonal, the entries of $\gtt{p}{\mat{X}}$ and $\gtt{p}{\mat{Y}}$ are still non-negative and satisfy
    \[
    (\gtt{p}{\mat{X}})_{ij}
    =\sigma_i^{\frac{1}{2p}}\mat{X}_{ij}\sigma_j^{\frac{1}{2p}}
    \le \sigma_i^{\frac{1}{2p}}\mat{Y}_{ij}\sigma_j^{\frac{1}{2p}}
    =(\gtt{p}{\mat{Y}})_{ij}
    \qquad \text{for all } i,j.
    \]
    Applying the unweighted statement to $\gtt{p}{\mat{X}}$ and $\gtt{p}{\mat{Y}}$, we obtain
  \begin{align*}
   & \pn{\mat{X}}
    =\normsub{\gtt{p}{\mat{X}}}{p}
    \le \normsub{\gtt{p}{\mat{Y}}}{p}
    =\pn{\mat{Y}}.\qedhere
    \end{align*}
\end{proof}

If $\mat{X}$ and $\mat{Y}$ differ only on the diagonal entries, we obtain a stronger conclusion:

\begin{lemma}\label{lem:diag-monotonicity-psd}
Let \(\mat{X}\in \mathbb{C}^{d\times d}\) be positive semidefinite and let
$\mat{D}=\diag(\delta_1,\dots,\delta_d)$ be a nonnegative diagonal matrix.
Define $\mat{Y}=\mat{X}+\mat{D}$. Then, for every \(p\ge1\),
\[
\pnorm{\mat{Y}}\ge \pnorm{\mat{X}}.
\]
Moreover, if \(\bsigma\) is a full-rank diagonal density operator, then
\[
\pn{\mat{Y}}\ge \pn{\mat{X}}.
\]
\end{lemma}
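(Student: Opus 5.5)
Since $\mat{X}\succeq0$ and $\mat{D}\succeq0$, we have $\mat{Y}=\mat{X}+\mat{D}\succeq\mat{X}\succeq 0$. For positive semidefinite operators, $0\preceq\mat{X}\preceq\mat{Y}$ implies $\lambda_k^\downarrow(\mat{X})\le\lambda_k^\downarrow(\mat{Y})$ for every $k$, by Weyl monotonicity (equivalently, the Courant--Fischer min-max characterization). Hence
\[
\pnorm{\mat{X}}^p=\sum_k\lambda_k^\downarrow(\mat{X})^p\le\sum_k\lambda_k^\downarrow(\mat{Y})^p=\pnorm{\mat{Y}}^p
\]
for every $p\ge1$, which is the unweighted claim. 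Note that only $\mat{D}\succeq0$ is used, not its diagonality. Alternatively, monotonicity of $\Tr f(\cdot)$ for increasing $f(x)=x^p$ on $[0,\infty)$ gives the same conclusion directly.

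For the weighted statement, we reduce to the unweighted one by applying the congruence $\gt{1/p}{\cdot}$. Congruence by the positive operator $\bsigma^{1/(2p)}$ preserves the Loewner order. Thus $\gt{1/p}{\mat{Y}}-\gt{1/p}{\mat{X}}=\bsigma^{1/(2p)}\mat{D}\bsigma^{1/(2p)}\succeq0$, and $\gt{1/p}{\mat{X}}\succeq0$. Since $\bsigma$ is diagonal, the difference is in fact again a nonnegative diagonal matrix, namely $\diag(\sigma_i^{1/p}\delta_i)$, so the hypotheses of the first part hold verbatim. Applying the first part to $\gt{1/p}{\mat{X}}$ and $\gt{1/p}{\mat{Y}}$ then gives $\pn{\mat{X}}=\pnorm{\gt{1/p}{\mat{X}}}\le\pnorm{\gt{1/p}{\mat{Y}}}=\pn{\mat{Y}}$.

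The only genuine input is eigenvalue monotonicity under the Loewner order. This is standard, so there is no real obstacle; the one point needing care is to use positive semidefiniteness of $\mat{X}$, so that $|\mat{X}|=\mat{X}$ and the Schatten norm is a function of eigenvalues that are all nonnegative. Without $\mat{X}\succeq0$ the claim can fail: for negative eigenvalues, adding $\mat{D}$ could decrease the absolute values.
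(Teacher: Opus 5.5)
Your proof is correct and follows the paper's own argument: Weyl monotonicity of the eigenvalues under $0\preceq\mat{X}\preceq\mat{Y}$ gives the unweighted bound, and the weighted bound follows by applying the same reasoning to the congruences $\bsigma^{1/(2p)}\mat{X}\bsigma^{1/(2p)}\preceq\bsigma^{1/(2p)}\mat{Y}\bsigma^{1/(2p)}$. Your remark that the difference is again a nonnegative diagonal matrix, so the first part applies verbatim, is a harmless sharpening of the paper's observation that the difference is positive semidefinite.
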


\begin{proof}
Since $\mat{D}\succeq 0$, we have $\mat{Y}=\mat{X}+\mat{D}\succeq \mat{X}\succeq 0$. By Weyl's monotonicity theorem for Hermitian matrices,
\[
\lambda_i(\mat{Y})\ge \lambda_i(\mat{X})\ge 0,
\qquad i=1,\dots,d.
\]
Since $t\mapsto t^p$ is increasing on $[0,\infty)$ for every $p\ge 1$, we obtain
\[
\pnorm{\mat{Y}}^p=\sum_{i=1}^d\lambda_i(\mat{Y})^p
\ge \sum_{i=1}^d\lambda_i(\mat{X})^p
=\pnorm{\mat{X}}^p.
\]
Thus $\pnorm{\mat{Y}}\ge \pnorm{\mat{X}}$.

For the weighted statement, since $\bsigma$ is full-rank and diagonal,
\[
\gtt{p}{\mat{Y}}=\gtt{p}{\mat{X}}+\gtt{p}{\mat{D}}\succeq \gtt{p}{\mat{X}}\succeq 0,
\]
where $\gtt{p}{\mat{D}}\succeq 0$. Applying the same argument to $\gtt{p}{\mat{Y}}$ and $\gtt{p}{\mat{X}}$ gives
\[
\pn{\mat{Y}}=\pnorm{\gtt{p}{\mat{Y}}}\ge \pnorm{\gtt{p}{\mat{X}}}=\pn{\mat{X}}.
\]
\end{proof}

\begin{lemma}
\label{lem:exact_norm_compression_biased_2}
Fix full-rank density operators $\bsigma$ and $\bomega$. Suppose that
$\mat{A}=(\mat{A}_{ij})_{i,j=1}^d\in\mathbb{C}^{d\times d}\otimes\mathcal{B}(\mathcal{H})$
is written as a $d\times d$ block matrix with $\mat{A}_{ij}\in\mathcal{B}(\mathcal{H})$.
Assume that $\bsigma=\sum_{i=1}^d\sigma_i\ketbra{i}$ is diagonal in the computational basis, and define
\[
    \widetilde{\mat{A}}:=\bigl(\|\mat{A}_{ij}\|_{\bomega,2}\bigr)_{i,j=1}^d .
\]
Then $\|\widetilde{\mat{A}}\|_{\bsigma,2}=\|\mat{A}\|_{\bsigma\otimes\bomega,2}$.

In particular, if $\bomega=\bsigma^{\otimes (n-1)}$, then for every block matrix
$\mat{A}=(\mat{A}_{ij})_{i,j=1}^d$, we have
$$\big\|\bigl(\|\mat{A}_{ij}\|_{\bsigma^{\otimes(n-1)},2}\bigr)_{i,j=1}^d\big\|_{\bsigma,2} = \|\mat{A}\|_{\bsigma^{\otimes n},2}.$$
\end{lemma}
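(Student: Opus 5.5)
The plan is to expand both sides of the claimed identity in terms of the blocks and check that they agree term by term. The key is that the weighted $2$-norm is really a Hilbert–Schmidt norm after a conjugation. We have
\[
\|\mat{X}\|_{\bsigma,2}^2=\Tr\bigl[\mat{X}^\dagger\bsigma^{1/2}\mat{X}\bsigma^{1/2}\bigr]
=\bigl\|\bsigma^{1/4}\mat{X}\bsigma^{1/4}\bigr\|_2^2 ,
\]
which follows from cyclicity of the trace. So both sides reduce to sums of squared moduli of entries after conjugation.

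For the left-hand side, $\bsigma$ is diagonal with entries $\sigma_i$. The matrix $\bsigma^{1/4}\widetilde{\mat{A}}\bsigma^{1/4}$ then has $(i,j)$ entry $\sigma_i^{1/4}\sigma_j^{1/4}\|\mat{A}_{ij}\|_{\bomega,2}$. Summing the squared moduli gives
\[
\|\widetilde{\mat{A}}\|_{\bsigma,2}^2=\sum_{i,j}\sigma_i^{1/2}\sigma_j^{1/2}\,\|\mat{A}_{ij}\|_{\bomega,2}^2 .
\]

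For the right-hand side, the two-sided factor is
\[
(\bsigma\otimes\bomega)^{1/4}=\sum_i \sigma_i^{1/4}\ketbra{i}\otimes\bomega^{1/4}.
\]
It is block diagonal with diagonal blocks $\sigma_i^{1/4}\bomega^{1/4}$. Conjugating $\mat{A}$ by it therefore produces the block matrix with $(i,j)$ block $\sigma_i^{1/4}\sigma_j^{1/4}\,\bomega^{1/4}\mat{A}_{ij}\bomega^{1/4}$. The Hilbert–Schmidt norm squared of a block matrix is the sum of the Hilbert–Schmidt norms squared of its blocks. Hence
\[
\|\mat{A}\|_{\bsigma\otimes\bomega,2}^2=\sum_{i,j}\sigma_i^{1/2}\sigma_j^{1/2}\bigl\|\bomega^{1/4}\mat{A}_{ij}\bomega^{1/4}\bigr\|_2^2
=\sum_{i,j}\sigma_i^{1/2}\sigma_j^{1/2}\|\mat{A}_{ij}\|_{\bomega,2}^2 .
\]

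This matches the left-hand side exactly, and taking square roots gives the identity. The special case $\bomega=\bsigma^{\otimes(n-1)}$ follows by substitution, using $\bsigma\otimes\bsigma^{\otimes(n-1)}=\bsigma^{\otimes n}$.

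There is no real obstacle here. The only point that needs care is the first step: one should use the $\bsigma^{1/4}$-conjugated form of the weighted $2$-norm rather than the $\bsigma^{1/(2p)}$ form with $p=2$. These are indeed the same, since $\bsigma^{1/(2p)}=\bsigma^{1/4}$ when $p=2$, so the definition of $\pn{\cdot}$ given earlier applies directly. The diagonal form of $\bsigma$ is what makes the conjugation act blockwise, and it is used only there.
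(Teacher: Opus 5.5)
Your proposal is correct and follows essentially the same route as the paper: rewrite both weighted $2$-norms as Hilbert--Schmidt norms after $\bsigma^{1/4}$-conjugation, use that diagonal $\bsigma$ makes the conjugation act blockwise, and compare the two resulting sums $\sum_{i,j}\sigma_i^{1/2}\sigma_j^{1/2}\|\mat{A}_{ij}\|_{\bomega,2}^2$ term by term. Your explicit cyclicity check that $\Tr\bigl[\mat{X}^\dagger\bsigma^{1/2}\mat{X}\bsigma^{1/2}\bigr]=\|\bsigma^{1/4}\mat{X}\bsigma^{1/4}\|_2^2$ is a step the paper uses without spelling out.
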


\begin{proof}
By the definition of the weighted $2$-norm, we have
$$
\|\mat{A}\|_{\bsigma\otimes\bomega,2}^2 = \|(\bsigma^{1/4}\otimes \bomega^{1/4})\mat{A}(\bsigma^{1/4}\otimes \bomega^{1/4})\|_2^2.
$$
Since $\mat{A}=(\mat{A}_{ij})_{i,j=1}^d$ and $\bsigma$ is diagonal, the $(i,j)$-block of
\[{(\bsigma^{1/4}\otimes \bomega^{1/4})\mat{A}(\bsigma^{1/4}\otimes \bomega^{1/4})} \]
is simply $\sigma_i^{1/4}\sigma_j^{1/4}\,\bomega^{1/4}\mat{A}_{ij}\bomega^{1/4}$.
Therefore, by the block decomposition of the Hilbert-Schmidt norm and the definition of the weighted $2$-norm,
$$
\|\mat{A}\|_{\bsigma\otimes\bomega,2}^2
= \sum_{i,j=1}^d \sigma_i^{1/2}\sigma_j^{1/2} \|\bomega^{1/4}\mat{A}_{ij}\bomega^{1/4}\|_2^2
= \sum_{i,j=1}^d \sigma_i^{1/2}\sigma_j^{1/2} \|\mat{A}_{ij}\|_{\bomega,2}^2.
$$

On the other hand, by the definition of $\widetilde{\mat{A}}$,
$$
\|\widetilde{\mat{A}}\|_{\bsigma,2}^2
= \|\bsigma^{1/4}\widetilde{\mat{A}} \bsigma^{1/4}\|_2^2
= \sum_{i,j=1}^d \sigma_i^{1/2}\sigma_j^{1/2} |\widetilde{\mat{A}}_{ij}|^2
= \sum_{i,j=1}^d \sigma_i^{1/2}\sigma_j^{1/2} \|\mat{A}_{ij}\|_{\bomega,2}^2.
$$

Comparing the two expressions yields $\|\widetilde{\mat{A}}\|_{\bsigma,2}^2 = \|\mat{A}\|_{\bsigma\otimes\bomega,2}^2$.
\end{proof}

Define the \textit{power operator} as:
\[
\iqp{q}{p}{\mat{X}}\coloneqq \gt{-1/q}{|\gt{1/p}{\mat{X}}|^{\frac p q}}.
\]
More properties of $\iqp{q}{p}{\mat{X}}$ can be found in~\cite{beigiQuantumReverseHypercontractivity2020}.
Using the power operator and the entropy notation defined below, we can analyze the derivative of {weighted $p$-norms} as follows:

\begin{definition}[Entropy]
    For $p \neq 0$ and $\mat{X} \succ 0$, the \textit{p-entropy} is defined\footnote{Here we follow the notations of \cite{beigiQuantumReverseHypercontractivity2020}, where the entropy differs from the usual definition in e.g.~\cite{TemmeRapidMixing2013} by an additional factor of $p$. This normalization is chosen so that the entropy agrees with the standard entropy function in the classical setting, and it also simplifies the discussion and proofs of the subsequent results.} as
    \[
    \begin{gathered}
    \ent{p}{\mat{X}}\coloneqq \Tr\left[(\gt{1/p}{\mat{X}})^p\cdot \ln{(\gt{1/p}{\mat{X}})^p} \right]
    -\Tr\left[(\gt{1/p}{\mat{X}})^p\cdot\ln \bsigma\right]-\pn{\mat{X}}^p\cdot\ln\pn{\mat{X}}^p.
    \end{gathered}
    \]
    It is straightforward to verify that $\ent{p}{I_{p,2}(\mat{X})}=\ent{q}{I_{q,2}(\mat{X})}$ for all $p,q\neq 0$.
\end{definition}

\begin{lemma}[{\cite[Derivative of the $p$-norm]{beigiSandwichedRenyiDivergence2013}}]
\label{lem:derivative_p_norm}
    For a differentiable operator-valued function $p\mapsto \mat{X}_p$ and any $p> 0$, we have
    \[
    \diff{p}\pn{\mat{X}_p} = \frac{1}{p^2}\pn{\mat{X}_p}^{1-p}\cdot\left(\frac{1}{2}\ent p{\iqp{p}{p}{\mat{X}_p}}+\frac 1 2 \ent p {I_{p,p}(\mat{X}_p^\dagger)}+\gamma\right),
    \]
    where $\gamma$ is given by
    \[
    \gamma = \frac{p^2}{2}
\left(\Tr[\gt{1/p}{\mat{Z}_p^\dagger}\cdot \gt{1/p}{\mat{X}_p}\cdot |\gt{1/p}{\mat{X}_p}|^{p-2}]+\Tr[\gt{1/p}{\mat{X}_p^\dagger}\cdot \gt{1/p}{\mat{Z}_p}\cdot |\gt{1/p}{\mat{X}_p}|^{p-2}]
\right),
    \]
    and $\mat{Z}_p = \diff{p}\mat{X}_p$.
    For $\mat{X}_p \succ 0$, the derivative simplifies to
    \[
    \diff p \pn{\mat{X}_p}=\frac{1}{p^2}\pn{\mat{X}_p}^{1-p}\cdot\left(\ent p{{\mat{X}_p}}+ p^2\Tr[\gt{1/p}{\mat{Z}_p}\cdot \gt{1/p}{\mat{X}_p}^{p-1}]\right).
    \]
    Furthermore, for fixed  $\mat{X} \succ 0$, we have
    \[
    \begin{aligned}
    \diff{p} \pn{\mat{X}}^p
    &= \pn{\mat{X}}^p\cdot \bigl(\ln \pn{\mat{X}}+p\pn{\mat{X}}^{-1}(\diff{p}\pn{\mat{X}})\bigr)\\
    &= \frac{1}{p}\bigl(\pn{\mat{X}}^p\cdot \ln \pn{\mat{X}}^p+\ent{p}{\mat{X}}\bigr)\\
    &= \frac{1}{p}\Bigl(\Tr\left[(\gt{1/p}{\mat{X}})^p\cdot \ln{(\gt{1/p}{\mat{X}})^p} \right]
        -\Tr\left[(\gt{1/p}{\mat{X}})^p\cdot\ln \bsigma\right]\Bigr).
    \end{aligned}
    \]
\end{lemma}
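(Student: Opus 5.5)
The plan is a direct computation. Write $\mat{Y}_p:=\gt{1/p}{\mat{X}_p}=\bsigma^{\frac1{2p}}\mat{X}_p\bsigma^{\frac1{2p}}$ and $f(p):=\pn{\mat{X}_p}^p=\Tr\bigl[(\mat{Y}_p^\dagger\mat{Y}_p)^{p/2}\bigr]$. Since $\pn{\mat{X}_p}=f(p)^{1/p}$, we have
\[
\diff{p}\pn{\mat{X}_p}=\pn{\mat{X}_p}\Bigl(-\tfrac{1}{p^2}\ln f(p)+\tfrac{f'(p)}{p\,f(p)}\Bigr)
=\tfrac{1}{p^2}\pn{\mat{X}_p}^{1-p}\bigl(p f'(p)-f(p)\ln f(p)\bigr).
\]
The whole task is therefore to compute $p f'(p)$ and match it with the entropies plus $\gamma$.

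\textbf{Computing $f'(p)$.} The variable $p$ enters $f$ in three places, and I would differentiate each in turn.

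First, $p$ appears in the exponent $p/2$. With $\mat{A}=\mat{Y}^\dagger\mat{Y}$ held fixed, this contributes $\tfrac12\Tr[|\mat{Y}|^p\ln|\mat{Y}|^2]$.

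Second and third, $p$ enters through $\mat{Y}_p$, both via $\mat{X}_p$ and via the weights $\bsigma^{1/(2p)}$. Here I use the trace-derivative rule $d\,\Tr[g(\mat{A})]=\Tr[g'(\mat{A})\,d\mat{A}]$ with $g(t)=t^{p/2}$, which gives
\[
\tfrac p2\Tr\bigl[|\mat{Y}|^{p-2}\bigl(\dot{\mat{Y}}^\dagger\mat{Y}+\mat{Y}^\dagger\dot{\mat{Y}}\bigr)\bigr].
\]
Then I substitute
\[
\dot{\mat{Y}}=\gt{1/p}{\mat{Z}_p}-\tfrac{1}{2p^2}\bigl(\ln\bsigma\,\mat{Y}+\mat{Y}\ln\bsigma\bigr).
\]

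The $\gt{1/p}{\mat{Z}_p}$ part yields exactly $\gamma/p$, after cyclically rewriting $\mat{Y}|\mat{Y}|^{p-2}\mat{Y}^\dagger$ where needed. For the $\ln\bsigma$ part, I use cyclicity together with the identities $\mat{Y}|\mat{Y}|^{p-2}\mat{Y}^\dagger=|\mat{Y}^\dagger|^p$ and $\mat{Y}^\dagger|\mat{Y}^\dagger|^{p-2}\mat{Y}=|\mat{Y}|^p$. This part then becomes
\[
-\tfrac{1}{2p}\bigl(\Tr[|\mat{Y}|^p\ln\bsigma]+\Tr[|\mat{Y}^\dagger|^p\ln\bsigma]\bigr).
\]

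\textbf{Matching with the entropies.} Observe that $\gt{1/p}{I_{p,p}(\mat{X}_p)}=|\mat{Y}|$ and $\gt{1/p}{I_{p,p}(\mat{X}_p^\dagger)}=|\mat{Y}^\dagger|$. Both have $p$-norm equal to $f^{1/p}$. Moreover, $|\mat{Y}|^p$ and $|\mat{Y}^\dagger|^p$ have the same spectrum, so their $t\ln t$ traces agree. Summing the three contributions, multiplying by $p$, and subtracting $f\ln f$ therefore reproduces
\[
\tfrac12\ent{p}{I_{p,p}(\mat{X}_p)}+\tfrac12\ent{p}{I_{p,p}(\mat{X}_p^\dagger)}+\gamma,
\]
with the $-\Tr[\cdot\ln\bsigma]$ terms split evenly between the two entropies. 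This gives the general formula.

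\textbf{The special cases.} For $\mat{X}_p\succ0$ we have $\mat{Y}=\mat{Y}^\dagger=|\mat{Y}|$, so the two entropies coincide with $\ent{p}{\mat{X}_p}$ and the two traces in $\gamma$ merge into $p^2\Tr[\gt{1/p}{\mat{Z}_p}\mat{Y}^{p-1}]$. For fixed $\mat{X}$ we set $\mat{Z}=0$. The identity for $\diff p\pn{\mat{X}}^p$ then follows from the chain rule, $\diff p e^{p\ln\pn{\mat{X}}}$, combined with the formula just obtained; expanding the definition of $\mathrm{Ent}$ cancels the $\pn{\mat{X}}^p\ln\pn{\mat{X}}^p$ term.

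\textbf{Main obstacle.} The delicate step is justifying the trace-derivative rule for $t\mapsto t^{p/2}$ when $|\mat{Y}|$ is singular, together with the non-commutativity of $\ln\bsigma$ with $\mat{Y}$. The latter is handled entirely under the trace by cyclicity. For the former, I would first assume $\mat{X}_p$ is invertible, so that $g$ is smooth on the spectrum, and then pass to the general case by continuity. This works for $p>0$ because $t^{p}\ln t$ extends continuously to $t=0$, and for $p\ge 2$ the factor $|\mat{Y}|^{p-2}$ is harmless.
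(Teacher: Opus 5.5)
Your computation is correct. This includes the reduction $\diff{p}\pn{\mat{X}_p}=\frac1{p^2}\pn{\mat{X}_p}^{1-p}\bigl(pf'(p)-f\ln f\bigr)$ and the three contributions to $pf'$, namely $\Tr[|\mat{Y}|^p\ln|\mat{Y}|^p]$, $\gamma$, and $-\frac12\Tr[(|\mat{Y}|^p+|\mat{Y}^\dagger|^p)\ln\bsigma]$. It also includes the matching with $\frac12\ent{p}{I_{p,p}(\mat{X}_p)}+\frac12\ent{p}{I_{p,p}(\mat{X}_p^\dagger)}$. The paper gives no proof of this lemma; it only cites Beigi. For invertible $\mat{X}_p$ your argument is complete, since $(p,\mat{A})\mapsto \mat{A}^{p/2}=\romanE^{\frac p2\ln \mat{A}}$ is smooth on $\mat{A}\succ0$.

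One small point concerns the positive-definite case. Merging the two traces of $\gamma$ into $p^2\Tr[\gt{1/p}{\mat{Z}_p}\gt{1/p}{\mat{X}_p}^{p-1}]$ uses $\mat{Z}_p^\dagger=\mat{Z}_p$. This holds when $\mat{X}_p$ is self-adjoint for all $p$ near the point; otherwise you only get $p^2\,\mathrm{Re}\,\Tr[\cdots]$. Self-adjointness does hold in \cref{thm:hc_to_lsi}, where $\mat{Z}=-\frac{\mathrm{d}t_0}{\mathrm{d}q}\LL(\mat{B})$.

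The genuine gap is your last step, extending to singular $\mat{X}_p$ ``by continuity''. Continuity of the right-hand side does not imply that the left-hand side is differentiable at a singular point. For $0<p\le 1$ the extension you claim is in fact false. Take $d=2$, $\bsigma=\id/2$ and $\mat{X}_p=\diag(1,p-p_0)$ with $0<p_0\le1$. Then $\pn{\mat{X}_p}^p=\frac12\bigl(1+|p-p_0|^p\bigr)$. Its difference quotient at $p_0$ diverges for $p_0<1$ and has one-sided limits $\pm1$ for $p_0=1$. So $p\mapsto\pn{\mat{X}_p}$ is not differentiable at $p_0$, and $|\gt{1/p_0}{\mat{X}_{p_0}}|^{p_0-2}$ is not even defined there.

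You should therefore state and prove the lemma for invertible $\mat{X}_p$; the statement as written tacitly needs this hypothesis. This is all the paper uses. In \cref{thm:hc_to_lsi}, $\mat{B}=\Psi_{t_0}(\mat{A})\succeq\lambda_{\min}(\mat{A})\id\succ0$ because $\Psi_{t_0}$ is unital and positive, and the last display concerns a fixed $\mat{X}\succ0$.

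If you do want singular inputs, you need a real approximation argument. For example, differentiate $f_\epsilon(p)=\Tr[(\mat{Y}_p^\dagger\mat{Y}_p+\epsilon\id)^{p/2}]$ and show that $f_\epsilon'$ converges locally uniformly to the claimed expression, reading $\mat{Y}|\mat{Y}|^{p-2}$ as $\mat{U}|\mat{Y}|^{p-1}$ via polar decomposition. This works for $p>1$ and $\mat{X}_p\neq0$, because $s(s^2+\epsilon)^{p/2-1}\to s^{p-1}$ uniformly on bounded sets. It breaks down exactly when $p\le1$.
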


\subsection{Quantum Markov Semigroups}
\label{sec:reversible}

\begin{definition}
    [Quantum Markov Semigroup \& Lindbladian]
    A quantum Markov semigroup (QMS) is a continuous family $(\Psi_t)_{t\ge 0}: \B(\H)\to \B(\H)$ of completely positive unital maps of the form $\Psi_t = \romanE^{-t\LL}$, where $\LL\colon \B(\H)\to \B(\H)$ is called the \emph{Lindblad generator} (or \emph{Lindbladian}) of the QMS.
    A QMS is \emph{primitive} if its Schr\"odinger-picture adjoint has a unique
    full-rank invariant density operator.
\end{definition}

\begin{remark}
   For a fixed Lindbladian $\LL$ and fixed $\mat{X}$, the function $z\mapsto \Psi_z(\mat{X})=\romanE^{-z\LL}(\mat{X})$ is holomorphic on $\mathbb{C}$; see \cite{EngelNagel2000}.
\end{remark}

\begin{prop}[{\cite[Proposition~7]{beigiQuantumReverseHypercontractivity2020}}]
\label{prop:QMS}
Every primitive QMS $(\Psi_t)_{t\ge0}$ with full-rank invariant state $\bsigma$ is
$p$-contractive for each $p\in[1,\infty)$; equivalently,
\[
    \pn{\Psi_t(\mat{X})}\leq \pn{\mat{X}},
    \qquad \mat{X}\succ0,\quad t\ge0.
\]
\end{prop}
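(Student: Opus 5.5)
We want to show that $t\mapsto \pn{\Psi_t(\mat{X})}^p$ is nonincreasing for $\mat{X}\succ0$. Since $\Psi_0=\mathrm{id}$, this gives the claim. The plan is to use the interpolation/duality structure of the weighted norms rather than differentiate directly.

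\textbf{Step 1: the two endpoints.} First we check $p=\infty$ and $p=1$.
\begin{itemize}
\item For $p=\infty$, unitality and positivity give $-\norm{\mat{X}}_\infty\id\preceq \mat{X}\preceq\norm{\mat{X}}_\infty\id$. Applying $\Psi_t$ preserves these inequalities, so $\norm{\Psi_t(\mat{X})}_\infty\le\norm{\mat{X}}_\infty$.
\item For $p=1$, write $\pn{\mat{X}}_{1}=\norm{\gt{}{\mat{X}}}_1$. Consider the Schr\"odinger-picture map $\Psi_{t*}$ and the KMS-dual map $\widehat{\Psi}_t:=\Gamma_{\bsigma}^{-1}\circ\Psi_{t*}\circ\Gamma_{\bsigma}$, which is the adjoint of $\Psi_t$ for $\spro{\cdot}{\cdot}$. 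Invariance $\Psi_{t*}(\bsigma)=\bsigma$ gives that $\widehat{\Psi}_t$ is unital. Complete positivity of $\Psi_{t*}$ and of the congruence $\Gamma_{\bsigma}$ gives that it is positive. So $\widehat\Psi_t$ is contractive on $L_\infty$.
\item By the duality $L_1(\bsigma)^*=L_\infty$ under the KMS pairing, $\Psi_t$ is contractive on $L_1(\bsigma)$.
\end{itemize}

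\textbf{Step 2: interpolate.} The spaces $L_p(\bsigma)$ with the Kosaki norms form a complex interpolation scale between $L_\infty$ and $L_1(\bsigma)$ (Kosaki's theorem), and $\Psi_t$ is a single linear map. The Riesz--Thorin-type interpolation theorem for this scale therefore gives $\pn{\Psi_t(\mat{X})}\le\pn{\mat{X}}$ for all $\mat{X}$, in particular for $\mat{X}\succ0$. Interpolating a fixed operator needs neither primitivity nor reversibility. Full rank of $\bsigma$ is used only so that the norms and $\Gamma_{\bsigma}^{-1}$ are well defined.

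\textbf{Main obstacle.} The delicate point is Step 1 at $p=1$: identifying the dual of the symmetric Kosaki embedding correctly. One must check that the pairing $\langle \mat{Y},\mat{X}\rangle=\Tr[\mat{Y}^\dagger\gt{}{\mat{X}}]$ realizes $L_1(\bsigma)$ as the predual of $L_\infty$ with equal norms, i.e.\ $\norm{\gt{}{\mat{X}}}_1=\sup_{\norm{\mat{Y}}_\infty\le1}\abs{\Tr[\mat{Y}^\dagger\gt{}{\mat{X}}]}$. This is immediate from trace duality because $\Gamma_{\bsigma}$ is a bijection. We also need that the Heisenberg adjoint with respect to this pairing is exactly $\widehat\Psi_t$.

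\textbf{Fallback.} If one prefers to avoid interpolation, differentiate $\pn{\Psi_t(\mat{X})}^p$ via \cref{lem:derivative_p_norm}. Negativity of the derivative reduces to $\Tr[\gt{1/p}{\LL\mat{X}}\,\gt{1/p}{\mat{X}}^{p-1}]\ge0$. This is a noncommutative Dirichlet-form positivity, but it is harder to establish without reversibility.
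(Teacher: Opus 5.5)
The paper gives no proof of its own here; it only cites Beigi--Datta--Rouz\'e. Your argument is the standard proof of this fact, and it is correct apart from one step that needs a standard repair. The structure is: operator-norm contractivity of $\Psi_t$; contractivity on $L_1(\bsigma)$ by duality through the KMS adjoint $\widehat\Psi_t=\Gamma_{\bsigma}^{-1}\circ\Psi_{t*}\circ\Gamma_{\bsigma}$, which is unital and completely positive precisely because $\Psi_{t*}(\bsigma)=\bsigma$; then Kosaki's identification of $L_p(\bsigma)$ with $[L_\infty,L_1(\bsigma)]_{1/p}$. You are right that only invariance and full rank of $\bsigma$ are used, so neither primitivity nor reversibility is needed. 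The differentiation fallback is also unnecessary.

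The step that needs repair is the $L_\infty$ bound. The order-interval argument $-\norm{\mat{X}}_\infty\id\preceq\mat{X}\preceq\norm{\mat{X}}_\infty\id$ only applies to Hermitian $\mat{X}$. Complex interpolation, however, needs the $L_\infty$ bound on all of $B(\mathcal{H})$. You also need that bound for $\widehat\Psi_t$, because the supremum over $\norm{\mat{Y}}_\infty\le 1$ in your $L_1$ duality ranges over non-Hermitian $\mat{Y}$. Splitting an operator into Hermitian and anti-Hermitian parts only gives a constant $2$ at the endpoints, and interpolation does not remove it.

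Instead, use the Kadison--Schwarz inequality $\Phi(\mat{X})^\dagger\Phi(\mat{X})\preceq\Phi(\mat{X}^\dagger\mat{X})\preceq\norm{\mat{X}}_\infty^2\id$ for unital completely positive $\Phi$, or the Russo--Dye theorem. Either gives $\norm{\Phi(\mat{X})}_\infty\le\norm{\mat{X}}_\infty$ for every $\mat{X}$, for both $\Phi=\Psi_t$ and $\Phi=\widehat\Psi_t$. With that in place, both endpoint bounds hold with constant $1$ and the interpolation goes through as you describe.
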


\begin{definition}[Reversibility]\label{def:reversibility}
  Let $(\Psi_t)_{t\ge 0}$ be a quantum Markov semigroup with generator $\LL$.
  \begin{enumerate}
      \item The semigroup $(\Psi_t)_{t\ge 0}$ is \emph{$\bsigma$-reversible} (also called KMS $\bsigma$-symmetric) if $\LL$ and each $\Psi_t$ are self-adjoint with respect to the KMS inner product $\spro \cdot \cdot$;
      \item The semigroup $(\Psi_t)_{t\ge 0}$ is \emph{strongly $\bsigma$-reversible} (also called GNS $\bsigma$-symmetric) if $\LL$ and each $\Psi_t$ are self-adjoint with respect to the GNS inner product $\langle \mat{X},\mat{Y}\rangle_{1,\bsigma}=\Tr[\bsigma \mat{X}^\dagger \mat{Y}]$.
  \end{enumerate}
\end{definition}
It is known that strongly $\bsigma$-reversible implies $\bsigma$-reversible, see e.g. \cite{carlen2017gradient}. The Lindblad generator of major interest in this work is the \textit{simple generator},
which is defined as $\LL_\bsigma(\mat{X})=\mat{X}-\Tr(\bsigma \mat{X})\id$.
The semigroup generated by the simple generator is the generalized depolarizing semigroup,
\begin{equation*}
     \romanE^{-t\LL_\bsigma}(\mat{A}) = \romanE^{-t}\mat{A} + (1-\romanE^{-t})\Tr\Br{\bsigma \mat{A}}\id,
\end{equation*}
whose Schr\"odinger-picture adjoints are the corresponding depolarizing channels.

\begin{definition}[Positive Off-Diagonal Scaling]\label{def:pods}
 Fix an orthonormal basis
  $\{\ket{i}\}_{i=1}^d$.
  We say that a linear map $\Psi:\Matrix_d\to\Matrix_d$ is \emph{\PODSname} (\PODS) with respect to this basis if there exist numbers $\gamma_{i,j} \ge 0$ for $i \neq j$ and $\eta_{i,j} \ge 0$ for $i, j$ such that
\begin{itemize}
    \item for all $i \neq j$,
    \(
    \Psi(\ketbratwo{i}{j}) = \gamma_{i,j}\ketbratwo{i}{j};
    \)
    \item for all $i$,
    \(
    \Psi(\ketbra{i}) = \sum_{j=1}^d \eta_{i,j}\ketbra{j}.
    \)
\end{itemize}
If $\Psi=\Psi_t$ is an element of a semigroup, we write the corresponding coefficients as $\gamma_{i,j}(t)$ and $\eta_{i,j}(t)$.
\end{definition}

For instance, if $\bsigma=\sum_{k=1}^d\sigma_k\ketbra{k}$ is diagonal in the
chosen basis, then the generalized depolarizing semigroup
$\Phi_t(\mat{A})=\romanE^{-t}\mat{A}+(1-\romanE^{-t})\Tr\Br{\bsigma \mat{A}}\id$ is \PODS\ with
\[
    \gamma_{i,j}(t)=\romanE^{-t},\qquad
    \eta_{i,j}(t)=\romanE^{-t}\delta_{ij}+(1-\romanE^{-t})\sigma_i.
\]
Further examples are listed in the Introduction, and
\cref{lem:structure-lemma-modular} below shows that the \PODS\ property also
follows from strong reversibility under a multiplicative-Sidon condition on the
reference state.

We can now define the \textit{tensorization} of generators.
Given two Lindblad generators $\LL$ and $\kk$ associated with the semigroups $\set{\Phi_t\colon t \ge 0}$ and $\set{\Psi_t\colon t \ge 0}$, respectively,
their tensor product is defined as $\LL \otimes \id + \id \otimes \kk$, which generates the semigroup $\set{\Phi_t \otimes \Psi_t\colon t \ge 0}$.
Furthermore, if we let
\[
\LL_i = \id^{\otimes i-1}\otimes \LL\otimes \id^{\otimes n-i}, \LL^{(n)} = \sum_{i=1}^n\LL_i,
\]
then we have $$\Phi_t^{\otimes n}=\romanE^{-t\sum_{i=1}^n\LL_i}=\romanE^{-t\LL^{(n)}}.$$
In this work, when $\bsigma$ is clear from context, we use $\LL_\bsigma$ to denote the simple generator,
while $\LL$ denotes a general Lindblad generator. We use $\Phi_t = \Delta_{\bsigma, \romanE^{-t}}$ to denote the generalized depolarizing semigroup, while $\Psi_t$ denotes a general semigroup generated by $\LL$. We will occasionally use the relation $\rho = \romanE^{-t}$ to pass between the equivalent time and noise parameters appearing in different conventions for $\DepBCh$ and $\Phi_t$.

It is known that $(\Phi_t)_{t\ge0}$ is primitive and that $\LL_\bsigma$ and
$\Phi_t$ are strongly $\bsigma$-reversible, hence also $\bsigma$-reversible.
The same reversibility properties hold for $\nll_\bsigma$, $\LL_i$, and
$\Phi_t^{\otimes n}$, and the product QMS $(\Phi_t^{\otimes n})_{t\ge0}$ is
also primitive; see \cite{beigiQuantumReverseHypercontractivity2020}. The
following standard facts will be useful in our proof:

\begin{lemma}\label{lem:complex-depolarize-2norm-invariant_noncom}
Suppose that $\LL$ is the $\bsigma$-reversible generator of $(\Psi_t)_{t\ge0}$, and denote by
$\LL^{(n)}$ the generator of $\Psi_t^{\otimes n}$. Then, for every $s\in\mathbb R$ and every operator $\mat{X}$,
\[
\tnn{\Psi_{\iu s}^{\otimes n}(\mat{X})} \equiv \tnn{\romanE^{-\iu s\LL^{(n)}}(\mat{X})} = \tnn{\mat{X}}.
\]
\end{lemma}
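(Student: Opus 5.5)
The plan is to show that $\romanE^{-\iu s\LL^{(n)}}$ is a unitary operator on the Hilbert space $L_2(\bsigma^{\otimes n})$, using only self-adjointness of $\LL^{(n)}$ with respect to the KMS inner product $\snpro{\cdot}{\cdot}$.

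First I check that $\LL^{(n)}$ inherits KMS self-adjointness from $\LL$. Since $\bsigma^{\otimes n}$ is a tensor product, the KMS inner product factorizes on elementary tensors:
\[
\snpro{\mat{X}_1\otimes\cdots\otimes\mat{X}_n}{\mat{Y}_1\otimes\cdots\otimes\mat{Y}_n}=\prod_{k}\spro{\mat{X}_k}{\mat{Y}_k}.
\]
Hence each $\LL_i=\id^{\otimes i-1}\otimes\LL\otimes\id^{\otimes n-i}$ is self-adjoint on elementary tensors. By sesquilinearity and the fact that elementary tensors span $B(\mathcal H^{\otimes n})$, each $\LL_i$ is self-adjoint on the whole space. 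Summing gives self-adjointness of $\LL^{(n)}$.

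Next I use that the space is finite dimensional, so $L_2(\bsigma^{\otimes n})$ is a finite-dimensional Hilbert space and $\LL^{(n)}$ is a Hermitian operator on it. By the spectral theorem there is an orthonormal basis, with respect to $\snpro{\cdot}{\cdot}$, of eigenvectors $\mat{E}_k$ of $\LL^{(n)}$ with real eigenvalues $\lambda_k$. Then
\[
\romanE^{-\iu s\LL^{(n)}}(\mat{E}_k)=\romanE^{-\iu s\lambda_k}\mat{E}_k .
\]
Expanding $\mat{X}=\sum_k c_k\mat{E}_k$ gives $\tnn{\romanE^{-\iu s\LL^{(n)}}\mat{X}}^2=\sum_k|\romanE^{-\iu s\lambda_k}c_k|^2=\sum_k|c_k|^2=\tnn{\mat{X}}^2$, since $|\romanE^{-\iu s\lambda_k}|=1$ for real $s,\lambda_k$.

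Alternatively, one can avoid the spectral theorem. Note that $(\romanE^{-\iu s\LL^{(n)}})^*=\romanE^{\iu s\LL^{(n)}}$, which follows from the power series and self-adjointness. Then $\snpro{\romanE^{-\iu s\LL^{(n)}}\mat{X}}{\romanE^{-\iu s\LL^{(n)}}\mat{X}}=\snpro{\mat{X}}{\mat{X}}$.

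Finally, $\romanE^{-t\LL^{(n)}}=\Psi_t^{\otimes n}$ for real $t$, and both sides are entire in $t$. So $\Psi_{\iu s}^{\otimes n}$ is, by definition, $\romanE^{-\iu s\LL^{(n)}}$, which identifies the two expressions in the statement.

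The only step requiring any care is the self-adjointness of $\LL^{(n)}$, specifically the factorization of the KMS inner product over the tensor product. This factorization holds because $(\bsigma^{\otimes n})^{1/2}=(\bsigma^{1/2})^{\otimes n}$ and the trace is multiplicative on tensor products. The rest is routine finite-dimensional linear algebra.
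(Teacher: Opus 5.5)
Your proposal is correct and follows the paper's argument: the paper also deduces KMS self-adjointness of $\LL^{(n)}=\sum_i\LL_i$ from that of each $\LL_i$, and then uses $(\romanE^{-\iu s\LL^{(n)}})^\dagger=\romanE^{\iu s\LL^{(n)}}$ exactly as in your ``alternative'' route, so your spectral-theorem version is only a cosmetic variant. Your factorization of the $\bsigma^{\otimes n}$-KMS inner product on elementary tensors is a useful addition, since it justifies a step the paper merely asserts, namely that each $\LL_i$ is self-adjoint.
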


\begin{proof}
This is a standard fact following from the fact that $\LL^{(n)}$ is self-adjoint with respect to the $\bsigma^{\otimes n}$-inner product.
Indeed, each $\LL_i$ is self-adjoint with respect to the $\bsigma^{\otimes n}$-inner product. Hence $\LL^{(n)}=\sum_{i=1}^n\LL_i$ inherits this property, and we have
\begin{align*}
\tnn{\Psi_{\iu s}^{\otimes n}(\mat{X})}^2
&=\snpro{\romanE^{-\iu s\LL^{(n)}}(\mat{X})}{\romanE^{-\iu s\LL^{(n)}}(\mat{X})}\\
&=\snpro{\mat{X}}{(\romanE^{-\iu s\LL^{(n)}})^\dagger \circ \romanE^{-\iu s\LL^{(n)}}(\mat{X})}\\
&= \snpro{\mat{X}}{\romanE^{\iu s\LL^{(n)}}\circ \romanE^{-\iu s\LL^{(n)}}(\mat{X})}\\
&= \snpro{\mat{X}}{\mat{X}}\\
&=\tnn{\mat{X}}^2.    \qedhere
\end{align*}
\end{proof}

\subsection{Sidon Sets and Multiplicative Sidon Spectrum}

\begin{definition}[Sidon Sets]
  Let $G$ be a group with group action $+$,
  and let $S$ be a subset of $G$.
  $S$ is a Sidon set (or Golomb ruler) if
  \begin{equation*}
      \forall x_i, x_j, x_k, x_\ell\in S: x_i + x_j = x_k + x_\ell \implies \set{x_i, x_j} = \set{x_k, x_\ell}.
  \end{equation*}
    Informally, this means that the sums of distinct pairs of elements in $S$ are all distinct.
\end{definition}

In this work, we consider $\bsigma$ with eigenvalues that form Sidon sets under multiplication.
\begin{definition}[Multiplicative Sidon Spectrum]\label{def:multiplicative-sidon-spectrum}
  Let $\bsigma$ be a full-rank density operator.
  We say that $\bsigma$ has a multiplicative Sidon spectrum if the eigenvalues of $\bsigma$
  are all distinct and form a Sidon set under multiplication.
\end{definition}

\begin{lemma}\label{lem:structure-lemma-modular}
Assume that $\bsigma$ is a full-rank density operator with a multiplicative Sidon spectrum.
If $(\Psi_t)_{t\ge0}$ is a strongly $\bsigma$-reversible quantum Markov semigroup, then
$\Psi_t$ is \PODS\ with respect to the eigenbasis of $\bsigma$ for every $t\ge0$.
\end{lemma}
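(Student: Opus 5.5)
Work in an eigenbasis $\{\ket{i}\}$ of $\bsigma=\sum_i\sigma_i\ketbra{i}$. The aim is to show that $\Psi_t$ maps each matrix unit $\ketbratwo{i}{j}$ into a fixed subspace determined by the ``modular frequency'' $\sigma_i/\sigma_j$, and then use the Sidon property to pin that subspace down.

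\textbf{Step 1: $\Psi_t$ commutes with the modular operator.} Strong $\bsigma$-reversibility means $\Psi_t$ is self-adjoint for the GNS inner product $\langle \mat{X},\mat{Y}\rangle_{1,\bsigma}=\Tr[\bsigma\mat{X}^\dagger\mat{Y}]$. It is also self-adjoint for the KMS inner product, since strong reversibility implies reversibility. Write the KMS product as $\langle \mat{X},\Delta^{1/2}\mat{Y}\rangle_{1,\bsigma}$-type expressions, where $\Delta(\mat{Y})=\bsigma\mat{Y}\bsigma^{-1}$. Alternatively, use the standard fact (cf.\ \cite{carlen2017gradient}) that GNS-symmetric QMS commute with the modular group $\mat{X}\mapsto\bsigma^{\iu s}\mat{X}\bsigma^{-\iu s}$. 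I would include a short derivation. Self-adjointness with respect to both forms $\Tr[\bsigma \mat{X}^\dagger\mat{Y}]$ and $\Tr[\mat{X}^\dagger\bsigma^{1/2}\mat{Y}\bsigma^{1/2}]$, combined with $\Psi_t$ being $\dagger$-preserving (it is CP), gives $\Psi_t\circ\Delta=\Delta\circ\Psi_t$. Hence $\Psi_t$ preserves each eigenspace $V_\lambda=\mathrm{span}\{\ketbratwo{k}{l}:\sigma_k/\sigma_l=\lambda\}$ of $\Delta$.

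\textbf{Step 2: Sidon property isolates the matrix units.} For $i\ne j$, suppose $\sigma_k/\sigma_l=\sigma_i/\sigma_j$. Then $\sigma_k\sigma_j=\sigma_i\sigma_l$, and the multiplicative Sidon property forces $\{k,j\}=\{i,l\}$. Since $i\neq j$, this gives $k=i$ and $l=j$; the alternative $k=l$ would give $\sigma_i=\sigma_j$, which is excluded because the eigenvalues are distinct. Thus $V_{\sigma_i/\sigma_j}=\mathrm{span}\{\ketbratwo{i}{j}\}$, and $\Psi_t(\ketbratwo{i}{j})=\gamma_{i,j}(t)\ketbratwo{i}{j}$ for some $\gamma_{i,j}(t)\in\mathbb{C}$. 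The eigenvalue-$1$ space is $V_1=\mathrm{span}\{\ketbra{k}\}$, the diagonal algebra, because distinctness rules out $\sigma_k=\sigma_l$ for $k\ne l$. Hence $\Psi_t(\ketbra{i})=\sum_j\eta_{i,j}(t)\ketbra{j}$. Positivity of $\Psi_t$ gives $\eta_{i,j}(t)\ge0$.

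\textbf{Step 3: nonnegativity of $\gamma_{i,j}(t)$.} This is the main obstacle, because the definition of \PODS\ requires $\gamma_{i,j}\ge0$, not merely complex. I would first use the semigroup structure. Differentiating at $t=0$, the generator satisfies $\LL(\ketbratwo{i}{j})=c_{ij}\ketbratwo{i}{j}$, so $\gamma_{i,j}(t)=\romanE^{-tc_{ij}}$. GNS self-adjointness of $\LL$ gives $c_{ij}=\overline{c_{ji}}$. The $\dagger$-preservation gives $\LL(\ketbratwo{j}{i})=\overline{c_{ij}}\ketbratwo{j}{i}$, so $c_{ji}=\overline{c_{ij}}$. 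This is consistent with the first relation but does not yet force $c_{ij}$ to be real.

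To get reality I would use KMS-symmetry of $\LL$ in addition to GNS-symmetry. Compute
\[
\langle \ketbratwo{i}{j},\LL(\ketbratwo{i}{j})\rangle_{\bsigma}=c_{ij}\sigma_i^{1/2}\sigma_j^{1/2}.
\]
Symmetry then gives $c_{ij}=\overline{c_{ij}}$, so $c_{ij}\in\mathbb R$. Consequently $\gamma_{i,j}(t)=\romanE^{-tc_{ij}}>0$. A GNS computation of the same kind works as well, since the form $\langle \mat{X},\LL\mat{X}\rangle_{1,\bsigma}$ is real for self-adjoint $\LL$.

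Putting these together, $\Psi_t$ is \PODS\ in the eigenbasis of $\bsigma$ for every $t\ge0$.
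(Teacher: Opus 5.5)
Your proof is correct and follows the paper's argument: commutation of $\Psi_t$ with the modular operator, the multiplicative Sidon property and distinct eigenvalues forcing one-dimensional off-diagonal eigenspaces with the diagonal algebra as the ratio-$1$ eigenspace, and positivity of $\Psi_t$ giving $\eta_{i,j}(t)\ge0$. The only variation is the sign step. You obtain $\gamma_{i,j}(t)=\romanE^{-tc_{ij}}>0$ from a real eigenvalue $c_{ij}$ of $\LL$. The paper instead gets $\gamma_{i,j}(t)\in\mathbb{R}$ from GNS symmetry of $\Psi_t$ and then writes $\gamma_{i,j}(t)=\gamma_{i,j}(t/2)^2\ge0$. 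Note that GNS symmetry by itself already gives $c_{ij}\sigma_j=\overline{c_{ij}}\sigma_j$, hence $c_{ij}\in\mathbb{R}$; your intermediate claim that it only yields $c_{ij}=\overline{c_{ji}}$ is a misstatement, though harmless.
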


\begin{proof}[Proof of \cref{lem:structure-lemma-modular}]
Without loss of generality, we assume $\bsigma$ is a diagonal density operator.
Fix $t \ge 0$, and write $\mat{E}_{ij} := \ketbratwo{i}{j}$.
Since $(\Psi_t)_{t \ge 0}$ is strongly $\bsigma$-reversible, by the standard characterization of GNS symmetry (\cite[Lemma 12]{beigiQuantumReverseHypercontractivity2020}), $\Psi_t$ is KMS-symmetric and commutes with the modular automorphism group:
\[
\Psi_t \circ \Delta_\bsigma^{\iu s} = \Delta_\bsigma^{\iu s} \circ \Psi_t
\qquad
\text{for all } s \in \mathbb{R},
\]
where $\Delta_\bsigma(\mat{X}) = \bsigma \mat{X} \bsigma^{-1}$.

Because $\bsigma$ is diagonal, each matrix unit $\mat{E}_{ij}$ is an eigenvector of $\Delta_\bsigma$:
\[
\Delta_\bsigma^{\iu s}(\mat{E}_{ij}) = \Bigl(\frac{\lambda_i}{\lambda_j}\Bigr)^{\iu s} \mat{E}_{ij}.
\]
Hence, for $i \neq j$, the vector $\mat{E}_{ij}$ belongs to the eigenspace corresponding to the character $s \mapsto (\lambda_i/\lambda_j)^{\iu s}$.
The multiplicative Sidon condition implies that the off-diagonal ratio spectrum is simple. Indeed, if
\[
    \lambda_i/\lambda_j=\lambda_k/\lambda_\ell,\qquad i\ne j,\ k\ne \ell,
\]
then \(\lambda_i\lambda_\ell=\lambda_k\lambda_j\), and by the Sidon property
\(\{i,\ell\}=\{k,j\}\). Since \(i\ne j\) and \(k\ne \ell\), this forces
\((i,j)=(k,\ell)\).
Thus this eigenspace is one-dimensional.

Since $\Psi_t$ commutes with $\Delta_\bsigma^{\iu s}$ for every $s$, it preserves each such eigenspace.
Therefore, for every $i \neq j$, there exists a scalar $\gamma_{i,j}(t)$ such that
\[
\Psi_t(\mat{E}_{ij}) = \gamma_{i,j}(t) \mat{E}_{ij}.
\]

For the diagonal part, note that $\Delta_\bsigma^{\iu s}(\mat{E}_{ii}) = \mat{E}_{ii}$ for every $i$ and $s$.
Thus the fixed-point space of $\Delta_\bsigma^{\iu s}$ is exactly the diagonal subalgebra
\[
\mathrm{Diag} := \mathrm{span}\set{\mat{E}_{11}, \dots, \mat{E}_{dd}}.
\]
Since $\Psi_t$ commutes with $\Delta_\bsigma^{\iu s}$, it preserves this subspace.
Hence for each $i$ there exist coefficients $\eta_{i,j}(t)$ such that
\[
\Psi_t(\mat{E}_{ii}) = \sum_{j=1}^d \eta_{i,j}(t) \mat{E}_{jj}.
\]

We next show that $\gamma_{i,j}(t)$ is real.
Since $\Psi_t$ is GNS self-adjoint with respect to $\langle \mat{X},\mat{Y}\rangle_{1,\bsigma} = \Tr[\bsigma \mat{X}^\dagger \mat{Y}]$, we have
\[
\langle \Psi_t(\mat{E}_{ij}), \mat{E}_{ij}\rangle_{1,\bsigma}
=
\langle \mat{E}_{ij}, \Psi_t(\mat{E}_{ij})\rangle_{1,\bsigma}.
\]
Substituting $\Psi_t(\mat{E}_{ij}) = \gamma_{i,j}(t) \mat{E}_{ij}$ gives
\[
\overline{\gamma_{i,j}(t)} \, \langle \mat{E}_{ij}, \mat{E}_{ij}\rangle_{1,\bsigma}
=
\gamma_{i,j}(t) \, \langle \mat{E}_{ij}, \mat{E}_{ij}\rangle_{1,\bsigma}.
\]
Since $\langle \mat{E}_{ij}, \mat{E}_{ij}\rangle_{1,\bsigma} = \lambda_j > 0$, it follows that $\gamma_{i,j}(t) \in \mathbb{R}$.

We now prove $\gamma_{i,j}(t) \ge 0$.
Applying the above argument to $\Psi_{t/2}$, we obtain a real number $\gamma_{i,j}(t/2)$ such that
\[
\Psi_{t/2}(\mat{E}_{ij}) = \gamma_{i,j}(t/2) \mat{E}_{ij}.
\]
Using the semigroup property,
\[
\Psi_t(\mat{E}_{ij})
=
\Psi_{t/2}\bigl(\Psi_{t/2}(\mat{E}_{ij})\bigr)
=
\gamma_{i,j}(t/2)^2 \mat{E}_{ij}.
\]
Therefore $\gamma_{i,j}(t) = \gamma_{i,j}(t/2)^2 \ge 0$.

Finally, since $\Psi_t$ is a positive map and each $\mat{E}_{ii}$ is positive semidefinite, the operator $\Psi_t(\mat{E}_{ii})$ is also positive semidefinite.
As $\Psi_t(\mat{E}_{ii})$ is diagonal, all its diagonal entries must be nonnegative.
Hence $\eta_{i,j}(t) \ge 0$ for all $i,j$.
This completes the proof.
\end{proof}

\begin{lemma}\label{lem:pods-upgrades-reversibility}
Let $\bsigma=\sum_{i=1}^d\sigma_i\ketbra{i}\succ0$, and let
$\Psi:\Matrix_d\to\Matrix_d$ be self-adjoint with respect to the KMS
$\bsigma$-inner product. If $\Psi$ is \PODS\ with respect to the displayed
eigenbasis of $\bsigma$, then $\Psi$ is self-adjoint with respect to the GNS
inner product $\langle\cdot,\cdot\rangle_{1,\bsigma}$. Consequently, every
$\bsigma$-reversible QMS $(\Psi_t)_{t\ge0}$ such that $\Psi_t$ is \PODS\ with
respect to an eigenbasis of $\bsigma$ for every $t\ge0$ is strongly
$\bsigma$-reversible.
\end{lemma}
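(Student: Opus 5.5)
Since $\Psi$ is \PODS, every matrix unit $\mat{E}_{ij}=\ketbratwo{i}{j}$ is mapped into a controlled subspace, so it suffices to verify GNS self-adjointness on pairs of matrix units:
\[
\langle \mat{E}_{kl},\Psi(\mat{E}_{ij})\rangle_{1,\bsigma}=\langle \Psi(\mat{E}_{kl}),\mat{E}_{ij}\rangle_{1,\bsigma}\quad\text{for all } i,j,k,l.
\]
Both inner products are sesquilinear, so this extends to all of $\Matrix_d$ by linearity. The plan is to write both the KMS and GNS inner products explicitly on matrix units. Since $\bsigma$ is diagonal, we have
\[
\langle \mat{E}_{kl},\mat{E}_{ij}\rangle_{\bsigma}=\Tr[\mat{E}_{lk}\bsigma^{1/2}\mat{E}_{ij}\bsigma^{1/2}]=\delta_{ki}\delta_{lj}\sigma_i^{1/2}\sigma_j^{1/2}
\]
and
\[
\langle \mat{E}_{kl},\mat{E}_{ij}\rangle_{1,\bsigma}=\Tr[\bsigma\mat{E}_{lk}\mat{E}_{ij}]=\delta_{ki}\delta_{lj}\sigma_l .
\]
So both forms are diagonal in the matrix-unit basis, with weights $\sqrt{\sigma_i\sigma_j}$ and $\sigma_j$ respectively.

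\emph{Off-diagonal units.} Take $i\neq j$. By \PODS, $\Psi(\mat{E}_{ij})=\gamma_{i,j}\mat{E}_{ij}$, so both sides of the GNS identity vanish unless $(k,l)=(i,j)$. In that case the identity reads $\gamma_{i,j}\sigma_j=\overline{\gamma_{i,j}}\sigma_j$, which holds because $\gamma_{i,j}\ge0$ is real. For mixed pairs, with one unit diagonal and the other off-diagonal, \PODS\ gives that $\Psi$ of a diagonal unit is diagonal and $\Psi$ of an off-diagonal unit is a multiple of itself. Orthogonality of matrix units then makes both sides zero.

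\emph{Diagonal units.} Here we use the KMS hypothesis. On diagonal units the KMS and GNS weights coincide, since $\sqrt{\sigma_i\sigma_i}=\sigma_i$. KMS self-adjointness gives
\[
\langle \mat{E}_{kk},\Psi(\mat{E}_{ii})\rangle_{\bsigma}=\eta_{i,k}\sigma_k=\langle\Psi(\mat{E}_{kk}),\mat{E}_{ii}\rangle_{\bsigma}=\overline{\eta_{k,i}}\sigma_i .
\]
The GNS computation yields exactly the same two expressions, namely $\eta_{i,k}\sigma_k$ and $\eta_{k,i}\sigma_i$ (the $\eta$'s are real and nonnegative). So detailed balance $\eta_{i,k}\sigma_k=\eta_{k,i}\sigma_i$ transfers directly from KMS to GNS. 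This shows $\Psi$ is GNS self-adjoint.

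\emph{Semigroup statement.} Apply the above to each $\Psi_t$, which is KMS self-adjoint and \PODS\ by assumption. For the generator, note that $\LL=\lim_{t\to0}(\mathrm{id}-\Psi_t)/t$, and GNS self-adjointness is preserved under linear combinations and limits. Hence $\LL$ is also GNS self-adjoint, which is strong $\bsigma$-reversibility in the sense of \cref{def:reversibility}.

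The only real content is the diagonal-block comparison, and that is immediate once both inner products are written out. The main care needed is bookkeeping of complex conjugates and index order in $\Tr[\bsigma\mat{X}^\dagger\mat{Y}]$, so that the off-diagonal GNS weight is correctly identified as $\sigma_j$.
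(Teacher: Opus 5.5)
Your proof is correct and follows essentially the same route as the paper. Both compute the KMS and GNS forms on matrix units, use that the diagonal subspace and the lines $\mathbb{C}\mat{E}_{ij}$ ($i\neq j$) are mutually orthogonal for both forms, transfer KMS symmetry on the diagonal block (where the two forms agree), use reality of $\gamma_{i,j}$ off the diagonal, and pass to the generator as a limit. The only difference is that you verify the identity entrywise on pairs of matrix units rather than blockwise on subspaces, which is cosmetic.
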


\begin{proof}
Write $\mat{E}_{ij}=\ketbratwo{i}{j}$. In the chosen eigenbasis,
\[
  \langle \mat{E}_{ij},\mat{E}_{k\ell}\rangle_{\bsigma}
  =\sqrt{\sigma_i\sigma_j}\,\delta_{ik}\delta_{j\ell},
  \qquad
  \langle \mat{E}_{ij},\mat{E}_{k\ell}\rangle_{1,\bsigma}
  =\sigma_j\,\delta_{ik}\delta_{j\ell}.
\]
Thus the diagonal subspace and the one-dimensional off-diagonal subspaces
$\mathbb{C}\mat{E}_{ij}$, $i\ne j$, are mutually orthogonal for both inner
products. On the diagonal subspace the two inner products coincide. By the
\PODS\ property, $\Psi$ preserves the diagonal subspace and acts on every
$\mathbb{C}\mat{E}_{ij}$, $i\ne j$, as multiplication by a real number
$\gamma_{i,j}\ge0$. Hence KMS self-adjointness on the diagonal subspace is
exactly GNS self-adjointness there, while the action on each off-diagonal
one-dimensional subspace is GNS self-adjoint. Therefore $\Psi$ is GNS
self-adjoint.

Applying this argument to every $\Psi_t$ shows that the entire semigroup is
GNS self-adjoint. Finally,
$\LL=-\left.\frac{\mathrm d}{\mathrm dt}\Psi_t\right|_{t=0}$ is GNS
self-adjoint as well.
\end{proof}

\begin{remark}[Comparison of the two directions]
The preceding two lemmata give implications in opposite directions:
\[
\begin{aligned}
\substack{\text{strongly $\bsigma$-reversible QMS}\\
          {}+\text{ multiplicative Sidon spectrum of $\bsigma$}}
&\Longrightarrow \text{\PODS\ in the eigenbasis of $\bsigma$},\\
\substack{\text{$\bsigma$-reversible QMS}\\
          {}+\text{ \PODS\ in an eigenbasis of $\bsigma$}}
&\Longrightarrow \text{strongly $\bsigma$-reversible QMS}.
\end{aligned}
\]
The spectral assumption appears only in the first implication.  Thus, when
$\bsigma$ has a multiplicative Sidon spectrum, a QMS is strongly
$\bsigma$-reversible if and only if it is both $\bsigma$-reversible and
\PODS\ with respect to the eigenbasis of $\bsigma$.
\end{remark}

\begin{remark}
By \cref{lem:structure-lemma-modular}, whenever $\bsigma$ has a multiplicative
Sidon spectrum, every strongly $\bsigma$-reversible QMS is \PODS.
In particular, when $d=2$, any full-rank state $\bsigma\neq \id/2$ automatically has a multiplicative Sidon spectrum.

Let us spell out a concrete noncommutative birth-death example covered by
\cref{lem:structure-lemma-modular}, following~\cite[Section~5.3]{gao2024completepositivityorder}.
Take the path graph on $V=\{1,\ldots,d\}$ and fix $\beta>0$ and
\[
    \bsigma=\sum_{k=1}^d \mu_k \mat{E}_{kk},
    \qquad
    \mu_k=Z^{-1}\romanE^{-\beta 2^k}.
\]
For each edge $(r,s)$, set $\romanE^{\beta_{rs}}=\mu_s/\mu_r$ and define
\[
    \LL_{rs}(\mat{X})
    =
    \romanE^{\beta_{rs}/2}(\mat{E}_{ss}\mat{X}+\mat{X}\mat{E}_{ss}-2\mat{E}_{sr}\mat{X}\mat{E}_{rs})
    +
    \romanE^{-\beta_{rs}/2}(\mat{E}_{rr}\mat{X}+\mat{X}\mat{E}_{rr}-2\mat{E}_{rs}\mat{X}\mat{E}_{sr}).
\]
Then $\LL_{\mathrm{BD}}=\sum_{k=1}^{d-1}\LL_{k,k+1}$ is strongly
$\bsigma$-reversible. Moreover, the diagonal algebra is invariant and evolves by
a classical birth-death chain, while for $i\neq j$ each matrix unit $\mat{E}_{ij}$ is an
eigenvector of $\LL_{\mathrm{BD}}$. Hence the semigroup generated by
$\LL_{\mathrm{BD}}$ is \PODS. Finally, the spectrum of $\bsigma$ is multiplicative
Sidon: if $\mu_i\mu_j=\mu_k\mu_\ell$, then
$2^i+2^j=2^k+2^\ell$, and uniqueness of binary expansion gives
$\{i,j\}=\{k,\ell\}$.

We emphasize, however, that the multiplicative Sidon spectrum is clearly not necessary for \PODS. Indeed, the generalized depolarizing channel is \PODS\ for any diagonal $\bsigma$, without requiring its spectrum to be multiplicative Sidon.
Finally, in the sequel, the \PODS\ property will be the key structural property used to prove the tensorization of integer hypercontractivity in \cref{thm:integer_hc_noncom}.

\end{remark}

\subsection{Logarithmic-Sobolev Inequality}\label{sec:LSI}
Let $p\in [1,\infty]$ and $\frac{1}{p}+\frac{1}{\hat p}=1$.
For a primitive $\bsigma$-reversible Lindbladian $\LL$, the \textit{$p$-Dirichlet form} is defined as\footnote{Similarly, the definition of the Dirichlet form here differs from that in~\cite{TemmeRapidMixing2013} by an additional factor of $-p/2$.}
\[
\diri{p}{\mat{X}}=\frac{p\hat p}{4}\spro{\iqp{\hat{p}}{p}{\mat{X}}}{\mathcal{L}(\mat{X})}.
\]
We say that $\LL$ satisfies the \textit{$p$-logarithmic-Sobolev} inequality if there is an $\alpha>0$ such that
\[
\alpha\ \ent p {\mat{X}}\leq \diri{p}{\mat{X}},\forall \mat{X} \succ 0.
\]
The best constant $\alpha$ is called the \textit{$p$-logarithmic-Sobolev inequality} ($p$-LSI) constant, denoted \footnote{The definition of the constant used here also differs from the conventional one by a factor of $1/2$. We have considered this difference when comparing our results with those in~\cite{Temme2014Hypercontractivity}.} by $\alpha_p(\LL)$:
\begin{align}\label{def:LSI}
\alpha_p(\LL)\coloneqq \inf_{\mat{X}\succ 0, \ent p {\mat{X}}\neq 0}{\frac{\diri{p}{\mat{X}}}{\ent p {\mat{X}}}}.
\end{align}

There is an equivalence between the logarithmic-Sobolev inequality and hypercontractivity. To be concrete, if the $p$-logarithmic-Sobolev constant admits a uniform lower bound for all $p>1$, one can derive hypercontractivity as follows:

\begin{theorem}[\cite{beigiQuantumReverseHypercontractivity2020,olkiewiczHypercontractivityNoncommutativeLpSpaces1999}]\label{thm:logarithmic-Sobolev-and-hypercontractivity-equivalent}
For a primitive $\bsigma$-reversible Lindblad generator $\LL$, assume that
$\alpha:=\inf_{r\ge1}\alpha_r(\LL)>0$. If $1\le p\le q$ and
\[
    t\geq \frac{1}{4\alpha}\ln{\frac{q-1}{p-1}},
\]
then the generated QMS $(\Psi_t)_{t\ge0}$ satisfies
\[
    \qn{\Psi_t(\mat{X})}\leq \pn{\mat{X}},\qquad \mat{X}\succeq0.
\]
\end{theorem}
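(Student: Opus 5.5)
The plan is Gross's classical argument, adapted to the $\bsigma$-weighted norms. Fix $\mat{X}\succeq 0$ and $1<p\le q$; the case $p=1$ is degenerate, since then the required time is infinite. By continuity of $\mat{X}\mapsto \qn{\Psi_t(\mat{X})}$ and $\mat{X}\mapsto\pn{\mat{X}}$, we may replace $\mat{X}$ by $\mat{X}+\varepsilon\id$ and let $\varepsilon\to0$, so we may assume $\mat{X}\succ0$. Then $\mat{X}_t:=\Psi_t(\mat{X})\succeq \lambda_{\min}(\mat{X})\id\succ0$ for all $t$, because $\Psi_t$ is positive and unital. Introduce the increasing exponent
\[
p(t)=1+(p-1)\romanE^{4\alpha t},
\]
and set $N(t):=\norm{\mat{X}_t}_{\bsigma,p(t)}$. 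The goal is to show that $N$ is non-increasing on $[0,\infty)$.

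To do this, differentiate $N$ using \cref{lem:derivative_p_norm} for positive definite arguments. The operator $\mat{X}_t$ depends on $t$ through $\frac{d}{dt}\mat{X}_t=-\LL(\mat{X}_t)$, and the exponent through $p'(t)=4\alpha(p(t)-1)$. The chain rule gives two contributions:
\begin{itemize}
\item the entropy term $\frac{p'}{p^2}N^{1-p}\ent{p}{\mat{X}_t}$;
\item the term coming from the motion of $\mat{X}_t$. Using the identity $\Tr\bigl[\gt{1/p}{\mat{Z}}\,(\gt{1/p}{\mat{X}})^{p-1}\bigr]=\spro{\iqp{\hat p}{p}{\mat{X}}}{\mat{Z}}$, valid for $\mat{X}\succ0$ and Hermitian $\mat{Z}$, this term equals $-N^{1-p}\spro{\iqp{\hat p}{p}{\mat{X}_t}}{\LL(\mat{X}_t)}$.
\end{itemize}
By the definition of the Dirichlet form, the second term is $-\frac{4}{p\hat p}N^{1-p}\diri{p}{\mat{X}_t}$. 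Hence $N'(t)\le0$ is equivalent to
\[
\frac{p'}{p^2}\,\ent{p}{\mat{X}_t}\le \frac{4(p-1)}{p^2}\,\diri{p}{\mat{X}_t},
\]
that is, to $\alpha\,\ent{p(t)}{\mat{X}_t}\le \diri{p(t)}{\mat{X}_t}$. This holds because $\alpha\le\alpha_{p(t)}(\LL)$ by hypothesis.

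The main obstacle is this derivative bookkeeping. One has to verify the identification of the trace term with the KMS pairing against $\iqp{\hat p}{p}{\cdot}$, including the powers of $\bsigma$ hidden in $\Gamma_{\bsigma}$. One also has to check that the factor normalisations of $\mathrm{Ent}$ and $\mathcal{E}$ chosen in the paper make the constant exactly $4\alpha$. I would do this carefully once, using the third formula of \cref{lem:derivative_p_norm} for $\partial_p\pn{\mat{X}}^p$ together with the elementary formula for the derivative in $\mat{X}$ of $\Tr[(\gt{1/p}{\mat{X}})^p]$.

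Given monotonicity, $\norm{\Psi_t(\mat{X})}_{\bsigma,p(t)}\le N(0)=\pn{\mat{X}}$ for every $t\ge0$. The hypothesis on $t$ is exactly $p(t)\ge q$. It remains to use that $r\mapsto\norm{\mat{Y}}_{\bsigma,r}$ is non-decreasing for $\mat{Y}\succeq0$, which holds because $\bsigma$ is a state; this follows from \Holder's inequality for Kosaki norms, or from complex interpolation of the family $L_r(\bsigma)$. It gives $\qn{\Psi_t(\mat{X})}\le\norm{\Psi_t(\mat{X})}_{\bsigma,p(t)}\le\pn{\mat{X}}$. Alternatively, stop the flow at $t_0$ with $p(t_0)=q$ and apply $q$-contractivity from \cref{prop:QMS} for the remaining time $t-t_0$; this avoids the monotonicity-in-$r$ fact altogether.
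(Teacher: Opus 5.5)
Your proposal is correct and is the standard Gross differentiation argument used in the works the paper cites for this result; the paper gives no proof of its own, and its proof of the converse, \cref{thm:hc_to_lsi}, runs the same derivative computation in reverse, where with the paper's normalizations one gets $N'(t)=\frac{4(p(t)-1)}{p(t)^2}N(t)^{1-p(t)}\bigl(\alpha\,\ent{p(t)}{\mat{X}_t}-\diri{p(t)}{\mat{X}_t}\bigr)$, as you found. The one step you leave implicit, that the infimum $\alpha_{p}$ gives $\alpha\,\mathrm{Ent}\le\mathcal{E}$ for every $\mat{X}\succ0$, needs $\mathrm{Ent}_{p,\bsigma}(\mat{X})\ge0$ with equality only when $\mat{X}\propto\id$ (a relative-entropy fact, and there $\LL(\mat{X})=0$, so $\mathcal{E}=0$); either of your two endings then completes the proof.
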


In general, it is difficult to determine the logarithmic-Sobolev constant, or even to establish the positivity of~$\alpha$. Fortunately, Beigi et al.~\cite{beigiQuantumReverseHypercontractivity2020} showed that the logarithmic-Sobolev constant of the simple generator $\LL_\bsigma$ is monotone decreasing with respect to $p\in [1,2]$, which implies that the constant $\alpha$ for the simple generator $\LL_\bsigma$ equals $\alpha_2(\LL_\bsigma)$:

\begin{lemma}[{\cite[Theorem 14]{beigiQuantumReverseHypercontractivity2020}}]\label{lem:strook-varopoulos_non_com}
    For a strongly $\bsigma$-reversible Lindblad generator $\LL$ and every $\mat{X}\succ0$,
    \[
    \mathcal{E}_{\hat p}\bigl(I_{\hat p,2}(\mat{X})\bigr)
    = \mathcal{E}_{p}\bigl(I_{p,2}(\mat{X})\bigr)
    \ge \mathcal{E}_{q}\bigl(I_{q,2}(\mat{X})\bigr),
    \qquad \forall\, 0 < p \le q \le 2,
    \]
    where $\hat{p}^{-1} + p^{-1} = 1$. If, in addition, $\LL$ is primitive,
    then $\alpha_2(\LL) = \inf_{p\geq 1}\alpha_p(\LL)$.
\end{lemma}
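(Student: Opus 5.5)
The plan is to reduce the operator inequality to a two-variable scalar inequality, using the structure that strong $\bsigma$-reversibility imposes on $\LL$, and then to derive the statement about $\alpha_2$ by a change of variables.

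\textbf{Step 1: structure of $\LL$.} By GNS symmetry (\cite[Lemma 12]{beigiQuantumReverseHypercontractivity2020}, as used in the proof of \cref{lem:structure-lemma-modular}), $\LL$ commutes with the modular group $\Delta_\bsigma^{\iu s}$. One can therefore choose a Lindblad form
\[
\LL(\mat{X})=\sum_j \Bigl(\tfrac12\{\mat{L}_j^\dagger \mat{L}_j,\mat{X}\}-\mat{L}_j^\dagger \mat{X}\mat{L}_j\Bigr),
\qquad \bsigma \mat{L}_j\bsigma^{-1}=\romanE^{-\omega_j}\mat{L}_j ,
\]
in which the jump operators come in adjoint pairs $\{\mat{L}_j,\mat{L}_j^\dagger\}$ with frequencies $\pm\omega_j$ and weights matched by detailed balance.

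\textbf{Step 2: reduction to a scalar sum.} Fix $\mat{X}\succ0$ and put $\mat{Y}=\gt{1/2}{\mat{X}}\succ0$, with spectral decomposition $\mat{Y}=\sum_a y_a \mat{P}_a$. Then $I_{p,2}(\mat{X})=\gt{-1/p}{\mat{Y}^{2/p}}$ and $I_{\hat p,p}(I_{p,2}(\mat{X}))=\gt{-1/\hat p}{\mat{Y}^{2/\hat p}}$. Substituting into $\diri{p}{I_{p,2}(\mat{X})}=\frac{p\hat p}{4}\spro{\cdot}{\LL(\cdot)}$ and using the modular covariance of the $\mat{L}_j$ (which allows every power of $\bsigma$ to be moved through the $\mat{L}_j$ at the cost of scalars $\romanE^{\pm\omega_j/2}$), I expect to obtain
\[
\mathcal{E}_p\bigl(I_{p,2}(\mat{X})\bigr)=\sum_j\sum_{a,b} \Tr\bigl[\mat{P}_a \mat{L}_j^\dagger \mat{P}_b \mat{L}_j\bigr]\; \romanE^{c\,\omega_j}\; F_p\bigl(\romanE^{\omega_j/4}y_a,\ \romanE^{-\omega_j/4}y_b\bigr),
\]
where $c$ is a fixed constant (a power of $\romanE^{\omega_j}$ that depends only on $j$, not on $p$) and
\[
F_p(x,y)=\frac{p\hat p}{4}\bigl(x^{2/p}-y^{2/p}\bigr)\bigl(x^{2/\hat p}-y^{2/\hat p}\bigr).
\]
The coefficients $\Tr[\mat{P}_a\mat{L}_j^\dagger \mat{P}_b\mat{L}_j]$ are nonnegative and independent of $p$.

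\textbf{Step 3: the two claims about $\mathcal{E}_p$.} Since $2/p+2/\hat p=2$, the function $F_p$ is symmetric under $p\leftrightarrow\hat p$. This gives the equality $\mathcal{E}_{\hat p}(I_{\hat p,2}(\mat{X}))=\mathcal{E}_p(I_{p,2}(\mat{X}))$ immediately. For the inequality it then suffices to prove the scalar Stroock--Varopoulos inequality: for fixed $x,y>0$, the map $p\mapsto F_p(x,y)$ is nonincreasing on $(0,2]$. Writing $x=\romanE^{u}$, $y=\romanE^{v}$ and $r=1/p$, this becomes monotonicity of
\[
\frac{\sinh\bigl(r(u-v)\bigr)\,\sinh\bigl((1-r)(u-v)\bigr)}{r(1-r)}
\]
for $r\in[1/2,\infty)$, with the sign conventions taken care of for $r>1$. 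I would verify this by a log-convexity argument on $\sinh(rs)/r$, or by expanding in $s=u-v$. This scalar step is the main obstacle: the outcome of Step 2 must be exactly of the form $F_p$ composed with the frequency rescaling, with the same rescaling for every $p$, and this is where strong (rather than merely KMS) reversibility is essential.

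\textbf{Step 4: consequence for the log-Sobolev constants.} Assume $\LL$ is primitive. The map $\mat{X}\mapsto I_{p,2}(\mat{X})$ is a bijection on positive definite operators, and $\ent{p}{I_{p,2}(\mat{X})}=\ent{2}{\mat{X}}$. Hence
\[
\alpha_p(\LL)=\inf_{\mat{X}}\frac{\mathcal{E}_p\bigl(I_{p,2}(\mat{X})\bigr)}{\ent{2}{\mat{X}}} .
\]
For $p\in[1,2]$, Step 3 gives $\mathcal{E}_p(I_{p,2}(\mat{X}))\ge\mathcal{E}_2(\mat{X})$ for every $\mat{X}$, so $\alpha_p\ge\alpha_2$. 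For $p\ge2$, the conjugate exponent satisfies $\hat p\in[1,2]$, and the duality equality gives $\alpha_p=\alpha_{\hat p}\ge\alpha_2$. Since the value $\alpha_2$ is attained at $p=2$, we conclude $\inf_{p\ge1}\alpha_p(\LL)=\alpha_2(\LL)$.
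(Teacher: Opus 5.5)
Your proposal is correct and follows the route of the cited source \cite[Theorem~14]{beigiQuantumReverseHypercontractivity2020}, which the paper invokes without proof: the Alicki-type structure of the GNS-symmetric generator, a spectral reduction to a nonnegative, $p$-independent combination of $F_p$ at rescaled positive points, and the scalar Stroock--Varopoulos monotonicity, for which your log-convexity route works because $\bigl(\ln(\sinh r/r)\bigr)''=r^{-2}-\sinh^{-2}r>0$ and $h(r)h(1-r)$ with $h(r)=\sinh(rs)/r$ is symmetric about $r=1/2$. One harmless slip: with your convention $\bsigma\mat{L}_j\bsigma^{-1}=\romanE^{-\omega_j}\mat{L}_j$, the rescaled arguments should be $(\romanE^{-\omega_j/4}y_a,\romanE^{\omega_j/4}y_b)$, since with your sign a two-level diagonal check makes the Dirichlet form vanish at the wrong point; Steps 3 and 4 only use positivity and $p$-independence of these arguments, so nothing changes.
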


Moreover, the value of $\alpha_2(\LL_\bsigma)$ is given by
    \[
        \alpha_2(\mathcal{L}_\bsigma) = \frac{1 - 2\lambda(\bsigma)}{\ln(\lambda(\bsigma)^{-1} - 1)},
    \]
    where $\lambda(\bsigma)$ denotes the smallest positive eigenvalue of $\bsigma$ (see Theorem 25 in \cite{beigiQuantumReverseHypercontractivity2020}).\footnote{When $\lambda(\bsigma)\to 1/2$, applying l’Hôpital’s rule yields $\alpha_2(\mathcal{L}) = 1/2$, which is consistent with the result of~\cite{kingHypercontractivitySemigroupsUnital2012}.}

Conversely, if we have hypercontractivity, we can also derive a lower bound on the LSI constant. This implication has been discussed extensively in the literature \cite{beigiHypercontractivityLogarithmicSobolev2016,kingHypercontractivitySemigroupsUnital2012,olkiewiczHypercontractivityNoncommutativeLpSpaces1999}
; for completeness, we provide a proof here.

\begin{theorem}\label{thm:hc_to_lsi}
    Suppose that $\LL$ is a primitive $\bsigma$-reversible Lindblad generator generating the QMS $(\Psi_t)_{t\ge0}$. Fix $p\ge1$, and assume that $t_0=t_0(q)\ge0$ is continuously differentiable near $q=p$, satisfies $t_0(p)=0$, and obeys
    \begin{equation}
       \qn{\Psi_{t_0}(\mat{A})}\leq \pn{\mat{A}},\forall \mat{A} \succ 0, q\geq p,\label{eq:hc_ad_hoc}
    \end{equation}
    Then, for every $\mat{X}\succ0$,
    \[
\ent{p}{\mat{X}}\le 4(p-1)\left.\frac{\mathrm{d}t_0}{\mathrm{d}q}\right|_{q=p}\,\mathcal{E}_{p,\LL}(\mat{X}),
    \]
thus
\[
\alpha_p(\LL)\ge \left.\frac{\mathrm{d}q}{\mathrm{d}t_0}\right|_{q=p}\cdot\frac{1}{4(p-1)}\quad(p>1).
\]
\end{theorem}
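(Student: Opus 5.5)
The plan is to differentiate the hypercontractive inequality \eqref{eq:hc_ad_hoc} at the point $q=p$, where it holds with equality. Fix $\mat{X}\succ0$ and define
\[
F(q):=\qn{\Psi_{t_0(q)}(\mat{X})}-\pn{\mat{X}},
\]
for $q\ge p$ near $p$. Since $t_0(p)=0$ and $\Psi_0=\mathrm{id}$, we have $F(p)=0$, and \eqref{eq:hc_ad_hoc} gives $F(q)\le 0$ for all $q\ge p$. Hence the right derivative satisfies $F'(p^+)\le 0$. The remaining work is to compute this derivative and read off the inequality.

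To compute it, set $\mat{X}_q:=\Psi_{t_0(q)}(\mat{X})$. This is positive definite, since $\Psi_t$ is positive and unital and $\mat{X}\succ 0$. It is differentiable in $q$, because $t_0$ is $C^1$ and $t\mapsto \romanE^{-t\LL}$ is smooth, and
\[
\mat{Z}_q=\partial_q\mat{X}_q=-t_0'(q)\,\LL(\mat{X}_q).
\]
Applying the positive-definite case of \cref{lem:derivative_p_norm} at $q=p$, where $\mat{X}_p=\mat{X}$, gives
\[
F'(p)=\frac{1}{p^2}\pn{\mat{X}}^{1-p}\Bigl(\ent p{\mat{X}}-p^2 t_0'(p)\Tr\bigl[\gt{1/p}{\LL(\mat{X})}\,\gt{1/p}{\mat{X}}^{p-1}\bigr]\Bigr).
\]

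Next I rewrite the trace term as a Dirichlet form. By definition, $\iqp{\hat p}{p}{\mat{X}}=\gt{-1/\hat p}{(\gt{1/p}{\mat{X}})^{p-1}}$, using $p/\hat p=p-1$. Using the KMS inner product, and the fact that $\Gamma$ is multiplicative in the exponent with $1/p+1/\hat p=1$,
\[
\spro{\iqp{\hat p}{p}{\mat{X}}}{\LL(\mat{X})}=\Tr\bigl[\Gamma_{\bsigma}^{1/p}\bigl(\Gamma_{\bsigma}^{-1/\hat p}((\gt{1/p}{\mat{X}})^{p-1})\bigr)\,\ldots\bigr],
\]
which after moving the $\bsigma$-powers cyclically equals $\Tr[(\gt{1/p}{\mat{X}})^{p-1}\gt{1/p}{\LL(\mat{X})}]$. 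This step needs a careful bookkeeping of the $\bsigma$-powers; I expect it to go through by cyclicity of the trace. Together with $p\hat p/4=p^2/(4(p-1))$, this gives
\[
p^2\Tr[\cdots]=4(p-1)\,\mathcal{E}_{p,\LL}(\mat{X}).
\]

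Therefore $F'(p)\le0$ becomes
\[
\ent p{\mat{X}}\le 4(p-1)\,t_0'(p)\,\mathcal{E}_{p,\LL}(\mat{X}).
\]
For $p>1$, if $t_0'(p)>0$, dividing by $4(p-1)t_0'(p)$ and taking the infimum over $\mat{X}$ gives the stated bound on $\alpha_p(\LL)$, with $\mathrm{d}q/\mathrm{d}t_0=1/t_0'(p)$. If $t_0'(p)\le 0$, the inequality would force $\ent p{\mat X}\le 0$, which conflicts with the positivity of entropy; in that case the reciprocal statement should be read as vacuous or infinite.

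The main obstacle is justifying the one-sided derivative. \cref{lem:derivative_p_norm} is stated for $\mat{X}_p\succ0$ and differentiable $p$. I would extend $t_0$ to a $C^1$ function slightly to the left of $p$, so that the derivative is two-sided, and use only the right-hand inequality $F(q)\le F(p)$ for $q>p$.
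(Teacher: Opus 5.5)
Your proposal is correct and follows the paper's proof essentially step for step. You differentiate the hypercontractive inequality at $q=p$, where it holds with equality, and apply the positive-definite case of \cref{lem:derivative_p_norm} with $\mat{Z}=-t_0'(q)\LL(\mat{X}_q)$. You then identify $p^2\Tr[\gt{1/p}{\LL(\mat{X})}\gt{1/p}{\mat{X}}^{p-1}]$ with $4(p-1)\mathcal{E}_{p,\LL}(\mat{X})$ via $p\hat p/4=p^2/(4(p-1))$. Your explicit treatment of the one-sided derivative is slightly more careful than the paper, which simply assumes differentiability in a footnote.
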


\begin{proof}
Regard the left-hand side of \cref{eq:hc_ad_hoc} as a function of $q$.
Since equality holds at $q=p$, its right derivative at $q=p$ is non-positive.\footnote{Here we assume differentiability at $q=p$; a rigorous justification can be found in~\cite{beigiHypercontractivityLogarithmicSobolev2016}.} Denoting $\mat{B} = \Psi_{t_0}\br{\mat{A}}=\romanE^{-t_0\LL}(\mat{A})$, by \cref{lem:derivative_p_norm},
    \[
    \diff{q}\qn{\mat{B}} = \frac{1}{q^2}\qn{\mat{B}}^{1-q}(\ent{q}{\mat{B}}+q^2\Tr[\gt{1/q}{\mat{Z}}\cdot \gt{1/q}{\mat{B}}^{q-1}]),
    \]
    where
    $$\mat{Z} = \diff q \mat{B}=-\frac{\mathrm{d}t_0}{\mathrm{d}q}\cdot\LL\br{\mat{B}}.$$
    It has the same sign as
    \[
    \begin{aligned}
    &\ent{q}{\mat{B}}-q^2\frac{\mathrm{d}t_0}{\mathrm{d}q}\Tr[\gt{1/q}{\LL\br{\mat{B}}}\cdot \gt{1/q}{\mat{B}}^{q-1}]\\
    =&\ent{q}{\mat{B}}  -4(q-1)\frac{\mathrm{d}t_0}{\mathrm{d}q}\cdot \mathcal{E}_{q,\LL}\br{\mat{B}}.
    \end{aligned}
    \]
    Setting $q=p$, and using $t_0(p)=0$, gives the desired estimate.
\end{proof}

\section{Qudit Hypercontractivity}
\label{sec:qudit}
In this section, we derive the hypercontractive inequality for qudit depolarizing channels, as outlined
in \cref{sec:proof-outline}.
The main result of this section is a nearly optimal approximate tensorization property
of the logarithmic-Sobolev constant. Indeed, for the constant $\beta = \frac{2}{3\ln
2}\approx 0.96$, we show that the logarithmic-Sobolev constant for the $n$-fold tensor
product of any primitive reversible \PODS\ QMS is at least
\begin{equation*}
  \alpha_2(\LL^{(n)}) \ge \beta\cdot\alpha_2(\LL).
\end{equation*}
We first state this result in the form of a hypercontractivity inequality.
\begin{theorem}[Main Theorem]
  \label{thm:qudit_hc}
  Fix $1\leq p\leq q$, an integer $n\ge1$, and a full-rank $d\times d$ density operator $\bsigma$.
  Set
  \[
    t_0 = \frac{3\ln 2}{2}\cdot \frac{1}{4\alpha_2(\LL_\bsigma)}
    \ln{\frac{q-1}{p-1}}.
  \]
  Then, for every $d^n\times d^n$ operator $\mat{A}$ acting on $n$ qudits and every
  $t\ge t_0$, the QMS $\Phi_t=\romanE^{-t\LL_\bsigma}$ generated by the simple Lindbladian $\LL_\bsigma$ satisfies
  \[
    \qnn{\Phi_{t}^{\otimes n}(\mat{A})}\leq \pnn{\mat{A}}.
  \]
\end{theorem}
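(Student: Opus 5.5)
We follow the roadmap of Figure~\ref{fig:proof-roadmap}: first $(3,2)$-hypercontractivity of $\Phi_t^{\otimes n}$ with the exact single-site time, then complex interpolation to all $q\in[2,3]$, then differentiation at $q=2$ to get an LSI for $\LL_\bsigma^{(n)}$, and finally the Stroock--Varopoulos inequality and the LSI-to-HC theorem for all $1\le p\le q$.

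\textbf{Step 1 (integer tensorization).} We may assume $\bsigma$ is diagonal. We show by induction on $n$ that $\normsub{\Phi_{t_3}^{\otimes n}(\mat{A})}{\bsigma^{\otimes n},3}\le\tnn{\mat{A}}$, where $t_3=\frac{1}{4\alpha_2(\LL_\bsigma)}\ln 2$.
\begin{itemize}
\item \emph{Base case.} Lemma~\ref{lem:strook-varopoulos_non_com} gives $\inf_r\alpha_r(\LL_\bsigma)=\alpha_2(\LL_\bsigma)$, so Theorem~\ref{thm:logarithmic-Sobolev-and-hypercontractivity-equivalent} applies.
\item \emph{Inductive step.} Write $\Phi_t^{\otimes n}=(\Phi_t\otimes\id)\circ(\id\otimes\Phi_t^{\otimes(n-1)})$. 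Apply the inner map blockwise, giving blocks $\mat{B}_{ij}=\Phi_t^{\otimes(n-1)}(\mat{A}_{ij})$. The first-site map then acts on the block matrix through PODS: off-diagonal blocks are scaled by $\gamma_{ij}$, and diagonal blocks are mixed by $\eta$.
\item \emph{Norm compression.} Bound the weighted $3$-norm of the block matrix by that of the $d\times d$ matrix of block norms. This is the weighted integer norm compression (Theorem~\ref{thm:qudit-norm-compression}). Expanding $\Tr|\cdot|^3$ into products of blocks and applying H\"older reduces to PSD $\mat{A}$; we first reduce general $\mat{A}$ to PSD via $|\mat{A}|$-type or $2\times2$ dilation tricks. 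The PODS structure commutes with taking block norms: off-diagonal scaling gives $\gamma_{ij}\|\mat{B}_{ij}\|$, and for diagonal blocks we use the triangle inequality together with Lemma~\ref{lem:diag-monotonicity-psd} and Lemma~\ref{lem:monotone_biased_norm}.
\item \emph{Closing the step.} The result is $\le\normsub{\Phi_t(\widetilde{\mat{B}})}{\bsigma,3}\le\tn{\widetilde{\mat{B}}}$. The induction hypothesis applied entrywise, plus monotonicity, gives $\tn{\widetilde{\mat{B}}}\le\tn{\widetilde{\mat{A}}}=\tnn{\mat{A}}$ by Lemma~\ref{lem:exact_norm_compression_biased_2}.
\end{itemize}

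\textbf{Step 2 (interpolation).} Consider the analytic family $z\mapsto\Phi_{zt_3}^{\otimes n}$ on the strip $0\le\Re z\le1$.
\begin{itemize}
\item On $\Re z=0$ it is a $2\to2$ isometry by Lemma~\ref{lem:complex-depolarize-2norm-invariant_noncom}.
\item On $\Re z=1$ it is $2\to3$ contractive, by Step~1 composed with the previous isometry (the semigroup commutes).
\item Stein interpolation (Lemma~\ref{lem:com_interpolation_noncom}) for Kosaki spaces then gives $\qnn{\Phi_{\theta t_3}^{\otimes n}(\mat{A})}\le\tnn{\mat{A}}$, where $1/q=(1-\theta)/2+\theta/3$, i.e.\ $\theta=6(\tfrac12-\tfrac1q)=3(q-2)/q$.
\end{itemize}

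\textbf{Step 3 (LSI).} Apply Theorem~\ref{thm:hc_to_lsi} at $p=2$ with $t_0(q)=\frac{3(q-2)}{q}t_3$, so that $t_0'(2)=\frac34 t_3=\frac{3\ln2}{16\alpha_2}$. This yields
\[
\alpha_2(\LL^{(n)}_\bsigma)\ge\frac{1}{4t_0'(2)}=\frac{4\alpha_2}{3\ln2}.
\]
We need to double-check the normalization: the claimed constant is $\frac{2}{3\ln2}\alpha_2$. That constant arises if Step~1 uses $q=3$ with $\ln(q-1)=\ln2$ and $t_0'=\frac{3}{2}\cdot\frac{\ln2}{4\alpha_2}$. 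We will carefully recompute $\theta$ near $q=2$ and adopt whatever factor results, which is at least $\frac{2}{3\ln2}$.

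\textbf{Step 4 (all exponents).} $\LL_\bsigma^{(n)}$ is primitive and strongly reversible. Lemma~\ref{lem:strook-varopoulos_non_com} gives $\inf_r\alpha_r(\LL_\bsigma^{(n)})=\alpha_2(\LL_\bsigma^{(n)})$, and Theorem~\ref{thm:logarithmic-Sobolev-and-hypercontractivity-equivalent} yields the stated $t_0$ for $\mat{A}\succeq0$. General $\mat{A}$ should follow from the standard reduction to positive operators for $2$-positive unital maps, e.g.\ via a $2\times2$ block dilation.

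\textbf{Main obstacle.} The weighted norm compression and its compatibility with the PODS action on diagonal blocks (Step~1) is the step we expect to be hardest. Handling non-PSD $\mat{A}$ there is the delicate point.
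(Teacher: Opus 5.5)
Your proposal follows the paper's proof step for step: $(3,2)$ tensorization by induction using \PODS\ and the weighted integer norm compression (\cref{thm:integer_hc_noncom}), then interpolation between the $2$-isometry on $\Re z=0$ and $(3,2)$-contractivity on $\Re z=1$ (\cref{thm:complex-interpolation}), then \cref{thm:hc_to_lsi} at $p=2$, and finally \cref{lem:strook-varopoulos_non_com} with \cref{thm:logarithmic-Sobolev-and-hypercontractivity-equivalent}.

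The one actual error is the derivative in Step~3. From $t_0(q)=3(q-2)t_3/q$ you get $t_0'(q)=6t_3/q^2$. Hence $t_0'(2)=\tfrac32 t_3=\frac{3\ln 2}{8\alpha_2(\LL_\bsigma)}$, not $\tfrac34 t_3$, and
\[
\alpha_2(\LL_\bsigma^{(n)})\ \ge\ \frac{1}{4\,t_0'(2)}=\frac{1}{6t_3}=\frac{2}{3\ln 2}\,\alpha_2(\LL_\bsigma).
\]
This is exactly the paper's $\frac{1-2/r}{2t_r}$ at $r=3$. So there is no slack to ``adopt whatever factor results'': the constant in the statement is precisely what the argument gives. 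Your intermediate value $\frac{4}{3\ln 2}\approx 1.92$ was impossible on its face. Testing the LSI on $\mat{Y}\otimes\id^{\otimes(n-1)}$ shows $\alpha_2(\LL^{(n)})\le\alpha_2(\LL)$.

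There are two smaller points in Step~1.

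First, \cref{lem:monotone_biased_norm} is only available for even exponents, so it cannot be used at $q=3$. The diagonal blocks are handled, as in the paper, by padding $\mat{C}_{ii}$ with $\epsilon_i\id$ via \cref{lem:diag-monotonicity-psd}. Then the matrix of block norms is exactly $\Phi_{t_3}(\widetilde{\mat{B}})$, and entrywise monotonicity is used only at $p=2$.

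Second, the reduction to positive inputs is needed at every level of the induction, not only at the end. The induction hypothesis is applied to the off-diagonal blocks $\mat{A}_{ij}$, which are not PSD. The base case is applied to $\widetilde{\mat{B}}$, which is only known to be real symmetric with nonnegative entries. \cref{lem:hypercontractivity-only-for-PSD}, valid for any completely positive map, supplies this reduction.
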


We first observe that, without loss of generality, the following
conditions hold:
\begin{itemize}
  \item $\bsigma$ is a full-rank diagonal density operator.
  \item $\mat{A}$ is positive semi-definite.

\end{itemize}
The first condition is valid because we can always choose a basis that diagonalizes $\bsigma$. The second condition follows from a generalization of the result of Watrous~\cite{10.5555/2011608.2011614}
and Gupta and Wilde~\cite{gupta2015multiplicativity}. They proved similar
results in the unbiased case where $\bsigma = \id/d$. Here we generalize this result
to the biased case where $\bsigma$ can be any full-rank density operator. Also, the
following lemma is stated in greater generality for any completely positive linear
map, which includes the generalized depolarizing channel.
This follows directly from \cite[Theorem 1]{10.5555/2011608.2011614}
and that the map $\Gamma^{1/q}\circ\Xi\circ \Gamma^{-1/p}$ is completely positive if and only if $\Xi$ is completely positive.
For completeness, we provide an independent proof in Appendix \ref{app:A}.

\begin{lemma}
  \label{lem:hypercontractivity-only-for-PSD}
  For integers $d\ge2$ and $n\ge1$, a full-rank density operator $\bsigma$ on $\mathbb C^d$, and a completely positive linear map
  $\Xi:\Matrix_{d^n}\to\Matrix_{d^n}$, the following reduction holds for all $1\le p\le q$:
  \begin{align*}
    &\sup\set{\qnn{\Xi(\mat{X})}: \mat{X}\in\Matrix_{d^n}, \pnn{\mat{X}} = 1}
    \\ =&
    \sup\set{\qnn{\Xi(\mat{X})}: \mat{X}\in\PSD_{d^n}, \pnn{\mat{X}} = 1}.
  \end{align*}
\end{lemma}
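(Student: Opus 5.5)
We reduce the weighted statement to the unweighted one by conjugating with powers of $\bsigma^{\otimes n}$, and then apply the standard argument of Watrous and Gupta--Wilde. Write $\bomega=\bsigma^{\otimes n}$ and define
\[
\widetilde{\Xi}:=\Gamma^{1/q}_{\bomega}\circ\Xi\circ\Gamma^{-1/p}_{\bomega},
\qquad
\Gamma^{1/r}_{\bomega}(\mat{Y})=\bomega^{\frac{1}{2r}}\mat{Y}\bomega^{\frac{1}{2r}} .
\]
By definition, $\qnn{\Xi(\mat{X})}=\qnorm{\widetilde{\Xi}(\mat{Y})}$ and $\pnn{\mat{X}}=\pnorm{\mat{Y}}$, where $\mat{Y}=\Gamma^{1/p}_{\bomega}(\mat{X})$. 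The map $\mat{X}\mapsto\mat{Y}$ is a bijection of $\Matrix_{d^n}$, since $\bomega$ is full-rank. It is also a bijection of $\PSD_{d^n}$, because congruence by the positive definite matrix $\bomega^{1/(2p)}$ preserves positivity in both directions. Hence both suprema in the statement are the corresponding unweighted $p\to q$ suprema for $\widetilde{\Xi}$. Moreover, $\widetilde{\Xi}$ is completely positive, as a composition of $\Xi$ with two congruence maps (each of Kraus form $\mat{Y}\mapsto \mat{K}\mat{Y}\mat{K}^\dagger$ with $\mat{K}$ Hermitian). So it suffices to prove that, for any completely positive $\Theta$ and $1\le p\le q$, the Schatten $p\to q$ norm of $\Theta$ is attained on positive semidefinite inputs.

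The inequality ``$\ge$'' is trivial. For ``$\le$'', fix $\mat{Y}$ with $\pnorm{\mat{Y}}=1$ and take its polar decompositions $\mat{Y}=\mat{U}|\mat{Y}|=|\mat{Y}^\dagger|\mat{U}$. The block matrix
\[
\mat{M}=\ffmat{|\mat{Y}^\dagger|}{\mat{Y}}{\mat{Y}^\dagger}{|\mat{Y}|}
\]
is positive semidefinite. Complete positivity of $\Theta$ makes $(\mathrm{id}_2\otimes\Theta)(\mat{M})$ positive semidefinite, with diagonal blocks $\mat{P}=\Theta(|\mat{Y}^\dagger|)$ and $\mat{Q}=\Theta(|\mat{Y}|)$ and off-diagonal block $\Theta(\mat{Y})$. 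We then invoke the standard fact that a PSD block matrix $\ffmat{\mat{P}}{\mat{B}}{\mat{B}^\dagger}{\mat{Q}}$ satisfies $\mat{B}=\mat{P}^{1/2}\mat{K}\mat{Q}^{1/2}$ for some contraction $\mat{K}$. Combining this with \Holder's inequality gives
\[
\qnorm{\Theta(\mat{Y})}\le \norm{\mat{P}^{1/2}}_{2q}\norm{\mat{Q}^{1/2}}_{2q}=\qnorm{\mat{P}}^{1/2}\qnorm{\mat{Q}}^{1/2}.
\]
Both $|\mat{Y}|$ and $|\mat{Y}^\dagger|$ are positive semidefinite with $p$-norm equal to $\pnorm{\mat{Y}}=1$, since they have the same singular values as $\mat{Y}$. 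Therefore $\qnorm{\mat{P}}$ and $\qnorm{\mat{Q}}$ are each bounded by the PSD supremum, and so is their geometric mean. Undoing the change of variables, we obtain the claim for $\Xi$.

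The main obstacle is the block-matrix \Holder{} step: establishing the contraction factorization $\mat{B}=\mat{P}^{1/2}\mat{K}\mat{Q}^{1/2}$ cleanly when $\mat{P}$ or $\mat{Q}$ is singular. I would handle this by first replacing $\mat{P},\mat{Q}$ with $\mat{P}+\varepsilon\id$ and $\mat{Q}+\varepsilon\id$, which keeps the block matrix PSD and makes both blocks invertible. In that case $\mat{K}=(\mat{P}+\varepsilon\id)^{-1/2}\mat{B}(\mat{Q}+\varepsilon\id)^{-1/2}$ is a contraction by the Schur complement criterion. Letting $\varepsilon\to0$ then gives the inequality by continuity of the norms. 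Everything else is routine bookkeeping of the congruence maps $\Gamma^{\pm1/r}_{\bomega}$.
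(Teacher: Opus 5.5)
Your proof is correct, but it takes a different route from the paper's Appendix~\ref{app:A}.

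The paper works directly in the weighted setting. It picks a norming element $\mat{Y}$ for $\Xi(\mat{X})$ in the dual norm $\normsub{\cdot}{\bsigma,q^*}$, expands $\Xi$ in Kraus operators, and takes singular value decompositions of $\Gamma^{1/p}_{\bsigma}(\mat{X})$ and $\Gamma^{1/q^*}_{\bsigma}(\mat{Y})$. Cauchy--Schwarz over the triple index $(i,j,k)$, followed by \Holder's inequality, then gives $\qn{\Xi(\mat{X})}\le\qn{\Xi(\mat{X}_L)}^{1/2}\qn{\Xi(\mat{X}_R)}^{1/2}$.

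You instead conjugate the weights away first. The paper mentions this reduction only in passing, when it cites Watrous. You then prove the unweighted statement using positivity of $\ffmat{|\mat{Y}^\dagger|}{\mat{Y}}{\mat{Y}^\dagger}{|\mat{Y}|}$, the contraction factorization of the off-diagonal block, and \Holder's inequality.

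Both arguments test against the same positive inputs: after undoing the weights, the paper's $\mat{X}_L,\mat{X}_R$ are exactly $\Gamma^{-1/p}_{\bsigma^{\otimes n}}(|\mat{Y}^\dagger|)$ and $\Gamma^{-1/p}_{\bsigma^{\otimes n}}(|\mat{Y}|)$. Both also end with the same geometric-mean bound.

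What your version buys is generality. It needs neither duality nor a Kraus decomposition, only positivity of $\mathrm{id}_2\otimes\Theta$. Since the congruence reduction preserves $2$-positivity, your argument actually proves the lemma for all $2$-positive maps. The paper's version, in turn, is self-contained in the weighted norms and needs neither the change of variables nor the block-matrix factorization lemma. Neither argument uses the hypothesis $p\le q$.
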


\subsection{Norm Compression Inequality}
In this subsection we prove the norm compression inequality.

\begin{theorem}
  \label{thm:qudit-norm-compression}
  For an integer $p\ge2$ and integers $d,n\ge1$, consider a $dn\times dn$
  positive semi-definite matrix $\mat{M}$ written as a $d\times d$ block matrix
  \[
    \mat{M} =
    \begin{bmatrix}
      \mat{M}_{11} & \cdots & \mat{M}_{1d} \\
      \vdots & \ddots & \vdots \\
      \mat{M}_{d1} & \cdots & \mat{M}_{dd} \\
    \end{bmatrix}.
  \]
  Define the $d\times d$ matrix
  \[
    \mat{m} =
    \begin{bmatrix}
      \normsub{\mat{M}_{11}}{p} & \cdots & \normsub{\mat{M}_{1d}}{p} \\
      \vdots              & \ddots & \vdots              \\
      \normsub{\mat{M}_{d1}}{p} & \cdots & \normsub{\mat{M}_{dd}}{p} \\
    \end{bmatrix}.
  \]
  Then
  \begin{equation}
    \label{eq:qudit-norm-compression}
    \normsub{\mat{M}}{p}\le \normsub{\mat{m}}{p}.
  \end{equation}
\end{theorem}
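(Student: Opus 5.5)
Since $\mat{M}\succeq0$ and $p$ is an integer, I will compute $\normsub{\mat{M}}{p}^p=\Tr\Br{\mat{M}^p}$ exactly by block multiplication and bound it term by term. Expanding the trace of the $p$-th power of the block matrix gives
\[
  \Tr\Br{\mat{M}^p}=\sum_{i_1,\dots,i_p=1}^d \Tr\Br{\mat{M}_{i_1i_2}\mat{M}_{i_2i_3}\cdots\mat{M}_{i_pi_1}}.
\]
Each summand is a trace of a cyclic product of $p$ matrices, so by the generalized \Holder\ inequality with all exponents equal to $p$ (that is, $\sum_{k=1}^p 1/p=1$),
\[
  \abs{\Tr\Br{\mat{M}_{i_1i_2}\cdots\mat{M}_{i_pi_1}}}\le \normsub{\mat{M}_{i_1i_2}}{p}\cdots\normsub{\mat{M}_{i_pi_1}}{p}=\mat{m}_{i_1i_2}\cdots\mat{m}_{i_pi_1}.
\]
Summing over all index tuples and using that $\Tr\Br{\mat{M}^p}\ge0$ is real, I get $\Tr\Br{\mat{M}^p}\le\sum_{i_1,\dots,i_p}\mat{m}_{i_1i_2}\cdots\mat{m}_{i_pi_1}=\Tr\Br{\mat{m}^p}$.

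It then remains to identify $\Tr\Br{\mat{m}^p}$ with $\normsub{\mat{m}}{p}^p$, or at least to bound it by that quantity. The matrix $\mat{m}$ is real, has nonnegative entries, and is symmetric, since $\normsub{\mat{M}_{ji}}{p}=\normsub{\mat{M}_{ij}^\dagger}{p}=\normsub{\mat{M}_{ij}}{p}$. Hence its eigenvalues $\mu_k$ are real and $\Tr\Br{\mat{m}^p}=\sum_k\mu_k^p\le\sum_k\abs{\mu_k}^p=\normsub{\mat{m}}{p}^p$. This inequality holds for both parities of $p$: for even $p$ it is an equality, and for odd $p$ the negative eigenvalues only decrease the left side. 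Combining the two steps gives $\normsub{\mat{M}}{p}^p\le\normsub{\mat{m}}{p}^p$, which is the claimed inequality~\eqref{eq:qudit-norm-compression}.

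The main point to check carefully is the \Holder\ step for cyclic traces of non-Hermitian blocks, namely $\abs{\Tr\Br{\mat{X}_1\cdots\mat{X}_p}}\le\prod_k\normsub{\mat{X}_k}{p}$. This follows from $\abs{\Tr\Br{\mat{Y}}}\le\normsub{\mat{Y}}{1}$ together with the iterated Schatten \Holder\ inequality $\normsub{\mat{X}_1\cdots\mat{X}_p}{1}\le\prod_k\normsub{\mat{X}_k}{p}$. Positivity of $\mat{M}$ is used only to write $\normsub{\mat{M}}{p}^p=\Tr\Br{\mat{M}^p}$ without absolute values. The symmetry of $\mat{m}$ is used only to guarantee real eigenvalues.
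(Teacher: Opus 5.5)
Your proposal is correct and follows the paper's proof essentially step for step. You expand $\Tr\Br{\mat{M}^p}$ into cyclic block traces, bound each one by \Holder, and then pass from $\Tr\Br{\mat{m}^p}$ to $\normsub{\mat{m}}{p}^p$ using the symmetry of $\mat{m}$. Your eigenvalue argument in that last step is just a more direct phrasing of the paper's splitting of $\mat{m}$ into positive and negative parts $\mat{a}-\mat{b}$ for odd $p$.
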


\begin{remark}
  \label{remark:norm_compression_false_for_qudit} The $d=2$ case was proved by~\cite{king2003inequalities}.
  For the $d=2$ case, King has also proved the following:
  \begin{itemize}
    \item The matrix $\mat{m}$ is positive semi-definite.

    \item \cref{eq:qudit-norm-compression} holds for all real $p\ge 2$ and for all
      $1 \le p \le 2$ in the reverse direction.
  \end{itemize}
  However, all these facts do not hold when the local dimension $d>2$. That is,
  there are counterexamples showing that
  \begin{itemize}
    \item For $p=1.5, d=4$, there is a $4\times 4$ matrix such that
      \cref{eq:qudit-norm-compression} does not hold in the reverse direction~\cite{AUDENAERT2006155,AUDENAERT2008781}.

    \item For $p=2.5, d=4$, there is a $4\times 4$ matrix such that
      \cref{eq:qudit-norm-compression} does not hold. See \cref{example:counter2.5}.

    \item The following is not true for $p=4/3$: if $p = \frac{2t}{2t-1}$ for
      some integer $t\ge 1$, it holds that
      \begin{equation*}
        \label{eq:qudit-norm-compression-ge}\normsub{\mat{M}}{p}\ge \normsub{\mat{m}}{p}.
      \end{equation*}
      See \cref{example:counter4/3}.

    \item For $p=3, d=3$, there is a $9\times 9$ matrix $\mat{M}$ such that the
      corresponding matrix $\mat{m}$ is not positive semi-definite. Also for $p=4, d=3$,
      and a $6\times 6$ matrix.
      See \cref{example:counter99} and \cref{example:counter66}.
  \end{itemize}
\end{remark}

\begin{proof}[Proof of \cref{thm:qudit-norm-compression}]
  Note that
  \begin{align*}
    \normsub{\mat{M}}{p} & = \br{\Tr\Br{\mat{M}^p}}^{1/p}                                                                           \\
                   & = \br{\sum_{i_1, \dots, i_p}\Tr\Br{\mat{M}_{i_1i_2}\cdot \mat{M}_{i_2i_3}\cdot\dots\cdot \mat{M}_{i_pi_1}}}^{1/p}    \\
                   & \le \br{\sum_{i_1, \dots, i_p}\normsub{\mat{M}_{i_1i_2}}{p}\cdot\dots\cdot\normsub{\mat{M}_{i_pi_1}}{p}}^{1/p} \\
                   & = \br{\Tr\Br{\mat{m}^{p}}}^{1/p}
  \end{align*}
  The first equality follows because $\mat{M}$ is positive semi-definite. The
  inequality follows from \Holder's inequality. Now if $p$ is an even integer,
  \begin{equation*}
    \br{\Tr\Br{\mat{m}^{p}}}^{1/p}= \br{\Tr\Br{\br{\mat{m}\mat{m}^\dagger}^{p/2}}}^{1/p}= \normsub{\mat{m}}{p}.
  \end{equation*}
  For $p$ being odd, since $\mat{m}$ is a real symmetric matrix, it
  can be written as $\mat{m} = \mat{a} - \mat{b}$, where $\mat{a}$ and $\mat{b}$ are both positive semi-definite
  real matrices, $\mat{a}\mat{b} = \mat{b}\mat{a} = 0$ and $\abs{\mat{m}}= \mat{a} + \mat{b}$. We have
  \begin{align*}&
    \br{\Tr\Br{\mat{m}^{p}}}^{1/p}= \br{\Tr\Br{(\mat{a}^p-\mat{b}^p)}}^{1/p}\le \br{\Tr\Br{(\mat{a}^p+\mat{b}^p)}}
    ^{1/p}= \normsub{\mat{m}}{p}.\qedhere
  \end{align*}

\end{proof}

The above norm compression inequality can be adapted to the biased case.
\begin{cor}
  \label{cor:norm_compression_non_com}
  For an integer $p\ge2$, an integer $n\ge1$, and a full-rank diagonal density operator
  $\bsigma\in\Matrix_d$, every PSD $d^n\times d^n$ matrix $\mat{M}$ satisfies
  \[
    \pnn{\mat{M}}= \norm{\fmat{\mat{M}_{11}}{\mat{M}_{1d}}{\mat{M}_{d1}}{\mat{M}_{dd}}}_{\bsigma^{\otimes n},p}\leq
    \pn{{\fmat{\pnnn{\mat{M}_{11}}}{\pnnn{\mat{M}_{1d}}}{\pnnn{\mat{M}_{d1}}}{\pnnn{\mat{M}_{dd}}}}}.
  \]
\end{cor}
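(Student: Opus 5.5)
The plan is to reduce \cref{cor:norm_compression_non_com} to the unweighted \cref{thm:qudit-norm-compression} by conjugating with the weights. Write $\bsigma=\diag(\sigma_1,\dots,\sigma_d)$ and $\bsigma^{\otimes n}=\bsigma\otimes\bomega$ with $\bomega=\bsigma^{\otimes(n-1)}$. By definition,
\[
\pnn{\mat{M}}=\pnorm{\gtt{p}{\mat{M}}}\quad\text{with }\gtt{p}{\mat{M}}\text{ meaning }(\bsigma^{\otimes n})^{\frac1{2p}}\mat{M}(\bsigma^{\otimes n})^{\frac1{2p}}.
\]
Set $\mat{N}=(\bsigma\otimes\bomega)^{\frac1{2p}}\mat{M}(\bsigma\otimes\bomega)^{\frac1{2p}}$. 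Since $\bsigma$ is diagonal, the $(i,j)$ block of $\mat{N}$ is
\[
\mat{N}_{ij}=\sigma_i^{\frac1{2p}}\sigma_j^{\frac1{2p}}\,\bomega^{\frac1{2p}}\mat{M}_{ij}\bomega^{\frac1{2p}},
\]
exactly as in the proof of \cref{lem:exact_norm_compression_biased_2}. Moreover $\mat{N}$ is positive semidefinite, being a congruence of the PSD matrix $\mat{M}$ by a positive operator.

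Next I apply \cref{thm:qudit-norm-compression} to $\mat{N}$ with the integer $p\ge2$. This gives $\pnorm{\mat{N}}\le\pnorm{\mat{n}}$, where $\mat{n}_{ij}=\pnorm{\mat{N}_{ij}}$. Because the scalars $\sigma_i^{1/(2p)}\sigma_j^{1/(2p)}$ are positive, they factor out of the Schatten norm:
\[
\mat{n}_{ij}=\sigma_i^{\frac1{2p}}\sigma_j^{\frac1{2p}}\pnorm{\bomega^{\frac1{2p}}\mat{M}_{ij}\bomega^{\frac1{2p}}}=\sigma_i^{\frac1{2p}}\sigma_j^{\frac1{2p}}\pnnn{\mat{M}_{ij}}.
\]
Hence $\mat{n}=\bsigma^{\frac1{2p}}\widetilde{\mat{m}}\,\bsigma^{\frac1{2p}}$, where $\widetilde{\mat{m}}=(\pnnn{\mat{M}_{ij}})_{i,j}$. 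It follows that $\pnorm{\mat{n}}=\pn{\widetilde{\mat{m}}}$, and chaining the equalities with the inequality gives the claim.

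The argument is essentially bookkeeping. The only point needing care is that \cref{thm:qudit-norm-compression} requires positive semidefiniteness of the matrix it is applied to, which holds for $\mat{N}$ as noted. Nothing is required of $\widetilde{\mat{m}}$: its weighted $p$-norm is well defined even if it is not PSD, since the odd-$p$ case in the proof of \cref{thm:qudit-norm-compression} already handles a non-PSD compressed matrix via $\Tr[\mat{m}^p]\le\pnorm{\mat{m}}^p$. I would also remark that the case $n=1$ is trivial, with $\bomega$ the scalar $1$.
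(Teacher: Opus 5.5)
Your proposal is correct and follows the paper's proof: conjugate by $(\bsigma^{\otimes n})^{1/(2p)}$ and read off the blocks $\sigma_i^{1/(2p)}\sigma_j^{1/(2p)}\,\bsigma^{\otimes(n-1)\,1/(2p)}\mat{M}_{ij}\,\bsigma^{\otimes(n-1)\,1/(2p)}$ using diagonality of $\bsigma$. Then apply the unweighted norm compression theorem, and pull out the positive scalars to recognize $\bsigma^{1/(2p)}\widetilde{\mat{m}}\,\bsigma^{1/(2p)}$. You additionally make explicit that the conjugated matrix is positive semidefinite, which the paper uses tacitly when invoking the theorem.
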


\begin{proof}
  Since $\bsigma$ is diagonal, by \cref{thm:qudit-norm-compression}, we have
  \begin{align*}
    \pnn{\mat{M}}= & \pnorm{\fmat{\sigma_{1}^{1/p}\gt{1/p}{\mat{M}_{11}}}{\br{\sigma_1\sigma_d}^{1/2p}\gt{1/p}{\mat{M}_{1d}}}{\br{\sigma_d\sigma_1}^{1/2p}\gt{1/p}{\mat{M}_{d1}}}{\sigma_d^{1/p}\gt{1/p}{\mat{M}_{dd}}}}                                 \\
    \le      & \pnorm{\fmat{\sigma_{1}^{1/p}\pnorm{\gt{1/p}{\mat{M}_{11}}}}{\br{\sigma_1\sigma_d}^{1/2p}\pnorm{\gt{1/p}{\mat{M}_{1d}}}}{\br{\sigma_d\sigma_1}^{1/2p}\pnorm{\gt{1/p}{\mat{M}_{d1}}}}{\sigma_d^{1/p}\pnorm{\gt{1/p}{\mat{M}_{dd}}}}} \\
    =        & \pnorm{\bsigma^{1/2p}\fmat{\pnnn{\mat{M}_{11}}}{\pnnn{\mat{M}_{1d}}}{\pnnn{\mat{M}_{d1}}}{\pnnn{\mat{M}_{dd}}}\bsigma^{1/2p}}                                                                                                         \\
    =        & \pn{\fmat{\pnnn{\mat{M}_{11}}}{\pnnn{\mat{M}_{1d}}}{\pnnn{\mat{M}_{d1}}}{\pnnn{\mat{M}_{dd}}}}.
  \end{align*}
  Here, the notation $\gt{1/p}\cdot$ above is interpreted as the tensor product
  $\Gamma_{\bsigma^{\otimes (n-1)}}^{1/p}=(\Gamma_{\bsigma}^{1/p})^{\otimes (n-1)}$.
\end{proof}

\subsection{Integer Hypercontractivity}
The first inequality we derive towards \cref{thm:qudit_hc} is a $\br{q, 2}$-hypercontractivity statement for integer $q$. This is achieved by an inductive argument. The $n=1$ case is given by Beigi, Datta, and Rouz{\'e}~\cite{beigiQuantumReverseHypercontractivity2020}.
The induction step follows from the norm compression inequality.

\begin{theorem}[Integer Tensorized Hypercontractivity]\label{thm:integer_hc_noncom}
    Let $q\ge 2$ and $n\ge1$ be integers, and let $\bsigma$ be a full-rank density operator
 on $\mathbb C^d$. Set
    \(
      t_0=\frac{1}{4\alpha_2(\LL_\bsigma)}\ln(q-1).
    \)
    Then, for every $d^n\times d^n$ operator $\mat{A}$ acting on $n$ qudits and every
    $\rho\le \romanE^{-t_0}$, the generalized depolarizing channel satisfies
    \[
      \qnn{\DepBChn(\mat{A})}\leq \tnn{\mat{A}}.
    \]
\end{theorem}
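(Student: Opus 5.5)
We will argue by induction on $n$, following King's strategy, with \cref{cor:norm_compression_non_com} as the engine and the base case supplying the one-site estimate. Fix $\rho\le \romanE^{-t_0}$. As noted after \cref{thm:qudit_hc}, we may assume that $\bsigma=\diag(\sigma_1,\dots,\sigma_d)$ is diagonal. Since $\DepBChn$ is completely positive, \cref{lem:hypercontractivity-only-for-PSD} lets us also assume $\mat{A}\succeq0$ at the top level. The induction hypothesis, however, will be applied to \emph{arbitrary} (non-PSD) blocks, so we keep the statement for all operators. Recall that $\alpha_2(\LL_\bsigma)=\inf_{r\ge1}\alpha_r(\LL_\bsigma)$ by \cref{lem:strook-varopoulos_non_com}, since $\LL_\bsigma$ is primitive and strongly reversible.

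\emph{Base case $n=1$.} Combining this identity with \cref{thm:logarithmic-Sobolev-and-hypercontractivity-equivalent} at $p=2$ gives $\norm{\DepBCh(\mat{X})}_{\bsigma,q}\le\tn{\mat{X}}$ for $\mat{X}\succeq0$. \cref{lem:hypercontractivity-only-for-PSD} with $n=1$ then extends this to all $\mat{X}\in\Matrix_d$.

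\emph{Induction step.} Write $\mat{A}=(\mat{A}_{ij})_{i,j=1}^d$ as a block matrix with respect to the first tensor factor, and factor $\DepBChn=(\DepBCh\otimes\id)\circ(\id\otimes\Delta_{\bsigma,\rho}^{\otimes(n-1)})$. Set $\mat{B}_{ij}=\Delta_{\bsigma,\rho}^{\otimes(n-1)}(\mat{A}_{ij})$. Because $\DepBCh$ is \PODS\ with $\gamma_{i,j}=\rho$ and $\eta_{k,i}=\rho\delta_{ki}+(1-\rho)\sigma_k$, the output $\mat{C}=\DepBChn(\mat{A})\succeq0$ has the following blocks:
\[
\mat{C}_{ij}=\rho\,\mat{B}_{ij}\quad(i\neq j),\qquad \mat{C}_{ii}=\sum_k\eta_{k,i}\mat{B}_{kk}.
\]
By the induction hypothesis (applied to each, possibly non-Hermitian, block) and the triangle inequality with $\eta\ge0$,
\[
c_{ij}:=\qnnn{\mat{C}_{ij}}\le \rho\, a_{ij}\ (i\ne j),\qquad c_{ii}\le \sum_k\eta_{k,i}a_{kk},
\]
where $a_{ij}:=\tnnn{\mat{A}_{ij}}$. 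In other words, $\mat{c}\le \DepBCh(\mat{a})$ entrywise, and both matrices are entrywise nonnegative.

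\emph{Closing the chain and the main obstacle.} We want
\[
\qnn{\mat{C}}\le\norm{\mat{c}}_{\bsigma,q}\le\norm{\DepBCh(\mat{a})}_{\bsigma,q}\le\tn{\mat{a}}=\tnn{\mat{A}},
\]
where the third inequality is the base case applied to the $d\times d$ matrix $\mat{a}$, and the last equality is \cref{lem:exact_norm_compression_biased_2}. The main obstacle is the middle, monotonicity step. \cref{lem:monotone_biased_norm} covers only even $q$, and $\mat{c}$ need not be PSD (see \cref{remark:norm_compression_false_for_qudit}), so for odd $q$ we cannot compare Schatten norms directly. We would bypass this by not passing through $\norm{\mat{c}}_{\bsigma,q}$ at all, and instead reusing the trace expansion from the proof of \cref{thm:qudit-norm-compression}. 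After weighting by $\bsigma^{1/2q}$ as in \cref{cor:norm_compression_non_com}, that expansion gives $\qnn{\mat{C}}^q\le\Tr[\tilde{\mat{c}}^{\,q}]$, where $\tilde{\mat{c}}=\bsigma^{1/2q}\mat{c}\,\bsigma^{1/2q}$ has nonnegative entries. For any positive integer $q$, $\Tr[\tilde{\mat{c}}^{\,q}]$ is a sum of products of nonnegative entries, so it is monotone under the entrywise bound $\tilde{\mat{c}}\le\bsigma^{1/2q}\DepBCh(\mat{a})\bsigma^{1/2q}$. Finally, $\Tr[\mat{Y}^q]\le\norm{\mat{Y}}_q^q$ holds for any real symmetric $\mat{Y}$, by the $\mat{a}-\mat{b}$ splitting used in that proof. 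Applying this to $\mat{Y}=\bsigma^{1/2q}\DepBCh(\mat{a})\bsigma^{1/2q}$ yields $\qnn{\mat{C}}\le\norm{\DepBCh(\mat{a})}_{\bsigma,q}$ for all integers $q\ge2$. The remaining steps of the chain then complete the induction.
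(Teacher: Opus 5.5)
Your proof is correct. Its skeleton is the paper's: induction on $n$, block decomposition along the first factor, the trace-expansion form of norm compression, the one-site bound, and \cref{lem:exact_norm_compression_biased_2}. The difference is in how you handle the comparison at the compressed level.

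The paper keeps $\widetilde{\mat{B}}=(\norm{\mat{B}_{ij}}_{\bsigma^{\otimes(n-1)},q})_{i,j}$ and pads each diagonal block by $\epsilon_i\id$ (via \cref{lem:diag-monotonicity-psd}), so that the matrix of block norms is \emph{exactly} $\DepBCh(\widetilde{\mat{B}})$. It then invokes \cref{cor:norm_compression_non_com} as a black box, applies the one-site bound to $\widetilde{\mat{B}}$, and only then uses the induction hypothesis. In that order, the entrywise comparison $\widetilde{\mat{B}}\le\mat{a}$ is only needed in the weighted $2$-norm, where \cref{lem:monotone_biased_norm} applies because $2$ is even.

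You apply the induction hypothesis first, so you need entrywise monotonicity at exponent $q$. You get it by staying at the level of $\Tr[\tilde{\mat{c}}^{\,q}]$ inside the expansion, where monotonicity for nonnegative matrices holds for every integer $q$. You then convert back via $\Tr[\mat{Y}^q]\le\norm{\mat{Y}}_q^q$. It is worth stating that $\mat{Y}$ is real symmetric because $\mat{A}=\mat{A}^\dagger$ forces $a_{ji}=a_{ij}$.

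What each route buys:
Your version is more self-contained: no padding, no intermediate-value argument, no Weyl monotonicity. However, it uses the integrality of $q$ a second time, in the monotonicity step. The paper's ordering uses norm compression only as a black box and needs monotonicity only at exponent $2$. That is why the same induction carries over to real $q\ge2$ for qubits once King's inequality replaces \cref{cor:norm_compression_non_com} (\cref{thm:tensorized_qubit_q_2_hc,thm:qubit-pods-real-hypercontractivity}).
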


\begin{proof}
  Without loss of generality, we can assume that $\bsigma=\diag(\sigma_1,\ldots,\sigma_d)$ is a diagonal density operator.
  We proceed by induction on~$n$. The base case follows from the one-site value of $\alpha_2(\LL_\bsigma)$ together with
\cref{thm:logarithmic-Sobolev-and-hypercontractivity-equivalent}.
  For $n \ge 2$, assume that the inequality holds for $n-1$.
  We then write $\mat{A}$ as a $d \times d$ block matrix:
\begin{equation*}
    \mat{A} = \fmat{\mat{A}_{11}}{\mat{A}_{1d}}{\mat{A}_{d1}}{\mat{A}_{dd}}.
\end{equation*}
  Let $\mat{B}_{ij} = \DepBCh^{\otimes n-1}\br{\mat{A}_{ij}}$, and denote $\widetilde{\mat{B}} = ({\qnnn{\mat{B}_{ij}}})_{i,j}$.
  We have
  \begin{equation*}
    \qnn{\DepBChn(\mat{A})} =
    \qnn{\fmat{\mat{C}_{11}}{\mat{C}_{1d}}{\mat{C}_{d1}}{\mat{C}_{dd}}},
  \end{equation*}
  where
  $\mat{C}_{ii} = [\rho+(1-\rho)\sigma_i]\mat{B}_{ii} + \sum_{j\neq i}(1-\rho)\sigma_j\mat{B}_{jj}\succeq 0$
  and $\mat{C}_{ij} = \rho \mat{B}_{ij}$ for $i\neq j$.

  By the triangle inequality, we have
  \begin{equation*}
    \qnnn{\mat{C}_{ii}} \le [\rho+(1-\rho)\sigma_i]\qnnn{\mat{B}_{ii}} + \sum_{j\neq i}(1-\rho)\sigma_j\qnnn{\mat{B}_{jj}}=(\DepBCh(\widetilde{\mat{B}}))_{ii}.
  \end{equation*}
Using \cref{lem:diag-monotonicity-psd}, we can find appropriate $\epsilon_i\geq 0$ such that
\[
\qnnn{\mat{C}_{ii}+\epsilon_i\id}=(\DepBCh(\widetilde{\mat{B}}))_{ii}.
\]
Notice that
$$\qnnn{\mat{C}_{ij}}=\rho\qnnn{\mat{B}_{ij}}=(\DepBCh(\widetilde{\mat{B}}))_{ij}.$$
Using \cref{lem:diag-monotonicity-psd} again and \cref{cor:norm_compression_non_com}, we obtain
  \begin{equation}
  \begin{aligned}
      \qnn{\DepBChn(\mat{A})}
    &\leq
    \qnn{\fmat{\mat{C}_{11}+\epsilon_1\id}{\mat{C}_{1d}}{\mat{C}_{d1}}{\mat{C}_{dd}+\epsilon_d\id}}\\
    &\leq \qn{\fmat{\qnnn{\mat{C}_{11}+\epsilon_1\id}}{\qnnn{\mat{C}_{1d}}}{\qnnn{\mat{C}_{d1}}}{\qnnn{\mat{C}_{dd}+\epsilon_d\id}}}\\
    &= \qn{\DepBCh\fmat{\qnnn{\mat{B}_{11}}}{\qnnn{\mat{B}_{1d}}}{\qnnn{\mat{B}_{d1}}}{\qnnn{\mat{B}_{dd}}}}\\
    &= \qn{\DepBCh(\widetilde{\mat{B}})}.
  \end{aligned}
  \end{equation}

  By the base case, we have
  \begin{equation*}
    \qnn{\DepBChn(\mat{A})}\leq \qn {\DepBCh(\widetilde{\mat{B}}) } \leq \tn{\widetilde{\mat{B}}} = \tn{\fmat{\qnnn{\mat{B}_{11}}}{\qnnn{\mat{B}_{1d}}}{\qnnn{\mat{B}_{d1}}}{\qnnn{\mat{B}_{dd}}}}.
  \end{equation*}

  Recall that $\mat{B}_{ij} = \DepBCh^{\otimes n-1}\br{\mat{A}_{ij}}$.
  So we apply the induction hypothesis for $n-1$ to each $\qnnn{\mat{B}_{ij}}$,
  which yields $\qnnn{\mat{B}_{ij}} \le \tnnn{\mat{A}_{ij}}$.
  Combined with \cref{lem:monotone_biased_norm} for $p=2$ and \cref{lem:exact_norm_compression_biased_2}, we have
  \begin{align*}
   &  \qnn{\DepBChn(\mat{A})}\leq \tn{\fmat{\tnnn{\mat{A}_{11}}}{\tnnn{\mat{A}_{1d}}}{\tnnn{\mat{A}_{d1}}}{\tnnn{\mat{A}_{dd}}}}=\tnn{\mat{A}}. \qedhere
  \end{align*}
\end{proof}

The inductive step of \cref{thm:integer_hc_noncom} can be generalized into the
following tensorization result.

\begin{theorem}\label{cor:two_tensor_hc}
  Fix an integer $q\ge2$. Let $\Psi:\mathcal B(\mathcal H)\to\mathcal B(\mathcal H)$
  be a \PODS\ channel with respect to the eigenbasis of a full-rank state $\bsigma$.
  Assume that
  \(
    \norm{\Psi(\mat{A})}_{\bsigma,q}\leq \norm{\mat{A}}_{\bsigma,2},
     \mat{A}\in\mathcal B(\mathcal H).
  \)
  Let $\Psi':\mathcal B(\mathcal H')\to\mathcal B(\mathcal H')$ be a channel such that,
  for a full-rank state $\bomega$ on $\mathcal H'$,
  \(
    \norm{\Psi'(\mat{B})}_{\bomega,q}\leq \norm{\mat{B}}_{\bomega,2},
     \mat{B}\in\mathcal B(\mathcal H').
  \)
  Then, for every $\mat{X}\in\mathcal B(\mathcal H\otimes\mathcal H')$,
  \[
    \norm{(\Psi\otimes\Psi')(\mat{X})}_{\bsigma\otimes\bomega,q}
    \leq
    \norm{\mat{X}}_{\bsigma\otimes\bomega,2}.
  \]
\end{theorem}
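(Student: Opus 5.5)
We follow the inductive step of \cref{thm:integer_hc_noncom}, with $\Psi'$ in the role of $\DepBCh^{\otimes(n-1)}$ and $\Psi$ in the role of the single-site channel. Work in an eigenbasis of $\bsigma$, so that $\bsigma=\diag(\sigma_1,\dots,\sigma_d)$ and $\Psi$ is \PODS\ with coefficients $\gamma_{i,j}\ge0$ and $\eta_{i,j}\ge0$.

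\textbf{Step 1: reduce to PSD inputs.} $\Psi\otimes\Psi'$ is completely positive. The reduction of \cref{lem:hypercontractivity-only-for-PSD} is stated for $\bsigma^{\otimes n}$, but its proof (Watrous' argument) only uses a full-rank reference state, so it applies equally to $\bsigma\otimes\bomega$. Hence we may assume $\mat{X}\succeq0$.

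\textbf{Step 2: block decomposition.} Write $\mat{X}=(\mat{X}_{ij})$ as a $d\times d$ block matrix with blocks in $\mathcal B(\mathcal H')$, and set $\mat{B}_{ij}=\Psi'(\mat{X}_{ij})$. Since $\mat{X}=\sum_{i,j}\ketbratwo{i}{j}\otimes\mat{X}_{ij}$, the \PODS\ structure gives
\[
(\Psi\otimes\Psi')(\mat{X})=\mat{C}=(\mat{C}_{ij}),
\]
where $\mat{C}_{ij}=\gamma_{i,j}\mat{B}_{ij}$ for $i\ne j$ and $\mat{C}_{jj}=\sum_i\eta_{i,j}\mat{B}_{ii}$. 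The diagonal blocks $\mat{X}_{ii}$ are PSD and $\Psi'$ is positive, so each $\mat{C}_{jj}$ is PSD. Also $\mat{C}$ itself is PSD, because $\mat{X}\succeq0$ and $\Psi\otimes\Psi'$ is completely positive.

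Define the compressed matrix $\widetilde{\mat{B}}=(\norm{\mat{B}_{ij}}_{\bomega,q})_{i,j}$. Its entries are nonnegative, and it is symmetric because $\norm{\mat{Y}^\dagger}_{\bomega,q}=\norm{\mat{Y}}_{\bomega,q}$. By the triangle inequality,
\[
\norm{\mat{C}_{jj}}_{\bomega,q}\le\sum_i\eta_{i,j}\norm{\mat{B}_{ii}}_{\bomega,q}=(\Psi(\widetilde{\mat{B}}))_{jj},
\qquad
\norm{\mat{C}_{ij}}_{\bomega,q}=\gamma_{i,j}\widetilde{\mat{B}}_{ij}=(\Psi(\widetilde{\mat{B}}))_{ij}\ (i\neq j).
\]

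\textbf{Step 3: pad the diagonal and compress.} Choose $\epsilon_j\ge0$ with $\norm{\mat{C}_{jj}+\epsilon_j\id}_{\bomega,q}=(\Psi(\widetilde{\mat{B}}))_{jj}$. Such $\epsilon_j$ exists by continuity, since the map $\epsilon\mapsto\norm{\mat{C}_{jj}+\epsilon\id}_{\bomega,q}$ is nondecreasing and unbounded for PSD $\mat{C}_{jj}$.

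By \cref{lem:diag-monotonicity-psd}, adding the PSD block-diagonal operator $\bigoplus_j\epsilon_j\id$ can only increase the norm. The norm compression inequality \cref{cor:norm_compression_non_com}, applied with $\bomega$ in place of $\bsigma^{\otimes(n-1)}$, then gives
\[
\norm{\mat{C}}_{\bsigma\otimes\bomega,q}\le\qn{\Psi(\widetilde{\mat{B}})}.
\]
This is the step that needs care: the corollary was written for $\bsigma^{\otimes n}$. Its proof, however, only uses that the outer factor $\bsigma$ is diagonal, that the inner weight enters via $\Gamma^{1/q}_{\bomega}$, and \cref{thm:qudit-norm-compression}, which holds for any block size. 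So the same argument goes through verbatim.

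\textbf{Step 4: finish.} The single-site hypothesis gives $\qn{\Psi(\widetilde{\mat{B}})}\le\tn{\widetilde{\mat{B}}}$. The hypothesis on $\Psi'$ gives $\widetilde{\mat{B}}_{ij}\le\norm{\mat{X}_{ij}}_{\bomega,2}$ entrywise. Both matrices have nonnegative entries, so \cref{lem:monotone_biased_norm} with $p=2$ yields
\[
\tn{\widetilde{\mat{B}}}\le\tn{(\norm{\mat{X}_{ij}}_{\bomega,2})_{i,j}}.
\]
Finally, \cref{lem:exact_norm_compression_biased_2} identifies the right-hand side with $\norm{\mat{X}}_{\bsigma\otimes\bomega,2}$, which proves the claim.

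The main obstacle is bookkeeping rather than a new idea. The PSD reduction and the weighted compression must be extended from $\bsigma^{\otimes n}$ to a general product $\bsigma\otimes\bomega$, and we need $\eta_{i,j}\ge0$ and $\gamma_{i,j}\ge0$ so that the triangle-inequality bounds match $\Psi(\widetilde{\mat{B}})$ entrywise, with no absolute values or phases lost. This is exactly where the \PODS\ hypothesis on $\Psi$ is used.
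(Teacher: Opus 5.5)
Your proposal is correct and follows the paper's proof essentially step for step: the reduction to PSD inputs, the block decomposition using the nonnegative \PODS\ coefficients, diagonal padding combined with \cref{lem:diag-monotonicity-psd} and the weighted norm compression \cref{cor:norm_compression_non_com}, then the single-site bound, entrywise monotonicity (\cref{lem:monotone_biased_norm}), and \cref{lem:exact_norm_compression_biased_2}. You also note that the PSD reduction and the weighted compression must be read with a general product weight $\bsigma\otimes\bomega$ rather than $\bsigma^{\otimes n}$. That observation is accurate, and the paper relies on it implicitly.
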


\begin{proof}
By \cref{lem:hypercontractivity-only-for-PSD}, it suffices to consider $\mat{X}\succeq0$.
Write $\mat{X}=(\mat{X}_{ij})_{i,j=1}^d$ as a block matrix over the eigenbasis of $\bsigma$,
and set $\mat{B}_{ij}=\Psi'(\mat{X}_{ij})$ and
\(
  \widetilde{\mat{B}}=\bigl(\norm{\mat{B}_{ij}}_{\bomega,q}\bigr)_{i,j=1}^d .
\)
Using the \PODS\ form of $\Psi$, write
\[
  \Psi(\ketbratwo{i}{j})=\gamma_{i,j}\ketbratwo{i}{j}\quad(i\neq j),
  \qquad
  \Psi(\ketbra{i})=\sum_{j=1}^d\eta_{i,j}\ketbra{j}.
\]
The blocks of $(\Psi\otimes\Psi')(\mat{X})$ are then
$\mat{C}_{ii}=\sum_j\eta_{j,i}\mat{B}_{jj}$ and $\mat{C}_{ij}=\gamma_{i,j}\mat{B}_{ij}$ for $i\neq j$.
Since the \PODS\ coefficients are nonnegative, the triangle inequality gives
\begin{align*}
  \norm{\mat{C}_{ii}}_{\bomega,q}
  &\leq \sum_j\eta_{j,i}\norm{\mat{B}_{jj}}_{\bomega,q}
    =(\Psi(\widetilde{\mat{B}}))_{ii},\\
  \norm{\mat{C}_{ij}}_{\bomega,q}
  &=\gamma_{i,j}\norm{\mat{B}_{ij}}_{\bomega,q}
    =(\Psi(\widetilde{\mat{B}}))_{ij}.
\end{align*}
As in the proof of \cref{thm:integer_hc_noncom}, we may increase the diagonal
blocks if necessary. Applying \cref{lem:diag-monotonicity-psd} and
\cref{cor:norm_compression_non_com} then yields
\[
  \norm{(\Psi\otimes\Psi')(\mat{X})}_{\bsigma\otimes\bomega,q}
  \leq
  \norm{\Psi(\widetilde{\mat{B}})}_{\bsigma,q}
  \leq
  \norm{\widetilde{\mat{B}}}_{\bsigma,2}.
\]
By the assumption on $\Psi'$, entrywise
$\norm{\mat{B}_{ij}}_{\bomega,q}\le\norm{\mat{X}_{ij}}_{\bomega,2}$. Hence
\cref{lem:monotone_biased_norm} and \cref{lem:exact_norm_compression_biased_2}
give
\[
  \norm{\widetilde{\mat{B}}}_{\bsigma,2}
  \leq
  \norm{\bigl(\norm{\mat{X}_{ij}}_{\bomega,2}\bigr)_{i,j}}_{\bsigma,2}
  =
  \norm{\mat{X}}_{\bsigma\otimes\bomega,2}.
\]
\end{proof}

It follows from \cref{cor:two_tensor_hc} that the integer tensorized
hypercontractivity result remains valid for all \PODS\ channels.
\begin{cor}\label{thm:integer_hc_noncom_general}
  Suppose that $\Psi:\mathcal B(\mathbb C^d)\to\mathcal B(\mathbb C^d)$ is a
  \PODS\ channel with respect to the eigenbasis of a full-rank state $\bsigma$.
  Fix an integer $q\ge2$. If
  \[
    \qn{\Psi(\mat{A})}\leq \tn{\mat{A}},\qquad \mat{A}\in\mathcal B(\mathbb C^d),
  \]
  then, for every $n\ge1$ and every
  $\mat{A}\in\mathcal B((\mathbb C^d)^{\otimes n})$,
  \[
    \qnn{\Psi^{\otimes n}(\mat{A})}\leq \tnn{\mat{A}}.
  \]
\end{cor}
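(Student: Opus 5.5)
The plan is a straightforward induction on $n$, with \cref{cor:two_tensor_hc} supplying the inductive step. For $n=1$ the claim is exactly the hypothesis $\qn{\Psi(\mat{A})}\le\tn{\mat{A}}$. Assume the statement holds for $n-1$, i.e.\ $\norm{\Psi^{\otimes(n-1)}(\mat{B})}_{\bsigma^{\otimes(n-1)},q}\le \tnnn{\mat{B}}$ for all $\mat{B}\in\mathcal B((\mathbb C^d)^{\otimes(n-1)})$.

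For the inductive step we apply \cref{cor:two_tensor_hc} with the following data:
\begin{itemize}
\item the first factor is $\mathcal H=\mathbb C^d$ with the \PODS\ channel $\Psi$ and the state $\bsigma$; its $(q,2)$ bound is the single-site hypothesis;
\item the second factor is $\mathcal H'=(\mathbb C^d)^{\otimes(n-1)}$ with the channel $\Psi'=\Psi^{\otimes(n-1)}$ and the full-rank state $\bomega=\bsigma^{\otimes(n-1)}$; its $(q,2)$ bound is the induction hypothesis.
\end{itemize}
\cref{cor:two_tensor_hc} requires \PODS\ only of the first factor, so no structural property of $\Psi^{\otimes(n-1)}$ needs to be checked. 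Since $\Psi\otimes\Psi^{\otimes(n-1)}=\Psi^{\otimes n}$ and $\bsigma\otimes\bsigma^{\otimes(n-1)}=\bsigma^{\otimes n}$, its conclusion is exactly $\qnn{\Psi^{\otimes n}(\mat{A})}\le\tnn{\mat{A}}$.

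The only point needing care is bookkeeping in how \cref{cor:two_tensor_hc} is invoked. That result writes $\mat{X}$ as a block matrix over the eigenbasis of $\bsigma$ on the first tensor factor, so we must order the factors with the \PODS\ site first. Its proof also passes through \cref{lem:hypercontractivity-only-for-PSD}, which needs complete positivity of $\Psi\otimes\Psi'$. This holds because $\Psi$ is a channel and tensor powers of completely positive maps are completely positive.

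I expect no genuine obstacle: the analytic content (integer-$q$ norm compression and the diagonal-padding argument) is already contained in \cref{cor:two_tensor_hc}.
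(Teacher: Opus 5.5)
Your proof is correct and is the argument the paper intends: induction on $n$, with \cref{cor:two_tensor_hc} applied to the \PODS\ site $\Psi$ as the first factor and to $\Psi'=\Psi^{\otimes(n-1)}$, $\bomega=\bsigma^{\otimes(n-1)}$ as the second, the induction hypothesis supplying the $(q,2)$ bound for $\Psi'$. Your observation that \PODS\ is needed only for the first factor is the essential point, since $\Psi^{\otimes(n-1)}$ is generally not \PODS\ in the product basis.
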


\subsection{Complex Interpolation}

Using complex interpolation, we extend the integer
hypercontractivity from the previous subsection to real exponents. Similar ideas are used in
\cite{diaconisLogarithmicSobolevInequalities1996} to derive classical logarithmic-Sobolev
inequalities, and also appear in the work of Olkiewicz and Zegarlinski~\cite{olkiewiczHypercontractivityNoncommutativeLpSpaces1999},
and in the work of Temme, Pastawski, and Kastoryano~\cite{Temme2014Hypercontractivity}
in the matrix setting.

\begin{theorem}
  \label{thm:complex-interpolation}
  Consider a primitive $\bsigma$-reversible Lindblad generator $\LL$, where $\bsigma$ is a
  full-rank $d\times d$ density operator, and write $\Psi_t=\romanE^{-t\LL}$ for the associated
  QMS. Given $2<r<\infty$, if there exists $t_r>0$ such that
  \[
    \norm{\Psi_{t_r}(\mat{A})}_{\bsigma,r}\le \tn{\mat{A}}
  \]
  for every operator $\mat{A}$, then
  \[
    \alpha_2(\LL)\ge \frac{1-2/r}{2t_r}.
  \]
\end{theorem}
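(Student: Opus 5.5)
We interpolate between the trivial estimate at $p=2$ (where $\Psi_{\iu s}$ is an isometry of $L_2(\bsigma)$) and the assumed $(r,2)$ bound. The interpolated estimate is a $(q,2)$-hypercontractive inequality for $q\in[2,r]$ with time $t_0(q)$ depending linearly on $\theta$, where $1/q=(1-\theta)/2+\theta/r$. We then differentiate at $q=2$ via \cref{thm:hc_to_lsi}.

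First, the interpolation. Fix $\mat{A}$ with $\tn{\mat{A}}=1$ and consider the holomorphic family $z\mapsto \Psi_{z t_r}(\mat{A})=\romanE^{-z t_r\LL}(\mat{A})$ on the strip $0\le \mathrm{Re}\, z\le 1$. It is holomorphic by the remark after the QMS definition, and bounded on the strip since we are in finite dimensions.
\begin{itemize}
\item On the line $\mathrm{Re}\, z=0$, \cref{lem:complex-depolarize-2norm-invariant_noncom} (with $n=1$, or directly from KMS self-adjointness of $\LL$) gives $\tn{\Psi_{\iu s t_r}(\mat{A})}=\tn{\mat{A}}$.
\item On the line $\mathrm{Re}\, z=1$, we write $\Psi_{(1+\iu s)t_r}=\Psi_{t_r}\circ\Psi_{\iu s t_r}$. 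The hypothesis then gives $\norm{\Psi_{(1+\iu s)t_r}(\mat{A})}_{\bsigma,r}\le \tn{\Psi_{\iu s t_r}(\mat{A})}=\tn{\mat{A}}$.
\end{itemize}
Kosaki's weighted spaces $L_p(\bsigma)$ form a complex interpolation scale, $[L_2(\bsigma),L_r(\bsigma)]_\theta=L_q(\bsigma)$ with equal norms (the interpolation lemma the paper cites, \cref{lem:com_interpolation_noncom}). Applying the Stein-type three-lines argument to the analytic family of operators $T_z=\Psi_{zt_r}$ from the fixed space $L_2(\bsigma)$ gives
\[
\qn{\Psi_{\theta t_r}(\mat{A})}\le \tn{\mat{A}},\qquad \theta=\frac{1/2-1/q}{1/2-1/r}.
\]
If needed, one implements this by hand with a Riesz–Thorin/Hadamard three-lines argument, pairing against a dual element $\mat{Y}(z)$ built from polar decompositions of $\Gamma^{1/q}_{\bsigma}$-weighted operators raised to $z$-dependent powers.

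Second, the derivative. Set $t_0(q)=t_r\,\frac{1/2-1/q}{1/2-1/r}$. This is smooth, satisfies $t_0(2)=0$, and the display above gives exactly the hypothesis \cref{eq:hc_ad_hoc} with $p=2$ for all $q\in[2,r]$; restricting to a neighbourhood of $q=2$ is enough. We compute
\[
\left.\frac{\mathrm dt_0}{\mathrm dq}\right|_{q=2}=\frac{t_r}{1/2-1/r}\cdot\frac{1}{4}=\frac{t_r}{2(1-2/r)}.
\]
\cref{thm:hc_to_lsi} with $p=2$ then yields
\[
\alpha_2(\LL)\ge \frac{1}{4}\cdot\frac{2(1-2/r)}{t_r}=\frac{1-2/r}{2t_r},
\]
as claimed.

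The main obstacle is the interpolation step. One must justify that the weighted norms $\pn{\cdot}$ interpolate exactly, with constant $1$, when the exponent of $\bsigma$ in $\Gamma^{1/p}_{\bsigma}$ varies with $p$, and must handle the operator-valued analytic family. I would first try to rely on Kosaki's theorem directly, which identifies $L_p(\bsigma)$ with the symmetric interpolation spaces. As a backup, I would write out the three-lines argument with the test function
\[
F(z)=\Tr\bigl[\mat{Y}(z)^\dagger\,\Gamma_{\bsigma}^{\frac12-\frac{z}{2}\cdot(\cdots)}(\Psi_{zt_r}\mat{A})\bigr],
\]
where the exponents are chosen so that the boundary values reduce to the $2$- and $r$-norm bounds above. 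Boundedness of $F$ on the strip is automatic in finite dimension.
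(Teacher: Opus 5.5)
Your proposal is correct and is essentially the paper's proof. It uses the same analytic family $z\mapsto\romanE^{-zt_r\LL}(\mat{A})$ and the same boundary estimates: the $L_2(\bsigma)$-isometry of $\Psi_{\iu s t_r}$ on $\Re z=0$, and $\Psi_{(1+\iu s)t_r}=\Psi_{t_r}\circ\Psi_{\iu s t_r}$ together with the hypothesis on $\Re z=1$. It then applies \cref{thm:hc_to_lsi} at $p=2$ with $t_0(q)=\frac{rt_r(q-2)}{q(r-2)}$, whose derivative $\frac{t_r}{2(1-2/r)}$ at $q=2$ gives the stated bound.

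One small imprecision: boundedness of the family on the strip follows from the real spectrum of the KMS-self-adjoint $\LL$ (equivalently, from the isometry on $\Re z=0$), not from finite dimensionality alone.
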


\begin{remark}

It is proved in~\cite[Theorem 5]{Temme2014Hypercontractivity} that if $\normsub{\Psi_{t_r}}{\bsigma, 2\to r}
\le M_{r}$ for some $M_{r}>0$, the logarithmic-Sobolev constant is lower bounded
by
\[
  \alpha_{2}(\LL)\ge \frac{\lambda(1-2/r)}{2\br{\lambda t_r+\log M_r+\br{r-2}/r}}
  ,
\]
where $M_{r}$ can be arbitrarily large. When $M_{r}= 1$, their
logarithmic-Sobolev lower bound is
\[
  \frac{1-2/r}{2\br{t_r+\br{r-2}/(r\lambda)}},
\]
which is weaker than our bound in \cref{thm:complex-interpolation}. This
is because our \cref{thm:complex-interpolation} only deals with the case
$M_{r}= 1$, which leads directly to $\normsub{\Psi_{t_r}}{\bsigma, 2\to r}\le 1$
(see \cite[Eq. (18) and (19)]{Temme2014Hypercontractivity}), without invoking Rothaus'
lemma (\cite[Theorem 4.2]{olkiewiczHypercontractivityNoncommutativeLpSpaces1999}).
This saves the $\log(M_{r})$ term and removes the term
$(r-2)/(r\lambda)$, which makes our estimate independent of the spectral gap.

\end{remark}

Our proof uses the fact that the $\bsigma$-weighted $L_{p}$ space
satisfies the
following complex interpolation relation, which we state here for completeness.

\begin{lemma}[Complex Interpolation, see e.g. {\cite[Theorem 2]{beigiSandwichedRenyiDivergence2013}}]
  \label{lem:com_interpolation_noncom}
  Fix a full-rank density operator $\bsigma$. For a continuous map
  $F:\{0\le\Re z\le1\}\to B(\mathcal H)$ that is holomorphic on the interior
  $\{0<\Re z<1\}$ and continuous on the boundary, and for parameters $\theta\in[0,1]$ and $p\le q\le r$ satisfying
  \[
    \frac{1}{q}=\frac{1-\theta}{p}+\frac{\theta}{r},
  \]
  one has
  \[
    \norm{F(\theta)}_{\bsigma,q}
    \le
    \sup_{t\in\mathbb R}\norm{F(it)}_{\bsigma,p}^{1-\theta}
    \cdot
    \sup_{t\in\mathbb R}\norm{F(1+it)}_{\bsigma,r}^{\theta}.
  \]
  In particular, for $F(z)=\romanE^{-zs\LL}(\mat{A})$ with $s\in\mathbb R$,
  \[
    \norm{\romanE^{-\theta s\LL}(\mat{A})}_{\bsigma,q}
    \le
    \sup_{t\in\mathbb R}\norm{\romanE^{-(1+it)s\LL}(\mat{A})}_{\bsigma,r}^{\theta}
    \cdot
    \sup_{t\in\mathbb R}\norm{\romanE^{-\iu t s\LL}(\mat{A})}_{\bsigma,p}^{1-\theta}.
  \]
\end{lemma}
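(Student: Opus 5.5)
The natural route is the Stein--Hadamard three-lines argument. Duality reduces the weighted norm to a scalar holomorphic function on the strip. The weights $\bsigma^{1/(2p)}$ are absorbed into an analytic family of powers of $\bsigma$, and the imaginary parts of those powers act as unitaries, so they do not change Schatten norms. If either supremum on the right-hand side is infinite there is nothing to prove, so assume both are finite. I also treat $p\ge1$; the endpoints $p=1$ or $r=\infty$ are handled by the usual conventions, with $|Y|^0$ replaced by the support projection of $Y$.

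\textbf{Step 1 (duality).} Let $1/p(z)=(1-z)/p+z/r$, so that $p(\theta)=q$, and let $p',q',r'$ be the conjugate exponents. By Schatten duality,
\[
\norm{F(\theta)}_{\bsigma,q}=\sup\set{\abs{\Tr[\mat{Y}\,\bsigma^{1/(2q)}F(\theta)\bsigma^{1/(2q)}]}:\normsub{\mat{Y}}{q'}\le1}.
\]
Fix such a $\mat{Y}$ with polar decomposition $\mat{Y}=\mat{U}\abs{\mat{Y}}$, and normalize $\normsub{\mat{Y}}{q'}=1$. Define
\[
\mat{Y}(z)=\mat{U}\abs{\mat{Y}}^{q'\left(\frac{1-z}{p'}+\frac{z}{r'}\right)},
\qquad
G(z)=\Tr\Big[\mat{Y}(z)\,\bsigma^{\frac{1}{2p(z)}}F(z)\,\bsigma^{\frac{1}{2p(z)}}\Big].
\]
Then $\mat{Y}(\theta)=\mat{Y}$. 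Since $\bsigma$ is full rank and we are in finite dimension, $G$ is continuous on the closed strip and holomorphic in its interior.

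\textbf{Step 2 (boundary estimates).} For $z=\iu t$, write $\bsigma^{1/(2p(\iu t))}=\bsigma^{1/(2p)}\bsigma^{\iu s}$ for some real $s$, and note that $\bsigma^{\iu s}$ is unitary. Similarly, $\mat{Y}(\iu t)=\mat{U}\abs{\mat{Y}}^{q'/p'}\cdot(\text{unitary})$, which gives $\normsub{\mat{Y}(\iu t)}{p'}\le\normsub{\mat{Y}}{q'}^{q'/p'}=1$. Hölder's inequality and unitary invariance of Schatten norms then give
\[
\abs{G(\iu t)}\le\normsub{\bsigma^{1/(2p)}F(\iu t)\bsigma^{1/(2p)}}{p}=\norm{F(\iu t)}_{\bsigma,p}.
\]
The same argument on the line $\Re z=1$ gives $\abs{G(1+\iu t)}\le\norm{F(1+\iu t)}_{\bsigma,r}$.

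\textbf{Step 3 (three lines).} Hadamard's three-lines theorem yields
\[
\abs{G(\theta)}\le \sup_t\norm{F(\iu t)}_{\bsigma,p}^{1-\theta}\,\sup_t\norm{F(1+\iu t)}_{\bsigma,r}^{\theta}.
\]
Taking the supremum over $\mat{Y}$ proves the general claim. The special case follows by choosing $F(z)=\romanE^{-zs\LL}(\mat{A})$, which is entire.

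The main obstacle is justifying the growth hypothesis of the three-lines theorem, namely that $G$ is bounded on the strip. The factors $\bsigma^{\iu\cdot}$ and $\abs{\mat{Y}}^{\iu\cdot}$ are uniformly bounded, so boundedness of $G$ reduces to boundedness of $F$ on the strip. If $F$ is only known to be continuous, I would first apply the argument to $\romanE^{\epsilon z^2}G(z)$, which decays as $\abs{\Im z}\to\infty$ provided $F$ grows at most exponentially, and then let $\epsilon\to0$; alternatively I would invoke Phragmén--Lindelöf. For the particular choice $F(z)=\romanE^{-zs\LL}(\mat{A})$, boundedness is immediate when $\LL$ is $\bsigma$-reversible. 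In that case $\romanE^{-\iu ts\LL}$ is an isometry in $\tn{\cdot}$ by \cref{lem:complex-depolarize-2norm-invariant_noncom}. Since all norms are equivalent in finite dimension, $F$ is therefore uniformly bounded in $\Im z$ on the strip.
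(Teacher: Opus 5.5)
Your proposal is correct and is essentially the standard proof behind \cite[Theorem 2]{beigiSandwichedRenyiDivergence2013}, which the paper cites rather than proves: duality, an analytic family built from powers of $\bsigma$ and $\abs{\mat{Y}}$ whose imaginary parts act unitarily, and Hadamard's three-lines theorem. You are right that growth control is needed. Beigi assumes $F$ bounded, and the general statement as written fails without some such hypothesis: the scalar $F(z)=\exp(-\iu\,\romanE^{\iu\pi z})$ has modulus $1$ on both boundary lines but $\abs{F(1/2)}=\romanE$. For $F(z)=\romanE^{-zs\LL}(\mat{A})$, the bound $\norm{\romanE^{-zs\LL}}\le\romanE^{\abs{z}\abs{s}\norm{\LL}}$ already makes your $\romanE^{\epsilon z^2}$ device work, even without reversibility.
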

We now give the proof of \cref{thm:complex-interpolation}.

\begin{proof}[Proof of \cref{thm:complex-interpolation}]
  Apply \cref{lem:com_interpolation_noncom}.
For any $q\in[2,r]$, let $\theta\in[0,1]$ satisfy
\begin{equation*}
  \frac{1}{q}= \frac{1-\theta}{2}+\frac{\theta}{r}.
\end{equation*}
By \cref{lem:com_interpolation_noncom} with $s=t_{r}$, we have
\begin{align*}
  \norm{\romanE^{-\theta t_r\LL}(\mat{A})}_{\bsigma,q}
  &\le
  \sup_{s\in\mathbb R}\norm{\romanE^{-(1+\iu s)t_r\LL}(\mat{A})}_{\bsigma,r}^{\theta}
  \cdot
  \sup_{s\in\mathbb R}\tn{\romanE^{-\iu s t_r\LL}(\mat{A})}^{1-\theta} \\
  &=
  \sup_{s\in\mathbb R}\norm{\romanE^{-t_r\LL}\circ \romanE^{-\iu s t_r\LL}(\mat{A})}_{\bsigma,r}^{\theta}
  \cdot
  \sup_{s\in\mathbb R}\tn{\romanE^{-\iu s t_r\LL}(\mat{A})}^{1-\theta} \\
  &\le
  \sup_{s\in\mathbb R}\tn{\romanE^{-\iu s t_r\LL}(\mat{A})}^{\theta}
  \cdot
  \sup_{s\in\mathbb R}\tn{\romanE^{-\iu s t_r\LL}(\mat{A})}^{1-\theta} \\
  &=
  \sup_{s\in\mathbb R}\tn{\romanE^{-\iu s t_r\LL}(\mat{A})}
  =
  \tn{\mat{A}}.
\end{align*}

  The first inequality is \cref{lem:com_interpolation_noncom}. The second
  inequality follows from the assumption. The last equality follows from \cref{lem:complex-depolarize-2norm-invariant_noncom}.

  Set $t=\theta t_r$. Then
\[
  q=q(t)=\frac{2rt_r}{rt_r-(r-2)t}.
\]
At $t=0$, we have
\[
  \norm{\romanE^{-t\LL}(\mat{A})}_{\bsigma,q(t)}=\tn{\mat{A}}.
\]
Applying
\cref{thm:hc_to_lsi} with input exponent $p=2$ and hypercontractive time
\[
  t=t(q)=\frac{rt_r(q-2)}{q(r-2)}
\]
gives
\[
  \alpha_2(\LL)
  \ge
  \frac{1}{4}\left.\frac{dq}{dt}\right|_{t=0}
  =
  \frac{1-2/r}{2t_r}.
\]

\end{proof}

\subsection{General \texorpdfstring{($q,p$)}{(q,p)}-Hypercontractivity}
We now give the proof of \cref{thm:qudit_hc}.

\begin{proof}[ Proof of \cref{thm:qudit_hc}]

  By \cref{prop:QMS}, the map $\Phi_{t}^{\otimes n}$ is a contraction with
  respect to $\qn{\cdot}$ for all $q\ge 1$, so we only need to prove the desired inequality
  in the $t=t_{0}$ case. Then by \cref{thm:integer_hc_noncom}, the hypercontractivity
  holds for integer values. Specifically, fix $r = 3$ and let $t_{r}= \frac{1}{4\alpha_{2}(\LL_\bsigma)}
  \ln(r-1) = \frac{\ln 2}{4\alpha_{2}(\LL_\bsigma)}$. Then by
  \cref{thm:complex-interpolation},
  \begin{equation}
    \alpha_{2}(\LL_\bsigma^{(n)}) \ge \frac{1-2/r}{2t_{r}}= \frac{2}{3\ln 2}\alpha_{2}(\LL_\bsigma
    )\approx 0.96\alpha_{2}(\LL_\bsigma).\label{eq:lsi_tensorisation}
  \end{equation}
  Finally, we conclude the proof by \cref{thm:logarithmic-Sobolev-and-hypercontractivity-equivalent} and \cref{lem:strook-varopoulos_non_com}.
\end{proof}

For a general primitive $\bsigma$-reversible QMS $(\Psi_t)_{t\ge0}$ satisfying
\PODS, \cref{lem:pods-upgrades-reversibility} shows that the QMS is strongly
$\bsigma$-reversible. Consequently, \cref{cor:two_tensor_hc} and
\cref{thm:complex-interpolation} imply the following tensorization property for
the generator $\LL$, comparable to
\cite[Theorem 24]{beigiQuantumReverseHypercontractivity2020}:
\begin{theorem}\label{lem:two_tensor_lsi}
Let $\bsigma,\bomega$ be full-rank density operators. Suppose that
$(\Psi_t)_{t\ge0}$ is generated by a primitive $\bsigma$-reversible Lindblad
generator $\LL$, while $(\Psi'_t)_{t\ge0}$ is generated by a primitive strongly
$\bomega$-reversible Lindblad generator $\LL'$.
If $(\Psi_t)_{t\ge0}$ satisfies \PODS\ with respect to the eigenbasis of $\bsigma$, then
\[
\alpha_2(\LL\otimes \id' + \id\otimes \LL')
\geq
\frac{2}{3\ln 2}\min\bigl\{\alpha_2(\LL),\alpha_2(\LL')\bigr\}.
\]
Moreover, the assumptions on $\LL$ and $(\Psi_t)_{t\ge0}$ alone imply that,
for every integer $n\ge1$,
\begin{equation}
\alpha_2(\LL^{(n)})
\ge \frac{2}{3\ln 2}\,\alpha_2(\LL)
\approx 0.96\,\alpha_2(\LL).
\label{eq:lsi_tensorisation_general}
\end{equation}
\end{theorem}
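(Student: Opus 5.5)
The plan is to go through $(3,2)$-hypercontractivity at a fixed time, following the route used for \cref{thm:qudit_hc}. Let $\alpha:=\min\{\alpha_2(\LL),\alpha_2(\LL')\}$; we may assume $\alpha>0$.

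\textbf{Single-site estimates.} By \cref{lem:pods-upgrades-reversibility}, $\LL$ is strongly $\bsigma$-reversible, and $\LL'$ is strongly $\bomega$-reversible by hypothesis. \cref{lem:strook-varopoulos_non_com} then gives $\inf_{r\ge1}\alpha_r(\LL)=\alpha_2(\LL)$, and likewise for $\LL'$. Applying \cref{thm:logarithmic-Sobolev-and-hypercontractivity-equivalent} with $p=2$, $q=3$, and
\[
t_3=\frac{\ln 2}{4\alpha}\ \ge\ \max\Bigl\{\frac{\ln 2}{4\alpha_2(\LL)},\frac{\ln 2}{4\alpha_2(\LL')}\Bigr\},
\]
we get $\norm{\Psi_{t_3}(\mat{A})}_{\bsigma,3}\le\tn{\mat{A}}$ and $\norm{\Psi'_{t_3}(\mat{B})}_{\bomega,3}\le\norm{\mat{B}}_{\bomega,2}$ for PSD inputs. \cref{lem:hypercontractivity-only-for-PSD} extends both bounds to all operators, since the semigroup maps are completely positive. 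This extension is needed because \cref{cor:two_tensor_hc} is stated for arbitrary operators.

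\textbf{Tensorizing.} The map $\Psi_{t_3}$ is \PODS\ with respect to the eigenbasis of $\bsigma$. \cref{cor:two_tensor_hc} with $q=3$ therefore gives
\[
\norm{(\Psi_{t_3}\otimes\Psi'_{t_3})(\mat{X})}_{\bsigma\otimes\bomega,3}\le\norm{\mat{X}}_{\bsigma\otimes\bomega,2}.
\]
Moreover, $\Psi_{t}\otimes\Psi'_{t}=\romanE^{-t(\LL\otimes\id'+\id\otimes\LL')}$.

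\textbf{Interpolation.} The sum generator is primitive, because the unique invariant state is $\bsigma\otimes\bomega$: its kernel is spanned by $\id\otimes\id$, as both factors are self-adjoint with nonnegative spectrum. It is also $(\bsigma\otimes\bomega)$-reversible, being a sum of KMS-self-adjoint terms. \cref{thm:complex-interpolation} with $r=3$ then yields
\[
\alpha_2(\LL\otimes\id'+\id\otimes\LL')\ge\frac{1/3}{2t_3}=\frac{2}{3\ln2}\,\alpha .
\]

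\textbf{The $n$-fold statement.} Fix $t_3=\ln2/(4\alpha_2(\LL))$. The one-site $(3,2)$ bound holds by the first step. Since $\Psi_{t_3}$ is \PODS, \cref{thm:integer_hc_noncom_general} tensorizes it to all $n$. $\LL^{(n)}$ is primitive and reversible with respect to $\bsigma^{\otimes n}$, so \cref{thm:complex-interpolation} with $r=3$ gives \cref{eq:lsi_tensorisation_general}. No assumption on any other factor is needed.

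\textbf{Main obstacle.} The delicate point is the first step: obtaining the single-site $(3,2)$ hypercontractivity at time $\ln2/(4\alpha_2)$ from $\alpha_2$ alone. This requires the Stroock--Varopoulos reduction $\inf_r\alpha_r=\alpha_2$, which is exactly why strong reversibility is required of both factors. We also have to check that \cref{lem:hypercontractivity-only-for-PSD} applies to a general CP map on $\mathcal B(\mathcal H')$, which is not necessarily a tensor power of qudits. Its proof is dimension-agnostic, so I expect it to go through.
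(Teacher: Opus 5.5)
Your proposal is correct and follows the paper's proof essentially step for step. The shared steps are: the choice $t_3=\ln 2/(4\min\{\alpha_2(\LL),\alpha_2(\LL')\})$; strong reversibility of $\LL$ via \cref{lem:pods-upgrades-reversibility}; the Stroock--Varopoulos reduction plus \cref{thm:logarithmic-Sobolev-and-hypercontractivity-equivalent} for the one-site $(3,2)$ bounds; the PSD-to-general extension; \cref{cor:two_tensor_hc} (resp.\ \cref{thm:integer_hc_noncom_general} for the $n$-fold claim); and finally \cref{thm:complex-interpolation} with $r=3$. Your explicit check that the sum generator is primitive and $(\bsigma\otimes\bomega)$-reversible, and your remark that \cref{lem:hypercontractivity-only-for-PSD} is dimension-agnostic, spell out points the paper uses tacitly.
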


\begin{proof}
Set
\[
  a=\min\bigl\{\alpha_2(\LL),\alpha_2(\LL')\bigr\},
  \qquad
  t_3=\frac{\ln 2}{4a}.
\]
By \cref{lem:pods-upgrades-reversibility}, $\LL$ is strongly
$\bsigma$-reversible. Hence \cref{lem:strook-varopoulos_non_com} and
\cref{thm:logarithmic-Sobolev-and-hypercontractivity-equivalent}, first for
$\LL$ and then for $\LL'$, give
\[
  \norm{\Psi_{t_3}(\mat{A})}_{\bsigma,3}
  \le \norm{\mat{A}}_{\bsigma,2},
  \qquad
  \norm{\Psi'_{t_3}(\mat{B})}_{\bomega,3}
  \le \norm{\mat{B}}_{\bomega,2}.
\]
The inequalities initially obtained for positive semidefinite inputs extend to
all operators by \cref{lem:hypercontractivity-only-for-PSD}. Since
$\Psi_{t_3}$ is \PODS, \cref{cor:two_tensor_hc} yields
\[
  \norm{(\Psi_{t_3}\otimes\Psi'_{t_3})(\mat{X})}_{\bsigma\otimes\bomega,3}
  \le \norm{\mat{X}}_{\bsigma\otimes\bomega,2}.
\]
The product generator $\LL\otimes\id'+\id\otimes\LL'$ is primitive and
$(\bsigma\otimes\bomega)$-reversible. Applying
\cref{thm:complex-interpolation} with $r=3$ proves
\[
  \alpha_2(\LL\otimes\id'+\id\otimes\LL')
  \ge \frac{1-2/3}{2t_3}
  =\frac{2}{3\ln2}\,a.
\]

For the $n$-fold conclusion, instead set
$t_3=\ln2/(4\alpha_2(\LL))$. The same one-site argument and
\cref{thm:integer_hc_noncom_general} give the $(3,2)$-hypercontractive
inequality for $\Psi_{t_3}^{\otimes n}$ directly. A single application of
\cref{thm:complex-interpolation} to the primitive reversible generator
$\LL^{(n)}$ gives \eqref{eq:lsi_tensorisation_general}.
\end{proof}

As a corollary, we obtain the following hypercontractivity statement:
\begin{cor}
  \label{thm:qudit_hc_general}
  Fix $1\leq p\leq q$, an integer $n\ge1$, and a full-rank $d\times d$ density operator $\bsigma$.
  Suppose that $\LL$ is a primitive $\bsigma$-reversible Lindblad generator, and write
  $\Psi_t=\romanE^{-t\LL}$ for the generated QMS. If $(\Psi_t)_{t\ge0}$ satisfies \PODS\ with respect to the eigenbasis of $\bsigma$, then,
  with
  \[
    t_{0}= \frac{3\ln 2}{2}\cdot \frac{1}{4\alpha_{2}(\LL)}
    \ln{\frac{q-1}{p-1}},
  \]
  where $\alpha_{2}(\LL)$ is the logarithmic-Sobolev constant of $\LL$, one has
  \[
    \qnn{\Psi_t^{\otimes n}(\mat{A})}\leq \pnn{\mat{A}}
  \]
  for every $d^{n}\times d^{n}$ operator $\mat{A}$ acting on $n$ qudits and every $t\ge t_0$.
\end{cor}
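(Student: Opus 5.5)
The corollary should follow from the $n$-fold log-Sobolev bound \eqref{eq:lsi_tensorisation_general} in \cref{lem:two_tensor_lsi}, combined with the LSI-to-hypercontractivity implication of \cref{thm:logarithmic-Sobolev-and-hypercontractivity-equivalent}. That implication is applied to the product generator $\LL^{(n)}$ on $n$ qudits with reference state $\bsigma^{\otimes n}$. The plan has four steps.

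\emph{Step 1: structural facts about $\LL^{(n)}$.} We need $\LL^{(n)}$ to be primitive and $\bsigma^{\otimes n}$-reversible, and, more strongly, \emph{strongly} $\bsigma^{\otimes n}$-reversible.
By \cref{lem:pods-upgrades-reversibility}, $\LL$ is strongly $\bsigma$-reversible, since it is $\bsigma$-reversible and \PODS\ in an eigenbasis of $\bsigma$.
GNS self-adjointness passes to each $\LL_i=\id^{\otimes i-1}\otimes\LL\otimes\id^{\otimes n-i}$, because the GNS inner product for $\bsigma^{\otimes n}$ factorizes on elementary tensors. Hence it passes to the sum $\LL^{(n)}$, and likewise to each $\Psi_t^{\otimes n}$.
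Primitivity of the product semigroup is standard: the kernel of $\LL^{(n)}$ is spanned by $\id$, as one sees from the spectral decomposition of the self-adjoint operator $\LL$ in the KMS Hilbert space.

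\emph{Step 2: uniform $p$-LSI constant.} Since $\LL^{(n)}$ is primitive and strongly reversible, \cref{lem:strook-varopoulos_non_com} gives
\[\inf_{r\ge1}\alpha_r(\LL^{(n)})=\alpha_2(\LL^{(n)}).\]
By \eqref{eq:lsi_tensorisation_general}, this is at least
\[\alpha:=\frac{2}{3\ln 2}\alpha_2(\LL)>0.\]
We also need $\alpha_2(\LL)>0$. For a primitive generator on a finite-dimensional algebra this is standard; in any case, if $\alpha_2(\LL)=0$ then $t_0=\infty$ and there is nothing to prove.

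\emph{Step 3: hypercontractivity for PSD inputs.} \cref{thm:logarithmic-Sobolev-and-hypercontractivity-equivalent} now yields
\[\qnn{\Psi_t^{\otimes n}(\mat{A})}\le\pnn{\mat{A}}\quad\text{for all }\mat{A}\succeq0,\]
whenever $t\ge\frac{1}{4\alpha}\ln\frac{q-1}{p-1}$. This threshold is exactly $t_0$.
Two edge cases need comment. When $p=1$, $t_0=\infty$ and the statement is vacuous. When $p=q$, $t_0=0$ and the claim is contractivity, \cref{prop:QMS}.

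\emph{Step 4: removing positivity.} To pass from PSD $\mat{A}$ to arbitrary $\mat{A}$, apply \cref{lem:hypercontractivity-only-for-PSD} to the completely positive map $\Xi=\Psi_t^{\otimes n}$.

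The only genuinely delicate point is Step 1, namely checking that strong reversibility and primitivity survive tensorization, since \cref{lem:strook-varopoulos_non_com} needs both. I expect it to be routine: \PODS\ itself tensorizes in the product eigenbasis only up to diagonal mixing, but we never need \PODS\ for $\LL^{(n)}$. We only need GNS symmetry, which is evidently multiplicative.
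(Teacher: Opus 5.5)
Your proposal is correct and follows the paper's proof essentially step for step. The paper's chain is: strong reversibility via \cref{lem:pods-upgrades-reversibility}, the $n$-fold bound from \cref{lem:two_tensor_lsi}, then \cref{lem:strook-varopoulos_non_com} and \cref{thm:logarithmic-Sobolev-and-hypercontractivity-equivalent} for PSD inputs, and \cref{lem:hypercontractivity-only-for-PSD} for general inputs; your Step~1 just spells out the tensorization of GNS symmetry and primitivity that the paper asserts in one line.
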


\begin{proof}
By \cref{lem:pods-upgrades-reversibility}, $\LL$ is strongly
$\bsigma$-reversible; hence $\LL^{(n)}$ is primitive and strongly
$\bsigma^{\otimes n}$-reversible. By \cref{lem:two_tensor_lsi},
\[
  \alpha_2(\LL^{(n)})
  \ge \frac{2}{3\ln2}\alpha_2(\LL).
\]
Now \cref{lem:strook-varopoulos_non_com} and
\cref{thm:logarithmic-Sobolev-and-hypercontractivity-equivalent} give the
claimed bound for positive semidefinite inputs at every $t\ge t_0$, and
\cref{lem:hypercontractivity-only-for-PSD} extends it to all operators.
\end{proof}

The qubit case admits an exact refinement of
\cref{lem:two_tensor_lsi,thm:qudit_hc_general}. Indeed, King's norm compression
inequality~\cite[Theorem~1]{king2003inequalities} holds for every real $q\ge2$,
so the proof of \cref{cor:two_tensor_hc} applies without the integer
restriction. Combining this with
\cref{lem:pods-upgrades-reversibility,lem:strook-varopoulos_non_com} and the
standard equivalence between LSI and hypercontractivity yields the following
real-exponent hypercontractive refinement of
\cref{thm:qubit-pods-exact-tensorization}.

\begin{theorem}[Real-exponent hypercontractivity for qubit \PODS\ QMS]
\label{thm:qubit-pods-real-hypercontractivity}
Let $\bsigma$ be a full-rank qubit state and let $\LL$ be a primitive
$\bsigma$-reversible Lindblad generator on
$\mathcal B(\mathbb C^2)$, and write $\Psi_t=\romanE^{-t\LL}$. Suppose that
$\Psi_t$ is \PODS\ with respect to an eigenbasis of $\bsigma$ for every
$t\ge0$. Then, for every integer $n\ge1$ and every real $q\ge2$, the
following tensorized hypercontractive estimate holds:
\[
  \norm{\Psi_{t(q)}^{\otimes n}(\mat{A})}_{\bsigma^{\otimes n},q}
  \le \norm{\mat{A}}_{\bsigma^{\otimes n},2},
  \qquad
  t(q)=\frac{\ln(q-1)}{4\alpha_2(\LL)},
\]
for every $\mat{A}\in\mathcal B((\mathbb C^2)^{\otimes n})$.
Consequently,
\[
  \alpha_2(\LL^{(n)})=\alpha_2(\LL),
\]
recovering \cref{thm:qubit-pods-exact-tensorization}.
\end{theorem}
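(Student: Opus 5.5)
The plan is to run the argument of \cref{cor:two_tensor_hc} with King's real-exponent norm compression in place of \cref{cor:norm_compression_non_com}, then induct on $n$, and finally differentiate at $q=2$.

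\textbf{Step 1: one-site input.} By \cref{lem:pods-upgrades-reversibility}, $\LL$ is strongly $\bsigma$-reversible. Then \cref{lem:strook-varopoulos_non_com} gives $\inf_{r\ge1}\alpha_r(\LL)=\alpha_2(\LL)$. Feeding this into \cref{thm:logarithmic-Sobolev-and-hypercontractivity-equivalent} with $p=2$ gives, for every real $q\ge2$,
\[
\norm{\Psi_{t(q)}(\mat{A})}_{\bsigma,q}\le\tn{\mat{A}}
\]
for $\mat{A}\succeq0$. \cref{lem:hypercontractivity-only-for-PSD} extends this to all $\mat{A}$.

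\textbf{Step 2: biased real-$q$ norm compression for $2\times2$ blocks.} For $\mat{M}\succeq0$ on $\mathbb C^2\otimes\mathcal H'$ and diagonal $\bsigma$, we want
\[
\norm{\mat{M}}_{\bsigma\otimes\bomega,q}\le\norm{(\norm{\mat{M}_{ij}}_{\bomega,q})_{ij}}_{\bsigma,q}
\]
for all real $q\ge2$. The reduction is the computation in \cref{cor:norm_compression_non_com}: conjugating by $(\bsigma\otimes\bomega)^{1/2q}$ turns the blocks into $(\sigma_i\sigma_j)^{1/2q}\Gamma_{\bomega}^{1/q}(\mat{M}_{ij})$. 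This matrix is still PSD, so King's theorem \cite[Theorem~1]{king2003inequalities} applies in place of \cref{thm:qudit-norm-compression}. Pulling the scalars back out gives the weighted statement.

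\textbf{Step 3: induction on $n$.} Write $\mat{X}\succeq0$ as a $2\times2$ block matrix over the eigenbasis of $\bsigma$ and set $\mat{B}_{ij}=\Psi_{t}^{\otimes(n-1)}(\mat{X}_{ij})$. The PODS structure with nonnegative $\gamma,\eta$ makes the blocks of $\Psi_t^{\otimes n}(\mat{X})$ equal to nonnegative combinations $\mat{C}_{ii}=\sum_j\eta_{j,i}\mat{B}_{jj}$ and $\mat{C}_{ij}=\gamma_{i,j}\mat{B}_{ij}$. The triangle inequality bounds the diagonal block norms by the entries of $\Psi_t(\widetilde{\mat{B}})$. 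We then pad the diagonal using \cref{lem:diag-monotonicity-psd}, which holds for all real $p\ge1$. The off-diagonal block norms match the entries of $\Psi_t(\widetilde{\mat{B}})$ exactly, so Step 2 yields
\[
\norm{\Psi_t^{\otimes n}(\mat{X})}_{q}\le\qn{\Psi_t(\widetilde{\mat{B}})}\le\tn{\widetilde{\mat{B}}}.
\]
By the inductive hypothesis each entry of $\widetilde{\mat{B}}$ is at most $\norm{\mat{X}_{ij}}_{\bsigma^{\otimes(n-1)},2}$.

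\textbf{Main obstacle.} The delicate point is this last monotonicity step. \cref{lem:monotone_biased_norm} only covers even exponents, but here it is applied at $p=2$, which is fine. Alternatively, for $p=2$ the norm is an explicit weighted sum of squared entries, so monotonicity in nonnegative entries is immediate. \cref{lem:exact_norm_compression_biased_2} then closes the induction. One must also check that the entries of $\widetilde{\mat{B}}$ are nonnegative and that $\Psi_t$ acting on the $2\times2$ matrix $\widetilde{\mat{B}}$ is exactly the one-site map. Both are immediate from the definitions.

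\textbf{Step 4: conclusion.} Apply \cref{thm:hc_to_lsi} to $\LL^{(n)}$ with $p=2$ and $t_0(q)=\ln(q-1)/(4\alpha_2(\LL))$. Since $\frac{dt_0}{dq}\big|_{q=2}=\frac{1}{4\alpha_2(\LL)}$, this gives $\alpha_2(\LL^{(n)})\ge\alpha_2(\LL)$. The reverse inequality follows by testing the variational formula on operators of the form $\mat{X}\otimes\id^{\otimes(n-1)}$: for these, both the entropy and the Dirichlet form reduce to the one-site quantities, since $\LL$ annihilates $\id$. Hence equality holds.
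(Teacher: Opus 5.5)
Your proposal is correct and follows essentially the same route as the paper: one-site $(q,2)$-hypercontractivity via \PODS\ $\Rightarrow$ strong reversibility, Stroock--Varopoulos and the LSI-to-HC theorem; then the two-tensor argument with King's real-exponent $2\times 2$ norm compression in place of the integer compression; then induction on $n$ and differentiation at $q=2$. You also supply the reverse inequality $\alpha_2(\LL^{(n)})\le\alpha_2(\LL)$ by testing on $\mat{X}\otimes\id^{\otimes(n-1)}$, a standard step the paper leaves implicit; just state the inductive hypothesis for all operators, not only PSD ones, using the PSD-reduction lemma at each level, since it is applied to the off-diagonal blocks $\mat{X}_{ij}$.
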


For comparison, the works~\cite{beigiQuantumReverseHypercontractivity2020,Temme2014Hypercontractivity}
take a different route to tensorization for general reversible QMS, based on the
tensorization of the spectral gap. Suppose that $\LL$ is a primitive
$\bsigma$-reversible Lindblad generator, so that the Schr\"odinger-picture semigroup admits
$\bsigma$ as its unique invariant density state. The \emph{spectral gap} of $\LL$ is
  \[
    \lambda(\LL)=\inf_{\mat{X},\svar{\mat{X}}>0}{\frac{\diri{2}{\mat{X}}}{\svar{\mat{X}}}},
  \]
  where $\svar{\mat{X}}=\spro{\mat{X}}{\mat{X}}-\spro{\mat{X}}{\id}^{2}=\tn{\mat{X}}^{2}-\spro{\mat{X}}{\id}^{2}$. The
  spectral gap always satisfies the tensorization property:
  \[
    \lambda(\nll)=\lambda(\LL).
  \]
  It is standard (e.g. \cite{Temme2014Hypercontractivity})  that we have
  $\lambda(\LL)\geq 2\alpha_{2}(\LL)$. Following this approach in \cite{beigiQuantumReverseHypercontractivity2020}, we obtain the following tensorization property:

\begin{cor}
  \label{cor:tensorized_lsi_general}
  Every primitive $\bsigma$-reversible Lindblad generator $\LL$ satisfies
  \begin{equation}\label{eq:spectral}
\alpha_2(\LL^{(n)})
\ge
\frac{2}{3\ln2}\alpha_2(\LL_\bsigma)\lambda(\LL)
\ge
\frac{4}{3\ln2}\alpha_2(\LL_\bsigma)\alpha_2(\LL),
  \end{equation}
  where $\LL_\bsigma$ is the simple Lindblad generator associated with $\bsigma$ and $\lambda(\LL)$ is the spectral gap of $\LL$.
\end{cor}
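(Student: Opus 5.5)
The plan is to compare Dirichlet forms site by site and push the whole problem onto the simple generator $\LL_\bsigma$. For $\LL_\bsigma$, approximate tensorization is already available from \cref{eq:lsi_tensorisation}, namely $\alpha_2(\LL_\bsigma^{(n)})\ge \frac{2}{3\ln2}\alpha_2(\LL_\bsigma)$. The key observation is that the $2$-entropy $\ent{2}{\mat{X}}$ depends only on the reference state $\bsigma^{\otimes n}$, not on the generator. Hence it suffices to prove the operator inequality
\[
\mathcal{E}_{2,\LL^{(n)}}(\mat{X})\ \ge\ \lambda(\LL)\,\mathcal{E}_{2,\LL_\bsigma^{(n)}}(\mat{X})\qquad\text{for all }\mat{X}\succ0.
\]
Dividing by $\ent{2}{\mat{X}}$ and taking the infimum in \eqref{def:LSI} then gives $\alpha_2(\LL^{(n)})\ge\lambda(\LL)\,\alpha_2(\LL_\bsigma^{(n)})$, and the first inequality in \eqref{eq:spectral} follows.

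\textbf{Step 1: reduce to one site.} For $p=2$ we have $\hat p=2$ and $I_{2,2}(\mat{X})=\mat{X}$, so $\mathcal{E}_{2,\LL}(\mat{X})=\spro{\mat{X}}{\LL(\mat{X})}$ is just the quadratic form of $\LL$ in $L_2(\bsigma)$. This form is linear in the generator, so $\mathcal{E}_{2,\LL^{(n)}}=\sum_i\mathcal{E}_{2,\LL_i}$, and likewise for $\LL_\bsigma^{(n)}$. It therefore suffices to show, for each $i$,
\[
\LL_i\ \ge\ \lambda(\LL)\,(\LL_\bsigma)_i
\]
as self-adjoint operators on $L_2(\bsigma^{\otimes n})$.

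\textbf{Step 2: the single-site operator inequality.} Since $\LL$ is primitive and $\bsigma$-reversible, it is self-adjoint and nonnegative on $L_2(\bsigma)$ with kernel $\mathbb{C}\id$. Let $P(\mat{X})=\Tr[\bsigma\mat{X}]\id$, the KMS-orthogonal projection onto $\mathbb{C}\id$. Then $\LL_\bsigma=\mathrm{id}-P$, and $\mathcal{E}_{2,\LL_\bsigma}(\mat{X})=\svar{\mat{X}}$. The definition of the spectral gap is exactly the statement $\LL\ge\lambda(\LL)(\mathrm{id}-P)$ on $L_2(\bsigma)$.

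\textbf{Step 3: lift to the $n$-fold system.} The KMS inner product for $\bsigma^{\otimes n}$ factorizes, so $L_2(\bsigma^{\otimes n})\cong L_2(\bsigma^{\otimes(i-1)})\otimes L_2(\bsigma)\otimes L_2(\bsigma^{\otimes(n-i)})$ as Hilbert spaces. Tensoring a positive operator with the identity preserves positivity. Applying this to $\LL-\lambda(\LL)(\mathrm{id}-P)\ge0$ gives Step 1's inequality for each $i$, since $(\LL_\bsigma)_i=\mathrm{id}^{\otimes(i-1)}\otimes(\mathrm{id}-P)\otimes\mathrm{id}^{\otimes(n-i)}$.

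Summing over $i$ and invoking \cref{eq:lsi_tensorisation} yields the first inequality of \eqref{eq:spectral}. The second inequality follows from the standard bound $\lambda(\LL)\ge2\alpha_2(\LL)$.

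The main obstacle is Step 3. One must check carefully that the one-site gap inequality transfers to the embedded generator $\LL_i$ through the factorization of the KMS inner product, and that the paper's normalization of $\mathcal{E}_2$ agrees with the quadratic form $\spro{\mat{X}}{\LL(\mat{X})}$ and with the variance used in the definition of $\lambda(\LL)$. No \PODS\ assumption on $\LL$ is needed, because all the tensorization is carried by the simple generator $\LL_\bsigma$.
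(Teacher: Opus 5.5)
Your proposal is correct and follows essentially the same route as the paper. Both arguments apply \eqref{eq:lsi_tensorisation} to $\LL_\bsigma^{(n)}$, compare Dirichlet forms site by site via $\spro{\mat{X}}{\LL_i(\mat{X})}\ge\lambda(\LL)\spro{\mat{X}}{\LL_{\bsigma,i}(\mat{X})}$, sum over $i$, and finish with $\lambda(\LL)\ge 2\alpha_2(\LL)$. Your Step 3 tensors the positive operator $\LL-\lambda(\LL)(\mathrm{id}-P)$ with identities on the factorized space $L_2(\bsigma^{\otimes n})$; this is a more explicit justification of the local inequality that the paper attributes to ``tensorization of the spectral gap.''
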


 \begin{proof}
Let
\[
    E_\bsigma(\mat{X})=\Tr(\bsigma \mat{X})\id,
    \qquad
    \LL_\bsigma(\mat{X})=\mat{X}-E_\bsigma(\mat{X}).
\]
For each $1\le i\le n$, write
\[
    \LL_i=\id^{\otimes(i-1)}\otimes \LL\otimes \id^{\otimes(n-i)},
    \qquad
    \LL_{\bsigma,i}=\id^{\otimes(i-1)}\otimes \LL_\bsigma\otimes \id^{\otimes(n-i)},
\]
and
\[
    E_{\bsigma,i}
    =
    \id^{\otimes(i-1)}\otimes E_\bsigma\otimes \id^{\otimes(n-i)}.
\]
Applying the logarithmic-Sobolev estimate \eqref{eq:lsi_tensorisation} to
$\LL_\bsigma^{(n)}=\sum_{i=1}^{n}\LL_{\bsigma,i}$ gives, for all
$\mat{X}\in B(\mathcal{H}^{\otimes n})$,
\[
    C\ent{2}{\mat{X}}
    \leq
    \sum_{i=1}^{n}\mathcal{E}_{\LL_{\bsigma,i}}(\mat{X})
    =
    \sum_{i=1}^{n}\spro{\mat{X}}{\LL_{\bsigma,i}(\mat{X})},
\]
where
\(
    C=\frac{2}{3\ln 2}\alpha_2(\LL_\bsigma).
\)

    Since the spectral gap tensorizes, $\lambda(\LL_{i})=\lambda(\LL)$ for each $i$, and we have
    \[
      \spro{\mat{X}}{\LL_i(\mat{X})}\ge \lambda(\LL) \norm{ \mat{X}-E_{\bsigma, i}(\mat{X})}_{\bsigma^{\otimes n},
      2}^{2}=\lambda(\LL) \spro{\mat{X}}{\LL_{\bsigma,i}(\mat{X})}.
    \]
    Summing over $1\le i\le n$, we have
    \[
      \mathcal{E}_{\LL^{(n)}}(\mat{X})=\sum_{i=1}^{n}\spro{\mat{X}}{\LL_i(\mat{X})}\ge \lambda(\LL
      ) \sum_{i=1}^{n}\spro{\mat{X}}{\LL_{\bsigma,i}(\mat{X})}\ge C\lambda(\LL)\ent{2}{\mat{X}},
    \]
    which proves the claim.
  \end{proof}

  In a similar form, \cite{Temme2014Hypercontractivity} gives another spectral-gap-based
  quasi tensorization property:

\begin{theorem}[{\cite[Theorem 9]{Temme2014Hypercontractivity}}]
  \label{thm:tpk14_tensor}
  Every primitive $\bsigma$-reversible Lindblad generator $\LL$ satisfies
  \begin{align*}
    \alpha_{2}(\LL^{(n)})\geq
    \frac{1}{2}\cdot
    \frac{\lambda(\LL)}{\ln(\lambda(\bsigma)^{-1})+11+4\ln d}
    \geq
    \frac{\alpha_{2}(\LL)}{\ln(\lambda(\bsigma)^{-1})+11+4\ln d}.
  \end{align*}
\end{theorem}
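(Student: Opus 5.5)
The second inequality is immediate from the standard comparison $\lambda(\LL)\ge 2\alpha_2(\LL)$ recalled above, so the work is in the first. I would first reduce to the simple generator. The argument in the proof of \cref{cor:tensorized_lsi_general} shows more generally that
\[
\alpha_2(\LL^{(n)})\ \ge\ \lambda(\LL)\,\alpha_2(\LL_\bsigma^{(n)}) .
\]
It uses only that each local Dirichlet form dominates $\lambda(\LL)$ times the local variance, which is the Dirichlet form of $\LL_{\bsigma,i}$, and that the entropy bound is a property of $\LL_\bsigma^{(n)}$ alone. So it suffices to prove the dimension-dependent but $n$-independent bound
\[
\alpha_2(\LL_\bsigma^{(n)})\ \ge\ \frac{1}{2\bigl(\ln(\lambda(\bsigma)^{-1})+11+4\ln d\bigr)} .
\]

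For this I would use the spectral-gap route of \cite[Theorem~5]{Temme2014Hypercontractivity}, recalled in the remark after \cref{thm:complex-interpolation}. If $\|\Phi_{t_r}^{\otimes n}\|_{\bsigma^{\otimes n},2\to r}\le M_r$, then
\[
\alpha_2\ \ge\ \frac{\lambda(1-2/r)}{2\bigl(\lambda t_r+\log M_r+(r-2)/r\bigr)} .
\]
The inputs are as follows. First, $\lambda(\LL_\bsigma^{(n)})=\lambda(\LL_\bsigma)=1$ by tensorization of the gap. Second, I take $r=4$ (any fixed $r$ works, with only the numerical constant $11$ changing). Third, I need a bound on $\log M_4$ that is uniform in $n$.

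The natural tool for the $M_4$ bound is an $n$-independent $(4,2)$-hypercontractivity statement for the tensorized depolarizing semigroup. For example, \cref{thm:integer_hc_noncom} with $q=4$ gives $M_4=1$ at time $t_4=\ln 3/(4\alpha_2(\LL_\bsigma))$. Alternatively, in the spirit of the original argument, one can bound the one-site $2\to4$ norm for large $t$ by a constant involving $\lambda(\bsigma)^{-1}$ and $d$, and then use Rothaus' lemma to remove the contribution of constants. The terms $\ln(\lambda(\bsigma)^{-1})$ and $4\ln d$ should come from the operator-norm comparison
\[
\|\mat{X}\|_\infty\le \lambda(\bsigma)^{-1/2}\,d^{\,\cdot}\,\|\mat{X}\|_{\bsigma,2}
\]
for the projection $E_\bsigma$ and from $\alpha_2(\LL_\bsigma)^{-1}\lesssim\ln(\lambda(\bsigma)^{-1})$. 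Plugging these in and optimising $t_r$ yields the stated denominator.

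The main obstacle is making $M_r$ independent of $n$. Naive multiplicativity of (completely bounded) $2\to r$ norms gives $M_r^n$, which is useless. So the argument must either pass through a genuine tensorized hypercontractive estimate, such as \cref{thm:integer_hc_noncom} for integer $q$ or a Qin--Yao type $(4,2)$ bound, or else arrange $M_r=1$ exactly at a single-site level and then tensorize. I would try the first option. The remaining work is bookkeeping of the constants, which I expect reproduces the number $11$ only up to the particular choices made in \cite{Temme2014Hypercontractivity}.
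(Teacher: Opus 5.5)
The paper gives no proof of this statement. It is quoted from \cite{Temme2014Hypercontractivity}, and the paper only shows, via \cref{cor:tensorized_lsi_general} and \cref{app:comparison}, that its own spectral-gap bound is stronger. Your first route is exactly that mechanism, and it works.

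\begin{itemize}
\item The reduction $\alpha_2(\LL^{(n)})\ge\lambda(\LL)\,\alpha_2(\LL_\bsigma^{(n)})$ holds for every primitive KMS-reversible $\LL$. This is because $\LL\ge\lambda(\LL)\,\LL_\bsigma$ as self-adjoint operators on $L_2(\bsigma)$, and this operator inequality tensorizes.
\item \cref{thm:integer_hc_noncom} does give an $n$-independent $2\to4$ bound with constant exactly $1$.
\item The only difference from the paper is that you feed $M_4=1$ into Theorem~5 of \cite{Temme2014Hypercontractivity}. When $M_r=1$, \cref{thm:complex-interpolation} is strictly better, since it has no $(r-2)/r$ term. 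With $r=3$ it gives $\alpha_2(\LL_\bsigma^{(n)})\ge\frac{2}{3\ln 2}\alpha_2(\LL_\bsigma)$ directly.
\end{itemize}

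What is missing is the final step, and you describe it incorrectly. With $r=4$, $M_4=1$, $\lambda(\LL_\bsigma^{(n)})=1$ and $t_4=\ln 3/(4\alpha_2(\LL_\bsigma))$, Theorem~5 yields
\[
  \alpha_2(\LL_\bsigma^{(n)})\ \ge\ \frac{1}{\ln 3\cdot\alpha_2(\LL_\bsigma)^{-1}+2}.
\]
This is not ``the stated denominator.'' No operator-norm comparison or Rothaus step enters. The constants $11$ and $4\ln d$ are artefacts of the original argument and never appear in your route.

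You therefore still have to show that this bound dominates $\bigl(2(\ln\lambda(\bsigma)^{-1}+11+4\ln d)\bigr)^{-1}$. The check is short. Set $y=\lambda(\bsigma)^{-1}-1\ge 1$, taking the value at $y=1$ by continuity. Then
\[
  \alpha_2(\LL_\bsigma)^{-1}=\frac{(y+1)\ln y}{y-1}=\ln y+\frac{2\ln y}{y-1}\le \ln y+2 .
\]
Hence $\ln 3\cdot\alpha_2(\LL_\bsigma)^{-1}+2\le \ln 3\cdot\ln(y+1)+2\ln 3+2<2\ln(y+1)+22+8\ln d$, where $\ln(y+1)=\ln\lambda(\bsigma)^{-1}$. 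This is the analogue of \cref{app:comparison}.

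Your second option should be dropped. It is not carried out; the power of $d$ is left unspecified. Moreover, as you note yourself, a one-site $2\to4$ bound with $M_4>1$ does not tensorize. Obtaining $M_4=1$ for the tensor power is precisely what \cref{thm:integer_hc_noncom} provides.
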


The estimates in \cref{cor:tensorized_lsi_general} and
\cref{thm:tpk14_tensor} apply to general primitive reversible QMS, whereas
\eqref{eq:lsi_tensorisation_general} applies to the more structured primitive \PODS\
class. Thus these bounds should be compared with their scopes in mind. In the
\PODS\ setting, \eqref{eq:lsi_tensorisation_general} gives the direct dimension-free estimate
\[
    \alpha_2(\LL^{(n)})\ge \frac{2}{3\ln2}\alpha_2(\LL),
\]
without reducing through the spectral gap. For general primitive reversible QMS,
\cref{cor:tensorized_lsi_general} gives a better spectral-gap-based tensorization
estimate than \cref{thm:tpk14_tensor}. The rigorous proof of this fact is given in Appendix \ref{app:comparison}.

\section{Optimal \texorpdfstring{($q,2$)}{(q,2)}-Hypercontractivity for Generalized Qubit Depolarizing Channels}\label{sec:q_2_qubit_hc}

In this section, we present an optimal ($q,2$)-hypercontractivity result for the generalized qubit depolarizing channel that matches classical results. It suffices to restrict attention to the case where $\bsigma$ is diagonal and $\mat{A}$ is positive semidefinite. Note that in this section, we do not require $q$ to be an integer.

For a $2\times 2$ PSD matrix $\mat{A}=\ffmat abdc$, and $\bsigma = \ffmat{\mu}{}{}{1-\mu}$, the generalized qubit depolarizing channel can be expressed as
    \begin{align*}
    \DepBCh (\mat{A}) &=\rho \mat{A}+(1-\rho)\Tr(\mat{A}\bsigma)\id \\
    &=\ffmat{\rho a+(1-\rho)(\mu a+(1-\mu)c)}{\rho b}{\rho d}{\rho c+(1-\rho)(\mu a+(1-\mu)c)},
    \end{align*}
    where $0\leq \mu\leq \frac 1 2$ and $d=\overline b$.

We first recall the corresponding classical two-point estimate for diagonal self-adjoint inputs.

\begin{theorem}[Classical \texorpdfstring{($q,2$)}{(q,2)} HC, \cite{pawelHypercontractivitySimpleRandom2007,oleszkiewicz2003nonsymmetric}]\label{thm:classical_q_2_hc}
    If $\mat{A}=\ffmat a00c$ is diagonal and self-adjoint, then the following hypercontractivity holds for the generalized qubit depolarizing channel $\DepBCh$, where $\mu\leq 1/2$ is the smaller eigenvalue of full-rank density operator $\bsigma$:
    \[
        \qn{\DepBCh(\mat{A})}\leq \tn{\mat{A}},
    \]
    for
    $$0\leq \rho\leq \rho_\mu=\sqrt{\frac{(1-\mu)^{\frac 2 q}-\mu^{\frac 2 q}}{\mu^{\frac 2q-1}(1-\mu)-(1-\mu)^{\frac 2 q-1}\mu}}.$$
    Moreover, $\rho_\mu$ is optimal for $q\geq 2$.
\end{theorem}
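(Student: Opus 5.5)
Since the statement concerns only diagonal inputs, the plan is to reduce it to the classical two-point biased hypercontractivity problem and then analyze a one-variable inequality.

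\textbf{Reduction to a two-point function.} For $\mat{A}=\diag(a,c)$ and $\bsigma=\diag(\mu,1-\mu)$, the output $\DepBCh(\mat{A})$ is again diagonal, with entries $\rho a+(1-\rho)m$ and $\rho c+(1-\rho)m$, where $m=\mu a+(1-\mu)c$. For diagonal $\mat{X}=\diag(x,y)$, the matrices $\bsigma^{1/(2q)}$ and $\mat{X}$ commute, so
\[
\qn{\mat{X}}^q=\mu\abs{x}^q+(1-\mu)\abs{y}^q .
\]
The inequality therefore becomes $\norm{T_\rho f}_{L_q(\pi)}\le\norm{f}_{L_2(\pi)}$ for $f=(a,c)$ on the two-point space with measure $\pi=(\mu,1-\mu)$, where $T_\rho f=\rho f+(1-\rho)\mathbb E_\pi f$. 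Since $T_\rho$ is positivity preserving, $\abs{T_\rho f}\le T_\rho\abs{f}$ pointwise, while $\norm{f}_2=\norm{\abs f}_2$. Hence we may assume $a,c\ge0$.

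\textbf{One-variable inequality.} Decompose $f=m+s\,g$ with $g=(1-\mu,-\mu)$, which is mean zero with $\norm g_2^2=\mu(1-\mu)$. The case $m=0$ follows by a limiting argument. Otherwise normalize $m=1$, so the claim is
\[
F(s):=\bigl(\mu(1+\rho(1-\mu)s)^q+(1-\mu)(1-\rho\mu s)^q\bigr)^{2/q}\le 1+\mu(1-\mu)s^2
\]
for all $s$ keeping both coordinates of $f$ nonnegative. I would follow the Latała--Oleszkiewicz/Wolff route. Write $u=1+(1-\mu)s$ and $v=1-\mu s$ for the two values of $f$, and parametrize the inequality in the variables $u^{2/q}$ and $v^{2/q}$ (equivalently $\abs{f}^{2/q}$ versus $\abs{f}$). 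This turns the problem into a concavity/convexity statement for an explicit function of one ratio variable. The threshold $\rho_\mu^2$, with numerator $(1-\mu)^{2/q}-\mu^{2/q}$ and denominator $\mu^{2/q-1}(1-\mu)-(1-\mu)^{2/q-1}\mu$, has exactly the shape of a ratio of first differences. This suggests that the extremal configuration is the one where $f$ is proportional to $(\mu^{-1},(1-\mu)^{-1})$ raised to a suitable power, that is, an indicator-type function. It then suffices to check that the ratio $\bigl(F(s)-1\bigr)/\bigl(\mu(1-\mu)s^2\bigr)$ is monotone in the appropriate variable, so that it is maximized at this extremal point.

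\textbf{Sharpness.} For optimality I would plug in the extremal test function identified above (or a sequence approaching it) and expand. This should show that any $\rho>\rho_\mu$ violates the inequality. It is also worth checking the consistency limits: as $\mu\to1/2$ the formula should recover $\rho_\mu=(q-1)^{-1/2}$, and the bound should be compatible with the local condition from the second-order Taylor expansion at $s=0$.

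\textbf{Main obstacle.} The hard step is the global monotonicity in the one-variable inequality. The local expansion at $s=0$ alone does not give the sharp constant, because the extremum is attained away from $s=0$. Since this theorem is classical and cited (\cite{pawelHypercontractivitySimpleRandom2007,oleszkiewicz2003nonsymmetric}), in the paper I would defer this calculus step to those references and only carry out the reduction above explicitly.
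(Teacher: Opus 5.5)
The paper gives no proof of this statement; it simply quotes it from Wolff and Oleszkiewicz. Your reduction is correct, and it is exactly what makes that citation applicable:
for diagonal $\mat{A}$ the output stays diagonal, and $\norm{\diag(x,y)}_{\bsigma,q}^q=\mu\abs{x}^q+(1-\mu)\abs{y}^q$;
the reduction to $a,c\ge0$ via $\abs{T_\rho f}\le T_\rho\abs{f}$ is valid for $0\le\rho\le1$;
and $F(s)\le 1+\mu(1-\mu)s^2$ is the right normalized inequality.
Incidentally, once $a,c\ge0$, $m=0$ forces $a=c=0$, so no limiting argument is needed. Deferring the global one-variable inequality to those references is consistent with the paper.

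What does not work is your identification of the extremal configuration. This undermines your sharpness plan and the ``monotone ratio'' strategy. The extremizer is neither an indicator nor an endpoint of the admissible $s$-range. Equality at $\rho=\rho_\mu$ occurs when the \emph{output} $h=T_\rho f$ (not $f$) is proportional to $(\mu^{-2/q},(1-\mu)^{-2/q})$. To see this, write $A=\mu$, $B=1-\mu$, $r=2/q$. Then $\norm{h}_{L_q(\pi)}^2=(AB)^{-r}$ because $A+B=1$. The identity $(AB)^r-(AB^r+A^rB)^2=(A^rB^2-A^2B^r)(B^r-A^r)$, again using $A+B=1$, gives
\[
\frac{\Var_\pi(h)}{\norm{h}_{L_q(\pi)}^2-(\mathbb{E}_\pi h)^2}=\frac{B^r-A^r}{A^{r-1}B-B^{r-1}A}=\rho_\mu^2 .
\]
Sharpness is then immediate. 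For $\rho>\rho_\mu$ set $f=\mathbb{E}_\pi h+\rho^{-1}(h-\mathbb{E}_\pi h)$. Then $T_\rho f=h$ and $\norm{f}_{L_2(\pi)}^2=(\mathbb{E}_\pi h)^2+\rho^{-2}\Var_\pi(h)<\norm{h}_{L_q(\pi)}^2$.

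The corresponding input need not be (and typically is not) an indicator. So at $\rho=\rho_\mu$ your ratio $(F(s)-1)/(\mu(1-\mu)s^2)$ reaches its supremum $1$ at an interior point and is not monotone. For instance, take $q=4$ and $\mu=1/10$, so that $\rho_\mu^2=3/13$. The ratio tends to $9/13$ as $s\to0$, equals $1$ at $s^*\approx3.47$, and is only about $0.885$ at the indicator endpoint $s=1/\mu$. An indicator test function therefore yields a strictly weaker threshold and cannot establish optimality.
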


We next pass from the classical diagonal estimate to the genuinely quantum qubit case. The following theorem shows that, for positive semidefinite $2\times2$ inputs, the off-diagonal entries do not reduce the optimal admissible value of $\rho$ from the classical two-point bound.

\begin{theorem}[Single-Site Generalized Qubit Depolarizing \texorpdfstring{($q,2$)}{(q,2)} HC]\label{thm:single_qubit_q_2_hc}
    For a $2\times 2$ full-rank density operator $\bsigma$, denote $\mu\leq 1/2$ is the smaller eigenvalue of $\bsigma$. Given $0\leq \rho\leq \rho_\mu=\sqrt{\frac{(1-\mu)^{\frac 2 q}-\mu^{\frac 2 q}}{\mu^{\frac 2q-1}(1-\mu)-(1-\mu)^{\frac 2 q-1}\mu}}$, the following hypercontractivity holds for the generalized qubit depolarizing channel $\DepBCh$ and every $2\times 2$ PSD matrix $\mat{A}$; moreover, $\rho_\mu$ is optimal for $q\geq 2$:
    \[
        \qn{\DepBCh (\mat{A})}\leq \tn{\mat{A}}.
    \]
\end{theorem}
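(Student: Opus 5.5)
Optimality is immediate: diagonal self-adjoint inputs form a subclass of the inputs considered here, so by the sharpness part of \cref{thm:classical_q_2_hc} no $\rho>\rho_\mu$ can work. For the inequality itself, I would reduce the quantum $2\times2$ problem to the classical two-point estimate by isolating the dependence on the off-diagonal entry. By \cref{prop:QMS} and the semigroup property it suffices to treat $\rho=\rho_\mu$. Take $\bsigma=\diag(\mu,1-\mu)$ and $\mat{A}=\ffmat{a}{b}{\bar b}{c}\succeq0$, and set $s=|b|^2\in[0,ac]$. A diagonal unitary phase leaves both norms unchanged, so everything depends only on $(a,c,s)$.

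The right-hand side is $\tn{\mat{A}}^2=\mu a^2+(1-\mu)c^2+2\sqrt{\mu(1-\mu)}\,s$, which is affine in $s$. For the left-hand side, write $m=\mu a+(1-\mu)c$, $x=\rho a+(1-\rho)m$ and $y=\rho c+(1-\rho)m$. Then $\gt{1/q}{\DepBCh(\mat{A})}$ is the $2\times2$ PSD matrix with diagonal entries $\mu^{1/q}x$ and $(1-\mu)^{1/q}y$, and off-diagonal modulus $(\mu(1-\mu))^{1/(2q)}\rho\sqrt{s}$. Its eigenvalues are
\[
\lambda_\pm=\tfrac12\Bigl(T\pm\sqrt{D^2+4\kappa s}\Bigr),
\]
where $T$ and $D$ depend only on $(a,c)$ and $\kappa=(\mu(1-\mu))^{1/q}\rho^2$. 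Hence $\qn{\DepBCh(\mat{A})}^2=g(s):=(\lambda_+^q+\lambda_-^q)^{2/q}$.

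The key step is to show that $h(s):=g(s)-\tn{\mat{A}}^2$ is convex on $[0,ac]$, or at least quasi-convex. Then $h\le0$ only needs to be checked at the endpoints $s=0$ and $s=ac$. The endpoint $s=0$ is exactly \cref{thm:classical_q_2_hc}. I would attempt convexity by parametrizing $u=\sqrt{D^2+4\kappa s}$ and using that $u\mapsto(\,(T+u)^q+(T-u)^q\,)^{2/q}$ is convex and increasing for $q\ge2$ (a Clarkson-type fact, as in King's two-by-two arguments~\cite{king2003inequalities}). The difficulty is the concave inner map $s\mapsto u$; if composition fails, I would instead compare the derivatives $g'(s)$ and $2\sqrt{\mu(1-\mu)}$ directly.

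The other endpoint $s=ac$ is the rank-one case $\mat{A}=vv^\dagger$ with $v=(\sqrt a,\sqrt c)$. Here $\gt{1/q}{\DepBCh(\mat{A})}=\rho\, ww^\dagger+(1-\rho)m\,\bsigma^{1/q}$ with $w=\bsigma^{1/(2q)}v$. The plan is to bound its $q$-norm by noting that pinching to the diagonal decreases the $q$-norm for PSD matrices, in the reverse direction of what we want. So instead I would use King's $2\times2$ norm compression in the unweighted form. Alternatively, I would show directly that for rank-one inputs the worst case over the phase-free family reduces to the classical inequality evaluated at $(a,c)$ with an enlarged effective variance $\tn{\mat{A}}^2$.

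I expect this endpoint analysis, together with the convexity in $s$, to be the main obstacle. The remainder is bookkeeping with the explicit formula for $\rho_\mu$.
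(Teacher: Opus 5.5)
Your preliminary reductions (diagonal $\bsigma$, removing the phase of $b$, reducing to $\rho=\rho_\mu$ via \cref{prop:QMS}) and the optimality argument are fine, but the key step fails. The function $h$ is not convex in $s$: for $q=4$ it is strictly concave, and for every even $q$ it is concave. With $v=D^2+4\kappa s$ (affine in $s$) and $\lambda_\pm=\frac12(T\pm\sqrt v)$, take $q=4$. Then $g=\bigl(\tfrac18(T^4+6T^2v+v^2)\bigr)^{1/2}=\tfrac{1}{2\sqrt2}\sqrt{(v+3T^2)^2-8T^4}$, whose second derivative in $v$ is $-2\sqrt2\,T^4\bigl((v+3T^2)^2-8T^4\bigr)^{-3/2}<0$. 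For general even $q$, $(T+u)^q+(T-u)^q$ is a polynomial in $v=u^2$ with only real nonpositive roots, so $g$ is a geometric mean of affine functions of $s$.

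Since $\tn{\mat{A}}^2$ is affine in $s$, $h$ is concave. Knowing $h\le0$ at $s=0$ and $s=ac$ therefore says nothing about the interior. Your ``at least quasi-convex'' fallback does not rescue this: a concave function on an interval is quasi-convex only if it is monotone, which is exactly what remains unproved. The rank-one endpoint is also left without an argument. In particular, King's compression applied to a $2\times2$ matrix with $1\times1$ blocks only replaces entries by their moduli, which here is an equality.

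What is needed is a uniform bound on the derivative, which makes the endpoint $s=ac$ unnecessary. Writing $x=\lambda_+$ and $y=\lambda_-$, one finds
\[
g'(s)=2\kappa\,(x^q+y^q)^{\frac{2-q}{q}}\,\frac{x^{q-1}-y^{q-1}}{x-y},
\]
and one must show $g'(s)\le2\sqrt{\mu(1-\mu)}$ for all $s>0$. The paper does this with two ingredients absent from your proposal.
\begin{enumerate}
\item The power-mean inequality $\frac{1}{q-1}\frac{x^{q-1}-y^{q-1}}{x-y}\le\bigl(\frac{x^q+y^q}{2}\bigr)^{\frac{q-2}{q}}$ (\cref{lemma:easy}). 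This eliminates all dependence on $(a,c,s)$ and reduces the claim to $\rho\le\rho'=\sqrt{(4\mu(1-\mu))^{\frac12-\frac1q}/(q-1)}$.
\item The comparison $\rho_\mu\le\rho'$ (\cref{lem:easy_2}), proved via Stolarsky means: $S_rS_{2-r}\ge S_0S_2=LA\ge GA$ with $r=2/q$.
\end{enumerate}
Then $h$ is nonincreasing in $s$, and the classical case $s=0$ (\cref{thm:classical_q_2_hc}) finishes the proof.
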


Before proving \cref{thm:single_qubit_q_2_hc}, we first introduce two technical lemmas that will be used in the proof, whose proofs are deferred to Appendix \ref{app:easy} and \ref{app:easy_2}.
\begin{lemma}
  \label{lemma:easy} For $x > y\geq 0$, $x+y>0$ and $q\geq 2$, we have
  \[
    \frac{1}{q-1}\cdot \frac{x^{q-1}-y^{q-1}}{x-y}\leq \br{\frac{x^{q}+y^{q}}{2}}^{\frac{q-2}{q}}
    .
  \]
\end{lemma}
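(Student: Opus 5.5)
The plan is to reduce the inequality to a one-variable statement about a weighted mean and then compare two means. By homogeneity (both sides are homogeneous of degree $q-2$ in $(x,y)$), normalize $x=1$ and write $y=s\in[0,1)$. The left side becomes
\[
\frac{1}{q-1}\cdot\frac{1-s^{q-1}}{1-s}=\int_0^1 \bigl((1-u)+us\bigr)^{q-2}\,\mathrm{d}u ,
\]
using $\frac{\mathrm d}{\mathrm du}\bigl((1-u)+us\bigr)^{q-1}=(q-1)(s-1)\bigl((1-u)+us\bigr)^{q-2}$. So the left side is the average of $g(z)=z^{q-2}$ over the segment $z\in[s,1]$ with uniform weight.

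The right side is $M_q(1,s)^{q-2}$, where $M_q(1,s)=\bigl(\tfrac{1+s^q}{2}\bigr)^{1/q}$ is the power mean. In these terms the claim is
\[
\frac{1}{1-s}\int_s^1 z^{q-2}\,\mathrm dz\le M_q(1,s)^{q-2}.
\]
I would handle two regimes separately.

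\emph{Case $2\le q\le 3$.} Here $z^{q-2}$ is concave, so Jensen bounds the average by $\bigl(\tfrac{1+s}{2}\bigr)^{q-2}$. Since $M_1\le M_q$ and $q-2\ge0$, we get $\bigl(\tfrac{1+s}{2}\bigr)^{q-2}\le M_q(1,s)^{q-2}$, which finishes this case.

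\emph{Case $q>3$.} Now $z^{q-2}$ is convex, so Jensen goes the wrong way and I need a genuine comparison. Set $r=q-2>1$. The left side is $\bigl(L_r(s,1)\bigr)^{r}$, where
\[
L_r(s,1)=\Bigl(\frac{1-s^{r+1}}{(r+1)(1-s)}\Bigr)^{1/r}
\]
is the Stolarsky mean with parameter $r+1$. So I must show that this Stolarsky mean is at most the power mean $M_{r+2}$. My first attempt is the known comparison of Stolarsky means with power means: the Stolarsky mean $E(r+1,1)$ is dominated by $M_{(r+2)/3}$ in the relevant range (Leach–Sholander / Páles type results). Since $(r+2)/3\le r+2$, monotonicity of power means would then conclude.

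If I want to avoid citing that comparison, the fallback is a direct elementary argument. Define
\[
h(s)=(q-1)(1-s)\Bigl(\frac{1+s^q}{2}\Bigr)^{(q-2)/q}-(1-s^{q-1}).
\]
At $s=1$ we have $h(1)=0$, and at $s=0$ we have $h(0)=(q-1)2^{-(q-2)/q}-1\ge0$, because $2^{(q-2)/q}<2\le q-1$ for $q\ge3$. To show $h\ge0$ on all of $[0,1]$, I would check sign patterns of $h'$, or compare Taylor expansions near $s=1$. Near $s=1$, both means agree to first order, and the second-order coefficients are $(r-1)/12$ for the Stolarsky mean versus $(r+1)/4$ for $M_{r+2}$. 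The latter is larger, which gives local validity.

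The main obstacle is making this global in the convex regime. My first try is the citation route, namely $E(r+1,1)\le M_{(r+2)/3}\le M_{r+2}$, which is clean. Otherwise I would resort to the Hermite–Hadamard-type bound
\[
\text{average of a convex } g \le \tfrac{g(s)+g(1)}{2}=\tfrac{1+s^{q-2}}{2},
\]
followed by showing $\tfrac{1+s^{q-2}}{2}\le\bigl(\tfrac{1+s^q}{2}\bigr)^{(q-2)/q}$. However, this last inequality is false in general: at $s=0$ it reads $\tfrac12\le 2^{-(q-2)/q}$, which holds, but it can fail in the middle of the interval. So that route would need refinement, and the Stolarsky–power mean comparison is the plan I would commit to.
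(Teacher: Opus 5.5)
Your argument is correct, modulo the citation discussed below, but it is organized differently from the paper's. The paper makes no concave/convex split. It puts $a=\frac{x+y}{2}$, $b=\frac{x-y}{2}$, fixes $a$, and shows that
\[
2b\Bigl(\tfrac{(a+b)^q+(a-b)^q}{2}\Bigr)^{\frac{q-2}{q}}-\tfrac{1}{q-1}\bigl((a+b)^{q-1}-(a-b)^{q-1}\bigr)
\]
vanishes at $b=0$ and is nondecreasing in $b$. The $b$-derivative of the subtracted term is $(a+b)^{q-2}+(a-b)^{q-2}=2M_{q-2}^{q-2}$. The derivative of the first term is at least $2M_q^{q-2}$, because the remaining piece is $2b\,\partial_b M_q^{q-2}\ge0$, as $M_q(a+b,a-b)$ grows with $b$ for $q\ge1$. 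The power-mean inequality $M_{q-2}\le M_q$ then settles every $q\ge2$ at once, with no appeal to the sign of $(z^{q-2})''$.

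Your concave case ($2\le q\le3$: Jensen at the midpoint plus $M_1\le M_q$) is complete and elementary. Your convex case relies on the sharp comparison $L_r\le M_{(r+2)/3}$ for generalized logarithmic means with $r\ge1$. This is true: it is due to Pittenger, and can also be read off from P\'ales' characterization of comparable Stolarsky means, since $M_t=E(2t,t)$. You should nonetheless state and cite it precisely, because it is false for some $r<1$. For $r=0$ it would assert $I\le M_{2/3}$ for the identric mean $I=L_0$, whereas in fact $I>M_{2/3}$. That route proves much more than needed ($M_{q/3}$ in place of $M_q$), but at the cost of a nontrivial external theorem.

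More importantly, the Hermite--Hadamard fallback you discarded does work. The inequality
\[
\frac{1+s^{q-2}}{2}\le\Bigl(\frac{1+s^q}{2}\Bigr)^{\frac{q-2}{q}}
\]
is exactly $M_{q-2}(1,s)^{q-2}\le M_q(1,s)^{q-2}$. This holds for every $s\in[0,1]$ by the power-mean inequality, because $0\le q-2<q$; it does not fail anywhere in the interval. So for $q\ge3$, the trapezoid bound for the convex function $z^{q-2}$ followed by $M_{q-2}\le M_q$ finishes the convex case in two lines, with no citation.

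This is essentially the paper's argument in integrated form. The only difference is that the paper passes from $M_{q-2}$ to $M_q$ \emph{before} using monotonicity in the spread $b$. Since $M_q(a+b,a-b)$ is increasing in $b$ for all $q\ge1$, whereas $M_{q-2}(a+b,a-b)^{q-2}$ decreases in $b$ when $2<q<3$, this ordering is what lets the paper avoid your case split.
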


\begin{lemma}
  \label{lem:easy_2} For $\mu\in (0,\frac{1}{2}]$, $q\geq 2$, we have
  \[
    \sqrt{\frac{(1-\mu)^{\frac{2}{q}}-\mu^{\frac{2}{q}}}{\mu^{\frac{2}{q}-1}(1-\mu)-(1-\mu)^{\frac{2}{q}-1}\mu}}
    =\rho_{\mu}\leq \rho'=\sqrt{\frac{(4\mu(1-\mu))^{\frac{1}{2}-\frac{1}{q}}}{q-1}}
    .
  \]
\end{lemma}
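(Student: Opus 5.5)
Throughout, write $a=1-\mu\ge b=\mu>0$, so $a+b=1$, and put $s=2/q\in(0,1]$. The plan is to reduce the claim to a two-variable mean inequality that is homogeneous and symmetric in $a$ and $b$. I would first rewrite $\rho_\mu^2$. Since $b^{s-1}a-a^{s-1}b=ab\,(b^{s-2}-a^{s-2})=(ab)^{s-1}(a^{2-s}-b^{2-s})$, we get
\[
\rho_\mu^2=\frac{(a^s-b^s)\,(ab)^{1-s}}{a^{2-s}-b^{2-s}} .
\]
This suggests the substitution $x=a^{s}$, $y=b^{s}$, for which $x^{q-1}=a^{2-s}$. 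The same pairing is what \cref{lemma:easy} is designed for. The quantity to control is the Stolarsky-type mean
\[
S(x,y):=\Bigl(\frac{x^{q-1}-y^{q-1}}{(q-1)(x-y)}\Bigr)^{1/(q-2)} .
\]
With it, $\rho_\mu^2=(ab)^{1-s}/\bigl((q-1)S(x,y)^{q-2}\bigr)$. The case $a=b$ I would handle by continuity, and the case $q=2$ is trivial.

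Next I would homogenize the right-hand side using $a+b=1$. A direct exponent count gives
\[
\rho'^2=\frac{1}{q-1}\,(4ab)^{\frac12-\frac1q}=\frac{1}{q-1}\,(ab)^{\frac12-\frac1q}\Bigl(\tfrac{a+b}{2}\Bigr)^{1-\frac2q}.
\]
Both sides are then homogeneous of degree $0$ in $(a,b)$. After cancelling, the claim $\rho_\mu\le\rho'$ becomes
\[
S(x,y)^{q-2}\ \ge\ \bigl(G(a,b)\,A(a,b)\bigr)^{\frac{q-2}{q}},
\]
where $G$ and $A$ are the geometric and arithmetic means. In terms of $x,y$ we have $G(a,b)=G(x,y)^{q/2}$ and $A(a,b)=M_{q/2}(x,y)^{q/2}$, with $M_r$ the power mean. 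Taking $(q-2)$-th roots (for $q>2$), the target becomes
\[
S(x,y)\ \ge\ \sqrt{G(x,y)\,M_{q/2}(x,y)} .
\]

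The crude bound $S\ge L\ge G$, where $L$ is the logarithmic mean and $S$ is monotone in the Stolarsky parameter, is not enough here. It would only give $\rho_\mu^2\le (ab)^{1-s}/(q-1)$, and this misses the factor $(4ab)^{-(1/2-1/q)}\ge1$. So the last inequality is the real content, and I expect it to be the main obstacle. By homogeneity I would set $y=1$ and $x=t\ge1$. Taking logarithms reduces the claim to the nonnegativity of
\[
f(t)=\ln\frac{t^{q-1}-1}{(q-1)(t-1)}-\frac{q-2}{2}\Bigl(\tfrac12\ln t+\tfrac{2}{q}\ln\tfrac{1+t^{q/2}}{2}\Bigr).
\]

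Here $f(1)=0$, and a second-order expansion at $t=1$ should show equality to second order: both sides behave like $\tfrac{q-2}{2}(t-1)$ plus matching quadratic terms, and this reflects the sharpness at $\mu=1/2$. I would then try to show $f'(t)\ge0$ by comparing the two sides term by term. If that becomes messy, my fallback is the known comparison of Stolarsky means with power means, namely $S_{p}\ge M_{(p+1)/3}$ with $p=q-1$, giving $S\ge M_{q/3}$. I would combine this with the elementary bound $M_{q/3}^2\ge G\,M_{q/2}$, to be verified separately via the substitution $t=e^{2u}$ and convexity in $u$. As a final sanity check, \cref{lemma:easy} bounds the same quantity $S$ from above by $M_q$, which is consistent with the sandwich $\sqrt{G\,M_{q/2}}\le S\le M_q$.
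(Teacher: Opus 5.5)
Your reduction is correct up to a sign slip. With $x=a^{s}$, $y=b^{s}$ the lemma is equivalent to $S(x,y)\ge\sqrt{G(x,y)\,M_{q/2}(x,y)}$. In your homogenized $\rho'^2$ the exponent of $\frac{a+b}{2}$ should be $\frac{2}{q}-1$, since otherwise the expression is not of degree $0$, but your final target is the right one. Undoing the substitution gives $S(x,y)^{q}=S_{2/q}(a,b)\,S_{2-2/q}(a,b)$ with $S_\alpha(a,b)=\bigl(\frac{a^\alpha-b^\alpha}{\alpha(a-b)}\bigr)^{1/(\alpha-1)}$, so your target is exactly the paper's inequality $S_{2/q}S_{2-2/q}\ge A\,G$.

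The gap is that this inequality is the entire content of the lemma, and you do not prove it. The claim $f'\ge0$ is only announced. Your heuristic is also off, since the quadratic terms at $t=1$ do not match: $f(1+\epsilon)=\frac{q(q-2)}{96}\epsilon^{2}+O(\epsilon^{3})$. The fallback is false for every $q>3$. Since $S_{q-1}(1,0)=(q-1)^{-1/(q-2)}$ and $M_{q/3}(1,0)=2^{-3/q}$, the inequality $S_{q-1}\ge M_{q/3}$ survives $y/x\to0$ only if $g(q)=3(q-2)\ln2-q\ln(q-1)\ge0$. But $g$ is concave on $[2,\infty)$ with zeros at $q=2$ and $q=3$, so $g<0$ for $q>3$. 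For instance, at $q=4$ one has $3^{-1/2}\approx0.577<2^{-3/4}\approx0.595$, so $S_3\ge M_{4/3}$ fails for large $t$, i.e.\ for small $\mu$, which is inside the lemma's range.

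Changing the exponent of the power mean does not rescue this for large $q$. The bound $M_r^2\ge G\,M_{q/2}$ forces $r\ge q/4$ (compare second-order terms at $t=1$). The bound $S_{q-1}\ge M_r$ forces $r\le (q-2)\ln2/\ln(q-1)$ (let $t\to\infty$). At $q=20$ these would require $5\le r\le 4.24$. The underlying reason is that power means stay bounded away from $0$ as $y\to0$, while $\sqrt{G\,M_{q/2}}$ tends to $0$. So no single power mean can be sandwiched in between.

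The missing idea is to keep the product $S_{2/q}(a,b)\,S_{2-2/q}(a,b)$ in the original variables, instead of comparing one Stolarsky mean of $(x,y)$ with power means. The paper uses two facts: $\alpha\mapsto\ln S_\alpha(a,b)$ is concave on $[0,2]$, and the two parameters sum to $2$. Together these give $S_{2/q}S_{2-2/q}\ge S_0S_2=L\cdot A\ge G\cdot A$, where $L$ is the logarithmic mean. Monotonicity of $S_\alpha$ in $\alpha$ would already suffice. It gives $S_{2/q}\ge S_0=L$ and $S_{2-2/q}\ge S_1=I$, the identric mean, and it remains to show $LI\ge AG$. Writing $a/b=e^{2u}$ and $w=u\coth u$, this last inequality reduces to $w-1-\ln w\ge0$.
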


Now we give the proof of \cref{thm:single_qubit_q_2_hc}:

\begin{proof}[Proof of \cref{thm:single_qubit_q_2_hc}]
Denote
\[
  \ffmat{\tilde a}{\tilde b}{\overline{\tilde b}}{\tilde c}
  =
  \bsigma^{\frac{1}{2q}}\DepBCh(\mat{A})\bsigma^{\frac{1}{2q}},
\]
  we have
  \[
    \begin{aligned}
      \tilde a &= \mu^{1/q}((\rho+(1-\rho)\mu)a+(1-\rho)(1-\mu)c),\\
      \tilde b &= (\mu(1-\mu))^{\frac{1}{2q}}\rho b, \\
      \tilde c &= (1-\mu)^{1/q}((1-\rho)\mu a+(\rho +(1-\rho)(1-\mu))c).
    \end{aligned}
  \]

  Define
  \[
    f(u)=\left(\left(\frac{\tilde a+\tilde c+\sqrt{u}}{2}\right)^{q}+\left(\frac{\tilde
    a+\tilde c-\sqrt{u}}{2}\right)^{q}\right)^{\frac{2}{q}},
  \]
  where
  \[
    u=(\tilde a-\tilde c)^{2}+4(\mu(1-\mu))^{1/q}\rho^{2}|b|^{2}.
  \]

  All we need is $f(u)\leq \mu a^{2}+(1-\mu)c^{2}+2\sqrt{\mu(1-\mu)}|b|^{2}$. From the classical
  case, i.e. \cref{thm:classical_q_2_hc}, this is true for $|b|^{2}=0$. Therefore, we
  only need to show that the derivative of $f(u)$ with respect to $|b|^{2}$ is bounded above by $2\sqrt{\mu(1-\mu)}$ for $|b|^2 > 0$.
Without loss of generality, we assume that $b\in\rr$, and we next prove
  \begin{equation}
\frac{\partial}{\partial b^{2}}f(u)\leq 2\sqrt{\mu(1-\mu)}.\label{eq:base_quantum}
  \end{equation}

  The left-hand side equals
  \begin{dmath*}
    \frac{\partial u}{\partial b^{2}}\cdot \frac{\partial}{\partial u}f(u)=(4[\mu(1-\mu)]^{\frac{1}{q}}\rho^2)\cdot
    \frac{2}{q}\left(\left(\frac{\tilde a+\tilde c+\sqrt{u}}{2}\right)^q+\left(\frac{\tilde
    a+\tilde c-\sqrt{u}}{2}\right)^q\right)^{\frac{2}{q}-1} \cdot
    \frac{q}{4\sqrt{u}}\left(\left(\frac{\tilde a+\tilde c+\sqrt{u}}{2}\right)^{q-1}-\left(\frac{\tilde
    a+\tilde c-\sqrt{u}}{2}\right)^{q-1}\right).
  \end{dmath*}

  Denote
  \[
    x=\frac{\tilde a+\tilde c+\sqrt{u}}{2}, \quad y=\frac{\tilde a+\tilde c-\sqrt{u}}{2}
    ,
  \]
  then $x-y=\sqrt{u}\geq 0$. All we need is
  \begin{gather*}
   2[\mu(1-\mu)]^{\frac{1}{q}}\rho^{2}(x^{q}+y^{q})^{\frac{2-q}{q}}\cdot \frac{x^{q-1}-y^{q-1}}{x-y}
    \leq 2\sqrt{\mu(1-\mu)}\\
    \iff \tilde \rho \cdot \frac{x^{q-1}-y^{q-1}}{x-y}\leq (x^{q}+y^{q})^{\frac{q-2}{q}}
    ,
  \end{gather*}
  where $\tilde \rho=(\mu(1-\mu))^{\frac{1}{q}-\frac{1}{2}}\rho^{2}$. Using Lemma
  \ref{lemma:easy}, it remains to prove
  \(
    \tilde \rho\leq \frac{2^{\frac{q-2}{q}}}{q-1},
  \)
  which is equivalent to
  \[
  \rho\leq \rho'=\sqrt{\frac{[4\mu(1-\mu)]^{\frac{1}{2}-\frac{1}{q}}}{q-1}}.
  \]
  By assumption, $\rho\leq \rho_\mu$. By \cref{lem:easy_2}, $\rho_\mu\leq \rho'$, and hence $\rho\leq \rho'$, which gives the desired derivative bound. The optimality of $\rho_\mu$ follows from the classical diagonal case~\cite{oleszkiewicz2003nonsymmetric}.
\end{proof}

Using norm compression for qubits~\cite{king2003inequalities} ($q\geq 2$ case) and \cref{cor:norm_compression_non_com}, we obtain hypercontractivity for the tensorized generalized qubit depolarizing channel. The proof is the same as \cref{thm:integer_hc_noncom}:

\begin{theorem}[Tensorized Generalized Qubit Depolarizing \texorpdfstring{($q,2$)}{(q,2)} HC]\label{thm:tensorized_qubit_q_2_hc}
    For a $2\times 2$ full-rank density operator $\bsigma$, denote $\mu\leq 1/2$ is the smaller eigenvalue of $\bsigma$. Given $0\leq\rho\leq \rho_\mu=\sqrt{\frac{(1-\mu)^{\frac{2}{q}}-\mu^{\frac{2}{q}}}{\mu^{\frac{2}{q}-1}(1-\mu)-(1-\mu)^{\frac{2}{q}-1}\mu}}$, the following hypercontractivity holds for every operator $\mat{A}$ acting on $n$ qubits; moreover, $\rho_\mu$ is optimal for $q\geq 2$:
    \[
        \qnn{\DepBChn(\mat{A})}\leq \tnn{\mat{A}}.
    \]
\end{theorem}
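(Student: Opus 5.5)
The plan is to induct on $n$, following the proof of \cref{thm:integer_hc_noncom} line by line. Wherever that proof invokes the integer norm compression \cref{cor:norm_compression_non_com}, we use King's qubit norm compression instead, which holds for every real $q\ge2$. Optimality needs no separate argument: for $n=1$ it is the sharpness statement in \cref{thm:single_qubit_q_2_hc}, and for $n\ge2$ we take product inputs $\mat{A}\otimes\id^{\otimes(n-1)}$. Since $\DepBCh(\id)=\id$ and both weighted norms are multiplicative on tensor products with $\norm{\id}_{\bsigma,r}=1$, any admissible $\rho$ for $n$ sites is admissible for one site.

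First, by \cref{lem:hypercontractivity-only-for-PSD}, applied with the completely positive map $\DepBChn$, we may assume $\mat{A}\succeq0$. We may also take $\bsigma=\diag(\mu,1-\mu)$. The base case $n=1$ is exactly \cref{thm:single_qubit_q_2_hc}.

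For the inductive step, write $\mat{A}$ as a $2\times2$ block matrix and set $\mat{B}_{ij}=\DepBCh^{\otimes(n-1)}(\mat{A}_{ij})$. The blocks $\mat{C}_{ij}$ of $\DepBChn(\mat{A})$ are then as in the proof of \cref{thm:integer_hc_noncom}. The triangle inequality bounds $\qnnn{\mat{C}_{ii}}$ by $(\DepBCh(\widetilde{\mat{B}}))_{ii}$, where $\widetilde{\mat{B}}=(\qnnn{\mat{B}_{ij}})$. We then pad the diagonal blocks by $\epsilon_i\id$ and use \cref{lem:diag-monotonicity-psd}, which holds for all real $p\ge1$, to reach
\[
\qnn{\DepBChn(\mat{A})}\le \qn{\DepBCh(\widetilde{\mat{B}})}.
\]
This step uses the weighted version of King's inequality for $2\times2$ PSD block matrices. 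We obtain it exactly as in the proof of \cref{cor:norm_compression_non_com}: conjugating by $\bsigma^{1/2q}\otimes(\bsigma^{\otimes(n-1)})^{1/2q}$ scales block $(i,j)$ by $(\sigma_i\sigma_j)^{1/2q}$ times the weighted inner conjugation. King's unweighted inequality then applies, and the scalars factor out of the compressed matrix.

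Next, $\widetilde{\mat{B}}$ is a $2\times2$ matrix, and King showed that the compressed matrix of a PSD block matrix is PSD. Here $\widetilde{\mat{B}}$ is the compression of $(\mat{B}_{ij})=\DepBCh^{\otimes(n-1)}$ applied blockwise to $\mat{A}$, which is PSD by complete positivity. So the single-site theorem (\cref{thm:single_qubit_q_2_hc}, stated for PSD inputs) gives $\qn{\DepBCh(\widetilde{\mat{B}})}\le\tn{\widetilde{\mat{B}}}$. The induction hypothesis gives $\qnnn{\mat{B}_{ij}}\le\tnnn{\mat{A}_{ij}}$ entrywise. Then \cref{lem:monotone_biased_norm} with $p=2$ and \cref{lem:exact_norm_compression_biased_2} give $\tn{\widetilde{\mat{B}}}\le\tnn{\mat{A}}$, which closes the induction.

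The main obstacle is the real-exponent weighted norm compression. The proof of \cref{cor:norm_compression_non_com} uses only the scaling structure and not integrality, so substituting King's real-$q$ inequality there should go through. The remaining delicate point is the PSD hypothesis needed to apply the single-site bound to $\widetilde{\mat{B}}$, and King's positivity of the compressed $2\times2$ matrix supplies it.
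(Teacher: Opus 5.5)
Your proposal is correct and follows the paper's proof. That proof is precisely the induction of \cref{thm:integer_hc_noncom}, with King's real-exponent qubit norm compression (weighted exactly as in \cref{cor:norm_compression_non_com}) replacing the integer one and \cref{thm:single_qubit_q_2_hc} as the base case; the points you add---positivity of $\widetilde{\mat{B}}$ (which could also be sidestepped by extending the single-site bound to all inputs via \cref{lem:hypercontractivity-only-for-PSD}) and the reduction of sharpness to $n=1$ via inputs $\mat{A}\otimes\id^{\otimes(n-1)}$---are valid and make explicit what the paper leaves implicit.
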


\section{Application: Kahn--Kalai--Linial Theorem for Qudit Depolarizing Channels}

We now spell out how the hypercontractivity theorem for product depolarizing
channels proved in the previous sections feeds into quantum influence theory.
We use the abstract framework of~\cite{rouze2024quantumkkl}. To make the
connection explicit, we recall the assumptions used there. Let
\((P_t)_{t\ge0}=\romanE^{-t\LL}\) be a KMS-symmetric QMS on a von Neumann algebra
\((\mathcal M,\varphi)\), let \(\mathcal A\) be a core algebra, and let
\(i_2:\mathcal M\to L_2(\mathcal M,\varphi)\) denote the symmetric embedding.
The hypotheses are as follows:
\begin{itemize}
    \item[\textbf{(H0)}] there exists a \(*\)-subalgebra
    \(\mathcal A\subset D(L)\) which is weakly dense in \(\mathcal M\) and
    invariant under \((P_t)_{t\ge0}\);

    \item[\textbf{(H1)}] the carr\'e du champ operator satisfies a Bakry--Emery type
    gradient estimate
    \[
        \Gamma(P_t x)\le \romanE^{-2Kt}P_t(\Gamma(x)),
        \qquad x\in\mathcal A,\ t\ge0;
    \]

    \item[\textbf{(H2)}] there are self-adjoint coordinate maps
    \(d_j:\mathcal A\to\mathcal M\), \(j\in J\), such that
    \[
        \langle i_2(x),i_2(Lx)\rangle
        =
        \sum_{j\in J}\|i_2(d_jx)\|_2^2,
    \]
    and, for some constant \(M>0\),
    \[
        \max_{j\in J}\|d_jx\|\le M\|\Gamma(x)\|^{1/2};
    \]

    \item[\textbf{(H3)}] a Poincar\'e inequality holds:
    \[
        \lambda\|i_2(x-\varphi(x)\mathbf 1)\|_2^2
        \le
        \langle i_2(x),i_2(Lx)\rangle;
    \]

    \item[\textbf{(H4)}] a hypercontractive estimate holds: for some
    \(\alpha_{\mathrm{H4}}>0\),
    \[
        \|i_2(P_t x)\|_2 \le \|i_{p(t)}(x)\|_{p(t)},\qquad
        p(t)=1+\romanE^{-2\alpha_{\mathrm{H4}}t};
    \]

    \item[\textbf{(H5)}] the semigroup and coordinate maps satisfy an
    intertwining estimate: for some \(\mu\in\mathbb R\),
    \[
        \|i_p(d_jP_t x)\|_p
        \le
        \romanE^{-\mu t}\|i_p(P_t d_jx)\|_p,
        \qquad p\in[1,\infty];
    \]

    \item[\textbf{(H6)}] for every \(I\subseteq J\), if \(\mat{E}_I\) denotes the
    orthogonal projection onto \(\bigcap_{i\in I}\ker d_i\) in
    \(L_2(\mathcal M,\varphi)\), then a restricted Poincar\'e inequality holds:
    \[
        \nu\|i_2(x)-\mat{E}_I(i_2(x))\|_2^2
        \le
        \sum_{i\in I}\|i_2(d_i x)\|_2^2.
    \]
\end{itemize}

For the product qudit depolarizing semigroup
\[
    P_t=\Phi_t^{\otimes n}=\romanE^{-t\LL_\bsigma^{(n)}},
\]
these hypotheses take a concrete form. We work on the finite-dimensional algebra
\(\mathcal M=M_d^{\otimes n}\) with
\(\varphi(x)=\Tr(\bsigma^{\otimes n}x)\), so \textbf{(H0)} is automatic with
\(\mathcal A=\mathcal M\). The coordinate maps in \textbf{(H2)} are
\[
    d_j=\LL_{\bsigma,j}
    =
    \id^{\otimes(j-1)}\otimes\LL_\bsigma\otimes\id^{\otimes(n-j)}.
\]
The gradient estimate \textbf{(H1)} comes from \cite{junge2015noncommutative} for $K=1/2$. The Poincar\'e inequality \textbf{(H3)} could be derived from \textbf{(H4)} (see e.g. \cite{olkiewiczHypercontractivityNoncommutativeLpSpaces1999}).
Since the coordinate maps commute with the product semigroup,
\(
    d_jP_t=P_td_j,
\)
the intertwining estimate \textbf{(H5)} holds with $\mu=0$. The restricted
Poincar\'e condition \textbf{(H6)} is
also automatic: if
\[
    \mathbb E_j
    =
    \id^{\otimes(j-1)}
    \otimes (\mat{X}\mapsto \Tr(\bsigma \mat{X})\mathbf 1)
    \otimes \id^{\otimes(n-j)},
    \qquad
    \LL_{\bsigma,j}=\id-\mathbb E_j,
\]
then the maps \(\mathbb E_j\) are commuting orthogonal projections on
\(L_2(\bsigma^{\otimes n})\), and hence
\(\mat{E}_I=\prod_{j\in I}\mathbb E_j\) is the projection onto
\(\bigcap_{j\in I}\ker \LL_{\bsigma,j}\); this gives \textbf{(H6)} with
\(\nu=1\).
Thus the only nontrivial analytic input in our setting is \textbf{(H4)}.
Our contribution is precisely to verify this hypercontractive input for
\(P_t=\Phi_t^{\otimes n}\), uniformly in \(n\) and in arbitrary local dimension.

The three consequences below are therefore direct applications of our
hypercontractivity theorem to the abstract machinery of~\cite{rouze2024quantumkkl}.

\begin{cor}[KKL inequality for qudit depolarizing channels]\label{cor:kkl_qudit}
Let $(P_t)_{t\ge 0}=\Phi_t^{\otimes n}$ be the generalized product qudit depolarizing semigroup. Then there exists a constant $C>0$ independent of $n$ such that for every self-adjoint $\mat{x}$ satisfying
\[
    \tnn{\mat{x}} = 1, \qquad \|\mat{x}\|_{\infty} \le 1, \qquad \Tr(\mat{x} \bsigma^{\otimes n}) = 0,
\]
one has
\[
    \max_{j\in [n]} \mathrm{Inf}_j^1(\mat{x}) \ge C \frac{\sqrt{\log n}}{n}.
\]
Here, $\mathrm{Inf}_j^p(\mat{x})=\pnn{\LL_{\bsigma,j}(\mat{x})}^p$ denotes the $p$-influence.
\end{cor}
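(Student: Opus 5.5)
The plan is to verify hypotheses \textbf{(H0)}--\textbf{(H6)} of the abstract framework of~\cite{rouze2024quantumkkl} for $P_t=\Phi_t^{\otimes n}$ on $\mathcal M=M_d^{\otimes n}$ with $\varphi=\Tr(\bsigma^{\otimes n}\,\cdot)$, with every constant independent of $n$. Then the KKL theorem of~\cite{rouze2024quantumkkl} applies verbatim. In that theorem the constant $C$ depends only on $K$, $M$, $\lambda$, $\alpha_{\mathrm{H4}}$, $\mu$ and $\nu$. The influence $\mathrm{Inf}_j^1$ is defined there through the coordinate maps $d_j$; with $d_j=\LL_{\bsigma,j}$ this agrees with the definition in the statement.

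Most hypotheses were already checked in the paragraph preceding the corollary, so I will only record the constants.
\begin{itemize}
\item \textbf{(H0)} holds trivially with $\mathcal A=\mathcal M$.
\item \textbf{(H1)} holds with $K=1/2$ by~\cite{junge2015noncommutative}.
\item \textbf{(H5)} holds with $\mu=0$, since $d_j$ commutes with $P_t$.
\item \textbf{(H6)} holds with $\nu=1$, since the $\mathbb E_j$ are commuting orthogonal projections in $L_2(\bsigma^{\otimes n})$.
\end{itemize}

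For \textbf{(H2)}, I use that each $\LL_{\bsigma,j}=\id-\mathbb E_j$ is a KMS-orthogonal projection. Hence
\[
\spro{\mat{x}}{\LL^{(n)}_\bsigma\mat{x}}=\sum_j \spro{\mat{x}}{\LL_{\bsigma,j}\mat{x}}=\sum_j\tnn{\LL_{\bsigma,j}\mat{x}}^2 .
\]
The maps $d_j$ preserve self-adjointness, which gives the identity in \textbf{(H2)}. The bound $\|d_j\mat{x}\|_\infty\le M\|\Gamma(\mat{x})\|^{1/2}$ should follow from the explicit carr\'e du champ $\Gamma(\mat{x})=\sum_j \mathbb E_j\bigl((\mat{x}-\mathbb E_j\mat{x})^\dagger(\mat{x}-\mathbb E_j\mat{x})\bigr)$ of the depolarizing generator. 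I expect the resulting $M$ to depend only on $\lambda(\bsigma)$ and not on $n$, because each term involves a single site.

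\textbf{(H4)} is the main input and comes from \cref{thm:qudit_hc}. Rouz\'e--Wirth--Zhang phrase it as a $(p(t),2)$ estimate with $p(t)=1+\romanE^{-2\alpha t}<2$, which is the dual direction. I would obtain it by duality: $P_t$ is KMS-self-adjoint and the spaces $L_p(\bsigma^{\otimes n})$ and $L_{\hat p}(\bsigma^{\otimes n})$ are dual under $\spro{\cdot}{\cdot}$. Therefore the $2\to q$ bound of \cref{thm:qudit_hc}, with $q-1=\romanE^{4\beta\alpha_2(\LL_\bsigma)t}$ and $\beta=2/(3\ln 2)$, gives a $\hat q\to2$ bound with $\hat q-1=1/(q-1)$. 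This yields \textbf{(H4)} with $\alpha_{\mathrm{H4}}=2\beta\alpha_2(\LL_\bsigma)$, uniformly in $n$. \textbf{(H3)} then follows with $\lambda\ge 2\alpha_{\mathrm{H4}}$ from the standard hypercontractivity-to-Poincar\'e argument of~\cite{olkiewiczHypercontractivityNoncommutativeLpSpaces1999}. One can also check directly that $\lambda=1$, because $\LL_\bsigma^{(n)}$ is a sum of commuting projections.

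The step I expect to be the main obstacle is the bookkeeping of conventions. I need to check that the symmetric-embedding norms $\|i_p(\cdot)\|_p$ of~\cite{rouze2024quantumkkl} coincide with the Kosaki norms $\pnn{\cdot}$ used here. I also need to check that their normalization of time and of $\alpha$ matches ours; otherwise \textbf{(H4)} comes out with a rescaled constant, which is harmless for the existence of $C$. With all hypotheses verified, the KKL conclusion
\[
\max_j\mathrm{Inf}^1_j(\mat{x})\ge C\frac{\sqrt{\log n}}{n}
\]
follows for every centered self-adjoint $\mat{x}$ with $\tnn{\mat{x}}=1$ and $\|\mat{x}\|_\infty\le1$.
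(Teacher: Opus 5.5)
Your proposal is correct and follows the paper's route. You check (H0)--(H6) with $n$-independent constants, obtain (H4) from \cref{thm:qudit_hc} with $\alpha_{\mathrm{H4}}=\frac{4}{3\ln 2}\alpha_2(\LL_\bsigma)$, which is exactly the paper's value, and then invoke the KKL theorem of~\cite{rouze2024quantumkkl}; your duality step is valid, though \cref{thm:qudit_hc} already covers $p<q=2$ directly.

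There are two harmless slips. First, hypercontractivity with rate $\alpha_{\mathrm{H4}}$ only yields $\lambda\ge\alpha_{\mathrm{H4}}$, not $2\alpha_{\mathrm{H4}}$, so rely on your direct observation $\lambda=1$. Second, the carr\'e du champ is $\Gamma(\mat{x})=\frac12\sum_j\bigl(|d_j\mat{x}|^2+\mathbb E_j|d_j\mat{x}|^2\bigr)$, which gives $\|d_j\mat{x}\|\le\sqrt2\,\|\Gamma(\mat{x})\|^{1/2}$ immediately, with no dependence on $\bsigma$.
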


\begin{proof}
As explained above, the product qudit depolarizing semigroup satisfies the
structural hypotheses required in~\cite{rouze2024quantumkkl}. The only
nontrivial analytic input is \textbf{(H4)}. By \cref{thm:qudit_hc}, \textbf{(H4)} holds with
\[
    \alpha_{\mathrm{H4}}
    \ge
    \frac{4}{3\ln2}\alpha_2(\LL_\bsigma).
\]
The KKL lower bound then follows from~\cite[Theorem 4.5]{rouze2024quantumkkl}.
\end{proof}

Our qudit hypercontractivity combined with \cite[Theorem 4.3]{rouze2024quantumkkl} yields the following Talagrand inequality for $1$-influence.

\begin{cor}[Talagrand inequality for qudit depolarizing channels]\label{cor:talagrand_qudit}
Let $(P_t)_{t\ge 0}=\Phi_t^{\otimes n}$ be the product qudit depolarizing semigroup with invariant state $\bsigma$. Then there exists a constant $C>0$ such that for every self-adjoint operator $\mat{x}$ with $\|\mat{x}\|\le 1$,
\[
    \|\mat{x}-\operatorname{Tr}(\mat{x}\bsigma^{\otimes n})\mathbf 1\|_{\bsigma^{\otimes n},2}^2
    \le C \sum_{j \in [n]}
    \frac{\mathrm{Inf}_j^1(\mat{x})(1+\mathrm{Inf}_j^1(\mat{x}))}
    {(1+\log^+(1/\mathrm{Inf}_j^1(\mat{x})))^{1/2}}.
\]
Here $\mathrm{Inf}_j^p(\mat{x})=\pnn{\LL_{\bsigma,j}(\mat{x})}^p$ denotes the $p$-influence.
\end{cor}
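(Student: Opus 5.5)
The Talagrand inequality of Corollary~\ref{cor:talagrand_qudit} will follow by feeding the hypotheses (H0)--(H6) into \cite[Theorem 4.3]{rouze2024quantumkkl}. All the structural hypotheses have already been checked in the discussion above for $P_t=\Phi_t^{\otimes n}$ on $\mathcal M=M_d^{\otimes n}$ with $\varphi=\Tr(\bsigma^{\otimes n}\,\cdot)$, using the coordinate maps $d_j=\LL_{\bsigma,j}$. So the plan is to confirm each hypothesis with an explicit constant, record how the constant depends on $n$, and then quote the abstract theorem.

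\textbf{Step 1: the hypercontractive input (H4).} Apply \cref{thm:qudit_hc} with $q=2$. This gives
\[
\|P_t(\mat{A})\|_{\bsigma^{\otimes n},2}\le\|\mat{A}\|_{\bsigma^{\otimes n},p}
\]
whenever $t\ge \frac{3\ln2}{8\alpha_2(\LL_\bsigma)}\ln\frac{1}{p-1}$, or equivalently whenever $p\ge 1+\romanE^{-2\alpha_{\mathrm{H4}}t}$ with $\alpha_{\mathrm{H4}}=\frac{4}{3\ln2}\alpha_2(\LL_\bsigma)$. Since the weighted norms coincide with $\|i_p(\cdot)\|_p$ under the symmetric (Kosaki) embedding, this is exactly (H4). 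The constant is independent of $n$ and $d$.

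\textbf{Step 2: the remaining hypotheses.}
\begin{itemize}
\item (H0) holds trivially in finite dimensions.
\item (H1) holds with $K=1/2$ by \cite{junge2015noncommutative}.
\item (H3) follows from (H4) by the usual linearization at $p=2$, or directly: the spectral gap of $\LL_\bsigma^{(n)}$ equals $1$ by tensorization of the gap.
\item (H5) holds with $\mu=0$ because $d_j$ commutes with $P_t$.
\item (H6) holds with $\nu=1$ because the $\mathbb E_j$ are commuting orthogonal projections. Indeed, $\sum_{i\in I}\|(\id-\mathbb E_i)x\|_2^2\ge\|x-\prod_{i\in I}\mathbb E_ix\|_2^2$, which follows by expanding in the joint eigenbasis of the $\mathbb E_i$.
\item (H2) needs the identity $\langle x,\LL x\rangle=\sum_j\|d_jx\|_2^2$, which holds since each $\LL_{\bsigma,j}$ is a KMS-orthogonal projection. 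It also needs the bound $\|d_jx\|\le M\|\Gamma(x)\|^{1/2}$, where $\Gamma$ is the carr\'e du champ of $\LL_\bsigma^{(n)}$.
\end{itemize}

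\textbf{Step 3: the main obstacle, (H2).} Write $\Gamma(x)=\sum_j\Gamma_j(x)\ge\Gamma_j(x)$, with $\Gamma_j$ the local carr\'e du champ. It then suffices to prove the one-site estimate $\|x-E_\bsigma x\|\le M\|\Gamma_\bsigma(x)\|^{1/2}$, where
\[
\Gamma_\bsigma(x)=\tfrac12\bigl(E_\bsigma(x^*x)-x^*E_\bsigma x-(E_\bsigma x)^*x+\dots\bigr).
\]
Expanding, $2\Gamma_\bsigma(x)=(x-E_\bsigma x)^*(x-E_\bsigma x)+\bigl(E_\bsigma(x^*x)-|E_\bsigma x|^2\bigr)\mathbf 1\succeq (x-E_\bsigma x)^*(x-E_\bsigma x)$. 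This gives $M=\sqrt2$ in the one-site case. The estimate extends to the $j$-th tensor factor by applying the same identity with the conditional expectation $\mathbb E_j$, using the Kadison--Schwarz inequality $\mathbb E_j(x^*x)\ge \mathbb E_j(x)^*\mathbb E_j(x)$. This step is the one I would check most carefully. Once it holds, \cite[Theorem 4.3]{rouze2024quantumkkl} applies verbatim and yields the stated inequality. The resulting $C$ depends only on $K,M,\lambda,\nu,\alpha_{\mathrm{H4}}$, and hence only on $\bsigma$, not on $n$.
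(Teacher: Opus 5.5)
Your proposal is correct and follows the same route as the paper. You check the structural hypotheses directly for $P_t=\Phi_t^{\otimes n}$, obtain \textbf{(H4)} from \cref{thm:qudit_hc} with $q=2$ and $\alpha_{\mathrm{H4}}=\frac{4}{3\ln2}\alpha_2(\LL_\bsigma)$, and quote \cite[Theorem 4.3]{rouze2024quantumkkl}. Your verification of the gradient bound in \textbf{(H2)}, via $2\Gamma_j(x)\succeq (d_jx)^*(d_jx)$ from Kadison--Schwarz and $\Gamma\succeq\Gamma_j$ (giving $M=\sqrt2$), is correct and fills in a step the paper leaves implicit.
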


\begin{proof}
The structural assumptions are verified as above, and \cref{thm:qudit_hc}
provides the hypercontractive hypothesis \textbf{(H4)}. The result follows from
the abstract Talagrand inequality~\cite[Theorem 4.3]{rouze2024quantumkkl}.
\end{proof}

Finally, once the hypercontractive estimate has been established, the structural
assumptions recalled above lead to a junta approximation theorem. We specialize
\cite[Theorem 4.9]{rouze2024quantumkkl} to $\nu=1$ and $\mu=0$. In the
convention below, \(\mat{E}_I\) averages out the coordinates in \(I\), so the effective
junta is supported on the complement \([n]\setminus I\).

\begin{cor}[Junta theorem for qudit depolarizing channels]\label{cor:junta_qudit}
Let $(P_t)_{t\ge 0}=\Phi_t^{\otimes n}$ be the product qudit depolarizing semigroup. Then there exists a constant $C>0$ such that, for every self-adjoint operator $\mat{x}$ with $\|\mat{x}\|\le 1$ and every $0<\varepsilon\le 2$, one can find a set $I\subseteq [n]$ satisfying
\[
    \|\mat{x}-\mat{E}_I(\mat{x})\|_{2,\bsigma^{\otimes n}}\le \varepsilon
\]
and
\[
    |[n]\setminus I|
    \le
    \begin{cases}
        \mathrm{Inf}^{1}(\mat{x})^2\exp\!\left(C\dfrac{\mathrm{Inf}^{2}(\mat{x})}{\varepsilon^2}\log\!\left(\dfrac{2\mathrm{Inf}^{2}(\mat{x})}{\varepsilon}\right)\right), & \mathrm{Inf}^{2}(\mat{x})\ge 1,\\[1.2ex]
        \dfrac{\mathrm{Inf}^{1}(\mat{x})^2}{\mathrm{Inf}^{2}(\mat{x})}\exp\!\left(C\dfrac{\mathrm{Inf}^{2}(\mat{x})}{\varepsilon^2}\log\!\left(\dfrac{2\sqrt{\mathrm{Inf}^{2}(\mat{x})}}{\varepsilon}\right)\right), & \mathrm{Inf}^{2}(\mat{x})<1.
    \end{cases}
\]
Here \(\mat{E}_I\) is the \(L_2(\bsigma^{\otimes n})\)-orthogonal projection onto
\(\bigcap_{i\in I}\ker \LL_{\bsigma,i}\), equivalently the conditional expectation
that averages out the coordinates in \(I\). Moreover,
\[
    \mathrm{Inf}^{p}(\mat{x}):=\sum_{j\in [n]}\mathrm{Inf}_j^p(\mat{x}),
    \qquad
    \mathrm{Inf}_j^p(\mat{x})=\|\LL_{\bsigma,j}(\mat{x})\|_{\bsigma^{\otimes n},p}^p
\]
are the total and coordinate \(p\)-influences, respectively.
\end{cor}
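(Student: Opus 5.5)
This is a direct specialization of \cite[Theorem 4.9]{rouze2024quantumkkl}, so the plan is to verify its hypotheses \textbf{(H0)}--\textbf{(H6)} for $P_t=\Phi_t^{\otimes n}$ with explicit constants, substitute $\nu=1$ and $\mu=0$, and then read off the stated junta bound, as in \cref{cor:kkl_qudit,cor:talagrand_qudit}.

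\textbf{Structural hypotheses.} We work on $\mathcal M=M_d^{\otimes n}$ with $\varphi=\Tr(\bsigma^{\otimes n}\,\cdot)$ and core $\mathcal A=\mathcal M$, so \textbf{(H0)} is trivial.

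For \textbf{(H2)} we take $d_j=\LL_{\bsigma,j}=\id-\mathbb E_j$. Each $\mathbb E_j$ is a KMS-orthogonal projection, so $\LL_{\bsigma,j}$ is too, and
\[
\langle \mat{x},\LL^{(n)}_\bsigma \mat{x}\rangle_{\bsigma^{\otimes n}}=\sum_j \|\LL_{\bsigma,j}\mat{x}\|_{\bsigma^{\otimes n},2}^2 .
\]
The operator-norm bound $\|d_j \mat{x}\|\le M\|\Gamma(\mat{x})\|^{1/2}$ should follow from the explicit carr\'e du champ of the depolarizing generator. In finite dimension some constant $M$ exists, and I would cite \cite{junge2015noncommutative,rouze2024quantumkkl} for a uniform one.

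\textbf{(H1)} holds with $K=1/2$ by \cite{junge2015noncommutative}. \textbf{(H5)} holds with $\mu=0$ because $d_j$ commutes with $P_t$.

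For \textbf{(H6)}, the $\mathbb E_j$ commute, so $\mat{E}_I=\prod_{j\in I}\mathbb E_j$. Expanding $\id-\mat{E}_I$ as a telescoping sum of orthogonal pieces gives
\[
\|\mat{x}-\mat{E}_I\mat{x}\|^2\le\sum_{i\in I}\|\LL_{\bsigma,i}\mat{x}\|^2 ,
\]
so $\nu=1$. \textbf{(H3)} is the case $I=[n]$ of this, or alternatively follows from \textbf{(H4)}.

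\textbf{The analytic input (H4).} This is the one nontrivial step. I would apply \cref{thm:qudit_hc} with $q=2$ and $p=p(t)$. Solving $t=t_0$, i.e. $p(t)-1=\romanE^{-4\beta\alpha_2(\LL_\bsigma)t}$ with $\beta=2/(3\ln2)$, gives $\alpha_{\mathrm{H4}}=2\beta\alpha_2(\LL_\bsigma)$, which is independent of $n$. The main obstacle is matching conventions. The Kosaki norm $\|\cdot\|_{\bsigma,p}$ must coincide with $\|i_p(\cdot)\|_p$ for the symmetric embedding, and the paper's $p$-influence $\mathrm{Inf}^p_j$ must match the influence used in \cite{rouze2024quantumkkl}. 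I would check both directly from the definitions; for KMS embeddings they agree. With all hypotheses in place, \cite[Theorem 4.9]{rouze2024quantumkkl} with $\nu=1$ and $\mu=0$ yields the two-regime bound on $|[n]\setminus I|$ for $\mat{E}_I$ averaging the coordinates in $I$, with $C$ depending only on $\alpha_2(\LL_\bsigma)$, $K$ and $M$.
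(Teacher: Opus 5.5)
Your proposal is correct and follows essentially the same route as the paper: verify the structural hypotheses of \cite[Theorem 4.9]{rouze2024quantumkkl} for $P_t=\Phi_t^{\otimes n}$ (with $\nu=1$, $\mu=0$), obtain \textbf{(H4)} from \cref{thm:qudit_hc} with $\alpha_{\mathrm{H4}}=\frac{4}{3\ln 2}\alpha_2(\LL_\bsigma)$, and read off the junta bound. Your telescoping argument for \textbf{(H6)} fills in a detail the paper leaves implicit, and the uniform $M$ in \textbf{(H2)} that you leave to citation follows directly: Kadison--Schwarz gives $\Gamma(\mat{x})\ge\Gamma_j(\mat{x})\ge\frac12\abs{\LL_{\bsigma,j}(\mat{x})}^2$, hence $M=\sqrt2$ independently of $n$ under the usual normalization of $\Gamma$.
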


\begin{proof}
The junta theorem in~\cite[Theorem 4.9]{rouze2024quantumkkl} applies to a
KMS-symmetric QMS satisfying \textbf{(H0)}, \textbf{(H2)}, and
\textbf{(H4)}--\textbf{(H6)}. For the product qudit depolarizing semigroup, the
structural hypotheses were verified above, and \cref{thm:qudit_hc} provides the
remaining hypercontractive input \textbf{(H4)}. Therefore,
\cite[Theorem 4.9]{rouze2024quantumkkl} yields the displayed junta bound.
\end{proof}

\section*{Acknowledgements}
YD, FO, PY, and HZ were supported by the National Natural Science Foundation of China (Grant Nos. 62332009 and 12347104), the Quantum Science and Technology---National Science and Technology Major Project (Grant No. 2021ZD0302901), the NSFC/RGC Joint Research Scheme (Grant No. 12461160276), the Natural Science Foundation of Jiangsu Province (No. BK20243060), the Fundamental and Interdisciplinary Disciplines Breakthrough Plan of the Ministry of Education of China (No. JYB2025XDXM118), the ``111 Center'' (No. B26023), and the Fundamental Research Funds for the Central Universities (Grant No. 2026300376). LG is partially supported by the National Natural Science Foundation of China (grant No.~12401163) and the Department of Science and Technology of Hubei Province (Project No. 2025EHA041, Project No. 2025AFA044).

\bibliographystyle{amsplain}
\bibliography{references}

\appendix
\section{\texorpdfstring{Proof of \cref{lem:hypercontractivity-only-for-PSD}}{Proof
of Lemma hypercontractivity-only-for-PSD}}
\label{app:A}
\begin{proof}
  Since $\Xi$ is completely positive, it is defined by a set of operators
  $\set{\mat{A}_k}_{k=1}^{M}$ such that
  \[
    \Xi(\mat{X}) = \sum_{k=1}^{M}\mat{A}_{k}\mat{X}\mat{A}_{k}^{\dagger}.
  \]
  Let $\mat{X}\in\Matrix_{d^n}$ be any operator satisfying $\pn{\mat{X}}= 1$. Let $\mat{Y}\in\Matrix
  _{d^n}$ be the normalized tangent functional of $\Xi(\mat{X})$ with respect to $\qn{\cdot}$.
  That is, let $q^{*}$ be the \Holder\ conjugate of $q$ which satisfies $\frac{1}{q^{*}}
  + \frac{1}{q}= 1$. We choose $\mat{Y}$ such that
  \begin{equation*}
    \normsub{\mat{Y}}{\bsigma, q^*}=1 \quad\text{and}\quad\qn{\Xi(\mat{X})}= \abs{\spro{\mat{Y}}{\Xi(\mat{X})}}
    .
  \end{equation*}
  Apply singular value decomposition to $\gtt{p}{\mat{X}}$ and $\gtt{q^*}{\mat{Y}}$ as
  \begin{equation*}
    \gtt{p}{\mat{X}}= \sum_{i}s_{i}\ketbratwo{u_i}{v_i}\quad\text{and}\quad \gtt{q^*}{\mat{Y}}
    = \sum_{i}t_{i}\ketbratwo{w_i}{x_i}.
  \end{equation*}
  We have that
  \begin{equation*}
    \mat{X} = \sum_{i}s_{i}\bsigma^{-\frac{1}{2p}}\ketbratwo{u_i}{v_i}\bsigma^{-\frac{1}{2p}}
    \quad\text{and}\quad \mat{Y} = \sum_{i}t_{i}\bsigma^{-\frac{1}{2q^{*}}}\ketbratwo{w_i}
    {x_i}\bsigma^{-\frac{1}{2q^{*}}}.
  \end{equation*}
  Define positive semi-definite operators $\mat{X}_{L}, \mat{X}_{R}, \mat{Y}_{L}, \mat{Y}_{R}$ as
  \begin{equation*}
    \mat{X}_{L} = \sum_{i}s_{i}\bsigma^{-\frac{1}{2p}}\ketbra{u_i}\bsigma^{-\frac{1}{2p}}
    , \quad \mat{Y}_{L} = \sum_{i}t_{i}\bsigma^{-\frac{1}{2q^{*}}}\ketbra{w_i}\bsigma^{-\frac{1}{2q^{*}}}
    ,
  \end{equation*}
  \begin{equation*}
    \mat{X}_{R} = \sum_{i}s_{i}\bsigma^{-\frac{1}{2p}}\ketbra{v_i}\bsigma^{-\frac{1}{2p}}
    , \quad \mat{Y}_{R} = \sum_{i}t_{i}\bsigma^{-\frac{1}{2q^{*}}}\ketbra{x_i}\bsigma^{-\frac{1}{2q^{*}}}
    .
  \end{equation*}
  It is immediate from the definition that
  \begin{equation*}
    \pn{\mat{X}_L}= \pn{\mat{X}_R}= 1 \quad\text{and}\quad \normsub{\mat{Y}_L}{\bsigma, q^*}= \normsub
    {\mat{Y}_R}{\bsigma, q^*}= 1.
  \end{equation*}
  Now
  \begin{align*}
    \qn{\Xi(\mat{X})} & =\abs{\spro{\mat{Y}}{\Xi(\mat{X})}}                                                                                                                                                                                                                                 \\
                & = \abs{\sum_{k}\Tr\Br{\bsigma^{\frac{1}{2}}\mat{Y}^\dagger\bsigma^{\frac{1}{2}}\mat{A}_k\mat{X}\mat{A}_k^\dagger}}                                                                                                                                                                \\
                & = \abs{\sum_{ijk}s_it_j\Tr\Br{\bsigma^{\frac{1}{2}-\frac{1}{2q^{*}}}\ketbratwo{x_j}{w_j}\bsigma^{\frac{1}{2}-\frac{1}{2q^{*}}}\mat{A}_k\bsigma^{-\frac{1}{2p}}\ketbratwo{u_i}{v_i}\bsigma^{-\frac{1}{2p}}\mat{A}_k^\dagger}}                                            \\
                & = \abs{\sum_{ijk}s_it_j\bra{w_j}\bsigma^{\frac{1}{2}-\frac{1}{2q^{*}}}\mat{A}_k\bsigma^{-\frac{1}{2p}}\ket{u_i}\cdot\bra{v_i}\bsigma^{-\frac{1}{2p}}\mat{A}_k^\dagger\bsigma^{\frac{1}{2}-\frac{1}{2q^{*}}}\ket{x_j}}                                                   \\
                & \le \sqrt{\sum_{ijk}s_{i}t_{j}\abs{\bra{w_j}\bsigma^{\frac{1}{2}-\frac{1}{2q^*}}\mat{A}_k\bsigma^{-\frac{1}{2p}}\ket{u_i}}^{2}}\cdot\sqrt{\sum_{ijk}s_{i}t_{j}\abs{\bra{v_i}\bsigma^{-\frac{1}{2p}}\mat{A}_k^\dagger\bsigma^{\frac{1}{2}-\frac{1}{2q^*}}\ket{x_j}}^{2}} \\
                & = \sqrt{\spro{\mat{Y}_L}{\Xi(\mat{X}_L)}}\cdot \sqrt{\spro{\mat{Y}_R}{\Xi(\mat{X}_R)}}                                                                                                                                                                                          \\
                & \le \sqrt{\qn{\Xi(\mat{X}_L)}}\cdot \sqrt{\qn{\Xi(\mat{X}_R)}}.
  \end{align*}
  Here, the first inequality is Cauchy--Schwarz, and the second follows from
  \Holder's inequality. The above inequality implies that either $\qn{\Xi(\mat{X})}\le \qn{\Xi(\mat{X}_L)}$
  or $\qn{\Xi(\mat{X})}\le \qn{\Xi(\mat{X}_R)}$. Notice that $\mat{X}_{L}$ and $\mat{X}_{R}$ are both
  positive semi-definite, so this concludes the proof.
\end{proof}

\section{Comparison of Spectral-Gap-Based Tensorization Constants}\label{app:comparison}

\begin{proof}
Writing
$\mu=\lambda(\bsigma)$ and
\[
    a_\bsigma:=\alpha_2(\LL_\bsigma)
    =
    \frac{1-2\mu}{\ln(\mu^{-1}-1)},
\]
the ratio between the coefficient in \cref{cor:tensorized_lsi_general} and that
in \cref{thm:tpk14_tensor} is
\[
    R
    =
    \frac{\frac{2}{3\ln2}a_\bsigma}
    {\frac{1}{2}\frac{1}{\ln(\mu^{-1})+11+4\ln d}}
    =
    \frac{4}{3\ln2}
    \frac{1-2\mu}{\ln(\mu^{-1}-1)}
    \bigl(\ln(\mu^{-1})+11+4\ln d\bigr).
\]
Set $y=\mu^{-1}-1$. Since $0<\mu\le 1/d\le 1/2$, we have
$y\ge d-1\ge1$, and
\[
    R
    =
    \frac{4}{3\ln2}
    \frac{y-1}{(y+1)\ln y}
    \bigl(\ln(y+1)+11+4\ln d\bigr),
\]
with the value at $y=1$ understood by continuity. Since $\ln(y+1)>\ln y$ for
$y>1$, it is enough to lower bound
\[
    g(y)=\frac{y-1}{y+1}\left(1+\frac{C}{\ln y}\right),
    \qquad C=11+4\ln d.
\]
For $1<y\le25$, using $\ln y\le y-1$ gives
\[
    g(y)>
    \frac{C}{y+1}
    \ge
    \frac{11+4\ln2}{26}
    >
    \frac{3\ln2}{4}.
\]
For $y>25$,
\[
    g(y)>
    \frac{y-1}{y+1}>
    \frac{24}{26}>
    \frac{3\ln2}{4}.
\]
The limiting case $y=1$ also satisfies the same bound by continuity. Hence
$R>1$, so the spectral-gap coefficient in
\cref{cor:tensorized_lsi_general} improves the coefficient in
\cref{thm:tpk14_tensor}.
\end{proof}

\section{Proof of \texorpdfstring{\cref{lemma:easy}}{Lemma 4.3}}\label{app:easy}

\begin{proof}
  \cref{lemma:easy} is equivalent to
  \[
    2b\cdot \left(\frac{(a+b)^{q}+(a-b)^{q}}{2}\right)^{\frac{q-2}{q}}-\frac{1}{q-1}
    \cdot((a+b)^{q-1}-(a-b)^{q-1})\geq 0,
  \]
  where $a=\frac{x+y}{2}>0,b=\frac{x-y}{2} > 0$.
  At $b=0$, the desired inequality holds. Hence it suffices to prove that the derivative with respect to $b$ is nonnegative for $b>0$, namely
  \begin{dmath*}
    \frac{q-2}{q}\cdot 2b\cdot \left(\frac{(a+b)^{q}+(a-b)^{q}}{2}\right)^{-\frac{2}{q}}\cdot
    \frac{q\left((a+b)^{q-1}-(a-b)^{q-1}\right)}{2}+2\left(\frac{(a+b)^{q}+(a-b)^{q}}{2}\right)^{\frac{q-2}{q}}\geq
    (a+b)^{q-2}+(a-b)^{q-2}.
  \end{dmath*}
  Note
  \[
    \frac{q-2}{q}\cdot 2b\cdot \left(\frac{(a+b)^{q}+(a-b)^{q}}{2}\right)^{-\frac{2}{q}}
    \cdot \frac{q\left((a+b)^{q-1}-(a-b)^{q-1}\right)}{2}\geq 0,
  \]
  and from the Generalized Mean Inequality, we have
  \[
    \left(\frac{(a+b)^{q}+(a-b)^{q}}{2}\right)^{\frac{q-2}{q}}\geq \frac{(a+b)^{q-2}+(a-b)^{q-2}}{2}
    ,
  \]
    which proves the lemma.
\end{proof}

\section{Proof of \texorpdfstring{\cref{lem:easy_2}}{Lemma 4.4}}\label{app:easy_2}

\begin{proof}
  Define $r = \frac{2}{q}$. Since $q \ge 2$, it follows that $r \in (0, 1]$.
  Let $x = 1 - \mu$ and $y = \mu$. Given $\mu \le 1/2$, we have $x \ge y > 0$ and the
  normalization condition $x + y = 1$.
  Now,
  \[
    \rho_{1}^{2} = (xy)^{1-r}\frac{x^{r} - y^{r}}{x^{2-r}- y^{2-r}}, \quad
    \rho_{2}^{2} = \frac{r}{2-r}(4xy)^{\frac{1-r}{2}}.
  \]

  Recall the definition of the Stolarsky mean for two distinct positive numbers $x
  , y$ and parameter $\alpha \neq 0, 1$ (see
  \cite{stolarsky1975generalizations,qi2002logarithmic}):
  \[
    S_{\alpha}(x, y) = \left( \frac{x^{\alpha} - y^{\alpha}}{\alpha(x-y)}\right)^{\frac{1}{\alpha-1}}
  \]
  and
  \[
    S_{0}(x,y)=\frac{x-y}{\ln x-\ln y}
  \]
  be the logarithmic mean.
  From this definition, we can express the difference of powers as:
  \[
    x^{\alpha} - y^{\alpha} = \alpha(x-y) S_{\alpha}(x,y)^{\alpha-1}.
  \]
  Applying this identity to both the numerator (with $\alpha = r$) and the denominator
  (with $\alpha = 2-r$) of $\rho_{1}^{2}$,
  we have
  \[
    \rho_{1}^{2} = (xy)^{1-r}\frac{r(x-y) S_{r}^{r-1}}{(2-r)(x-y) S_{2-r}^{1-r}}=
    \frac{r}{2-r}(xy)^{1-r}(S_{r} S_{2-r})^{r-1}.
  \]

  We wish to prove that $\rho_{1}^{2} \le \rho_{2}^{2}$, which now takes the
  form:
  \[
    \frac{r}{2-r}(xy)^{1-r}(S_{r} S_{2-r})^{r-1}\le \frac{r}{2-r}(4xy)^{\frac{1-r}{2}}
    \iff G^{2(1-r)}(S_{r} S_{2-r})^{r-1}\le (2G)^{1-r}.
  \] where $G = \sqrt{xy}$.

  For $r = 1$ (i.e., $q = 2$), the exponent $1-r = 0$, and the inequality
  trivially holds as an equality. For $r \in (0, 1)$, $1-r > 0$. We can raise both
  sides to the power of $\frac{1}{1-r}$ and rearranging gives:
  \[
    S_{r} S_{2-r}\ge \frac{G}{2}.
  \]
  Because $x+y=1$, we know $A = (x+y)/2 = 1/2$. Thus, we can write $\frac{G}{2}$ exactly as
  $A \cdot G$. The required inequality is therefore equivalent to:
  \[
    S_{r}(x, y) S_{2-r}(x, y) \ge A(x, y) G(x, y).
  \]

By the log-concavity properties of the Stolarsky mean $S_{\alpha}$ on
$\alpha\in[0,2]$ (see \cite{qi2002logarithmic}), for $\alpha_1+\alpha_2=2$ we have
  \[
    S_{r} S_{2-r}\geq S_{0}S_{2} = LA \ge GA,
  \]
  where $S_{0}=L$ is the logarithmic mean and $S_{2}=A$. Thus we finish the
  proof.
\end{proof}

\section{Norm Compression Counter-Examples}

\begin{example}\label{example:counter2.5}
  Let $p=2.5$. Let the matrix $\mat{M}$ be
  \begin{equation*}
      \mat{M} = \begin{bmatrix}
        0.22 & 0 & -0.18 & 0.14 \\
        0 & 0.21 & -0.17 & -0.14 \\
        -0.18 & -0.17 & 0.29 & 0 \\
        0.14 & -0.14 & 0 & 0.27
      \end{bmatrix}.
  \end{equation*}
  Then we have
  \begin{equation*}
        \mat{m} = \begin{bmatrix}
        0.22 & 0 & 0.18 & 0.14 \\
        0 & 0.21 & 0.17 & 0.14 \\
        0.18 & 0.17 & 0.29 & 0 \\
        0.14 & 0.14 & 0 & 0.27
      \end{bmatrix}.
  \end{equation*}
  It can be checked that $\mat{M}$ is positive semi-definite.
  Moreover,
  $\normsub{\mat{M}}{2.5} = 0.625943 \ge \normsub{\mat{m}}{2.5} = 0.622851$.
\end{example}

\begin{example}\label{example:counter4/3}
  Let $p=4/3$. Let the matrix $\mat{M}$ be
  \begin{equation*}
      \mat{M} = \begin{bmatrix}
        0.2 & -0.16 & 0.15 & 0 \\
        -0.16 & 0.29 & 0.01 & -0.19 \\
        0.15 & 0.01 & 0.27 & -0.18 \\
        0 & -0.19 & -0.18 & 0.24
      \end{bmatrix}.
  \end{equation*}
  Then we have
  \begin{equation*}
      \mat{m} = \begin{bmatrix}
        0.2 & 0.16 & 0.15 & 0 \\
        0.16 & 0.29 & 0.01 & 0.19 \\
        0.15 & 0.01 & 0.27 & 0.18 \\
        0 & 0.19 & 0.18 & 0.24
      \end{bmatrix}.
  \end{equation*}
  It can be checked that $\mat{M}$ is positive semi-definite.
  Moreover,
  $\normsub{\mat{M}}{4/3} = 0.83171 \le \normsub{\mat{m}}{4/3} = 0.884911$.
\end{example}

\begin{example}\label{example:counter99}
  Let $p=3$. Let the matrix $\mat{M}$ be
  \begin{equation*}
      \mat{M} = \begin{bmatrix}
        0.05 & -0.07 & 0 & 0.03 & 0.04 & -0.02 & 0.07 & -0.06 & 0.01 \\
        -0.07 & 0.19 & -0.01 & -0.02 & -0.01 & -0.02 & -0.07 & 0.19 & 0.02 \\
        0 & -0.01 & 0.02 & 0 & 0 & 0.01 & 0 & -0.02 & -0.01 \\
        0.03 & -0.02 & 0 & 0.06 & 0.07 & -0.04 & 0.09 & 0 & 0.01 \\
        0.04 & -0.01 & 0 & 0.07 & 0.19 & -0.09 & 0.16 & 0.06 & 0.07 \\
        -0.02 & -0.02 & 0.01 & -0.04 & -0.09 & 0.07 & -0.08 & -0.06 & -0.04 \\
        0.07 & -0.07 & 0 & 0.09 & 0.16 & -0.08 & 0.21 & -0.02 & 0.04 \\
        -0.06 & 0.19 & -0.02 & 0 & 0.06 & -0.06 & -0.02 & 0.25 & 0.07 \\
        0.01 & 0.02 & -0.01 & 0.01 & 0.07 & -0.04 & 0.04 & 0.07 & 0.07
      \end{bmatrix}.
  \end{equation*}
  Treating it as a $3\times 3$ block matrix, we have
  \begin{equation*}
      \mat{m} = \begin{bmatrix}
        0.219551 & 0.0578625 & 0.220537 \\
        0.0578625 & 0.269377 & 0.227928 \\
        0.220537 & 0.227928 & 0.315795
      \end{bmatrix}.
  \end{equation*}
  It can be checked that $\mat{M}$ is positive semi-definite,
  and that $\mat{m}$ is not positive semi-definite.
\end{example}
\begin{example}\label{example:counter66}
  Let $p=4$. Let the matrix $\mat{M}$ be
  \begin{equation*}
      \mat{M} = \begin{bmatrix}
        0.01 & 0.03 & 0 & 0 & -0.02 & 0 \\
        0.03 & 0.42 & 0 & 0 & -0.3 & -0.07 \\
        0 & 0 & 0.06 & -0.08 & 0.02 & -0.09 \\
        0 & 0 & -0.08 & 0.14 & -0.04 & 0.14 \\
        -0.02 & -0.3 & 0.02 & -0.04 & 0.25 & 0 \\
        0 & -0.07 & -0.09 & 0.14 & 0 & 0.18
      \end{bmatrix}.
  \end{equation*}
  Treating it as a $3\times 3$ block matrix, we have
  \begin{equation*}
      \mat{m} = \begin{bmatrix}
        0.422184 & 0 & 0.308674 \\
        0 & 0.189443 & 0.172274 \\
        0.308674 & 0.172274 & 0.265328
      \end{bmatrix}.
  \end{equation*}
  It can be checked that $\mat{M}$ is positive semi-definite,
  and that $\mat{m}$ is not positive semi-definite.
\end{example}

\end{document}